\definecolor{myLightGray}{gray}{0.97} 
\DeclareRobustCommand{\mybox}[2][myLightGray]{%
\begin{tcolorbox}[
        left=0.5pt,
        right=0.5pt,
        top=0.5pt,
        bottom=0.5pt,
        colback=#1,
        colframe=#1,
        width=\dimexpr\textwidth\relax, 
        boxsep=3pt,
        arc=2pt,outer arc=2pt
        ]
        #2    
\end{tcolorbox}
}
\renewenvironment*{displayquote}
  {\begingroup\setlength{\leftmargini}{0.2cm}\csq@getcargs{\csq@bdquote{}{}}}
  {\csq@edquote\endgroup}
\definecolor{cornellred}{rgb}{0.7, 0.11, 0.11}
\definecolor{maroon}{rgb}{0.52, 0, 0}
\definecolor{dgreen}{rgb}{0.0, 0.5, 0.0}
\definecolor{ballblue}{rgb}{0.13, 0.67, 0.8}
\definecolor{royalblue(web)}{rgb}{0.25, 0.41, 0.88}
\definecolor{bleudefrance}{rgb}{0.19, 0.55, 0.91}
\definecolor{royalazure}{rgb}{0.0, 0.22, 0.66}
\newtheorem{defn}{Definition}
\newcommand{\RomanNumeralCaps}[1]{\MakeUppercase{\romannumeral #1}}
\tikzset{
  > = stealth,
  midarrow/.style={
    line width=1.2pt,        
    decoration={
      markings,
      mark=at position 0.5 with {\arrow[scale=1.5]{stealth}} 
    },
    postaction={decorate}
  }
}
 \def\bibsep{\smallskipamount}%
\newenvironment{proofof}[1]{%
  \Trivlist
  \item[\hskip\labelsep {\it #1.}]\ignorespaces
}{\hfill \qed
\endTrivlist
\addvspace{0pt}
}
\newcommand{\probP}{\text{I\kern-0.15em P}}
\newcommand{\EX}{\mathbb{E}}
\newcommand{\sample}{z}
\newcommand{\indic}{\mathds{1}}
\newcommand{\crf}{\theta_f}
\newcommand{\yf}{y_{f}}
\newcommand{\propPar}{x}
\newcommand{\cf}{c_f}
\newcommand{\cD}{\mathcal{D}}
\newcommand{\cV}{\mathcal{V}}
\newcommand{\cP}{\mathcal{P}}
\newcommand\kk{{\cf}}
\newcommand\ZZ{{\mathbb Z}}
\newcommand{\commentcolor}{blue}
\renewcommand{\qed}{$\hfill\square$}
\newcommand{\compratio}{\alpha(f)}
\begin{document}

\ARTICLEAUTHORS{%
\AUTHOR{Farbod Ekbatani}
\AFF{Booth School of Business,
University of Chicago, \EMAIL{fekbatan@chicagobooth.edu}}

\AUTHOR{Rad Niazadeh}
\AFF{Booth School of Business,
University of Chicago, \EMAIL{rad.niazadeh@chicagobooth.edu}}

\AUTHOR{Pranav Nuti}
\AFF{Booth School of Business,
University of Chicago, \EMAIL{pranav.nuti@chicagobooth.edu}}

\AUTHOR{Jan Vondr\'ak}
\AFF{Department of Mathematics,
Stanford University, \EMAIL{jvondrak@stanford.edu}}
}

\RUNAUTHOR{Ekbatani, Niazadeh, Nuti and Vondr\'ak}
\RUNTITLE{Prophet Inequalities with Cancellation Costs}
\TITLE{Prophet Inequalities with Cancellation Costs}

\ABSTRACT{
Most of the literature on online algorithms and sequential decision-making in revenue management focuses on settings with irrevocable decisions, where once a decision is made upon the arrival of a new input, it cannot be canceled later. Motivated by modern revenue management applications---such as cloud spot markets, selling banner ads, or online hotel booking---we introduce and study ``prophet inequalities with cancellations'' under linear cancellation costs (known as the \emph{buyback model} in the literature). In the classic prophet inequality problem, a sequence of independent random variables $X_1, X_2, \ldots$ with known distributions is revealed one by one, and a decision maker must decide when to stop and accept the current variable in order to maximize the expected value of their choice. In our model, after accepting $X_j$, one may later discard $X_j$ and accept another $X_i$ at a cost of $f \times X_j$, where $f\geq 0$ is a given parameter in this model. The goal is to maximize the expected net reward: the value of the final accepted variable minus the total cancellation cost. We aim to design online policies that are competitive against an omniscient optimal offline ``prophet'' benchmark.  

Our first main result is an optimal prophet inequality for \emph{all} parameters $f \ge 0$. We show this result by fully characterizing the worst-case competitive ratio of the optimal online policy against the optimal offline benchmark via the solution to a certain differential equation (for which we provide a constructive solution). Our second main result is to design and analyze a simple and polynomial-time randomized adaptive policy that achieves this optimal competitive ratio. Importantly, our policy is \emph{order-agnostic} (à la \citet{samuel1984comparison}), meaning that it only needs to know the set of distributions and not the order in which the random variables arrive. These results are obtained by a sequence of reductions to a (continuous) ``generalized flow problem,'' which can be viewed as the dual to a factor-revealing LP for characterizing the worst-case competitive ratio in the original problem. We develop several techniques to solve this reduced problem, including a geometric interpretation of its solution via a differential equation and the embedding of our random variables into specific Poisson point processes.  We also leverage these building blocks in novel ways, both for constructing our lower bound instances and in designing (and analyzing) our order-agnostic policy.

}

\maketitle
\newpage
\section{Introduction}
\label{sec:intro}
Prophet inequalities have emerged as a powerful framework for studying a wide range of problems in revenue management that require robust, real-time decision-making under uncertainty. Introduced in the context of optimal stopping~\citep{krengel1978semiamarts,samuel1984comparison}, these inequalities compare online policies (which have distributional knowledge of future arrivals) to an optimal offline benchmark, known as the ``prophet'', providing provable performance guarantees. This perspective unifies diverse applications, including sequential pricing~\citep{hill1982comparisons,gallego1994optimal,correa2017posted}, online resource allocation~\citep{alaei2012online, ma2021dynamic,feng2022near}, and combinatorial auctions~\citep{chawla2010multi,feldman2013simultaneous,dutting2020prophet}, among others. Moreover, by distilling complex, stochastic environments into tractable models, prophet inequalities serve as a foundational tool for theoretical analysis and for offering practical solutions in revenue management---which are often simple, interpretable, and approximately optimal.

A common feature of nearly all work on prophet inequalities, motivated by classical revenue management settings such as pricing a product for sale or assortment planning in retail, is that online policies make only \emph{irrevocable} decisions; once a decision is made (for example, accepting an arriving user’s service request), it is set in stone and cannot be changed later. Relaxing this assumption grants the online policy more flexibility: it can revisit past decisions and cancel some, thereby avoiding early commitments to low-valued user requests. This paradigm is particularly relevant in modern applications such as cloud spot markets, where an assigned computing resource may be reclaimed and reassigned to improve supply efficiency, or in revenue management settings that permit overbooking (e.g., online hotel reservations or selling banner ads), where multiple assignments can initially be made and then selectively revoked later. Similarly, in volunteer matching platforms, the assignments of volunteers to tasks often change when better-suited volunteers arrive to improve match qualities.

Although cancellations can help improve the performance of online policies, there is a trade-off due to the negative externalities---implicit or explicit---that they impose on users whose initial assignments are canceled. Therefore, regardless of the application, it is paramount to control the amount of cancellations an online policy makes. One approach to capture controlled cancellations, known in the literature as the ``buyback model''~\citep{babaioff2008selling,ashwinkumar2009randomized,ekbatani2023online,ekbatani2023onlineEC}, is to maximize a mixed-sign objective function that incorporates \emph{costly cancellations}: the total reward of an online policy from serving (uncanceled) requests is penalized by the total reward of canceled requests, rescaled by a parameter $f \geq 0$. This parameter, often referred to as the \emph{buyback parameter}, can be viewed as the cost of canceling each unit of reward, reflecting how significant cancellations penalize the online policy.  In particular, this model provides a smooth interpolation between the offline scenario, where $f=0$ (i.e., no penalty for cancellations),  and the fully online scenario with irrevocable decisions, where $f=+\infty$ (i.e., no cancellation is allowed).

\smallskip
\noindent\textbf{Problem statement \& model:} Motivated by the above, we initiate the study of the single-item \emph{prophet inequality problem with costly cancellations}. More formally, we consider a setting in which a decision maker is presented with a sequence of random variables $X_1, X_2, \ldots$ one by one (think of them as rewards for fulfilling a request or willingness-to-pay of users for a service), where each $X_i$ is drawn independently from a known distribution $F_i$. Once $X_i$ is revealed in round $i$, the decision maker can accept it or proceed to the next round. In the standard (vanilla) prophet inequality setting, the process ends as soon as the first variable is accepted. We diverge from this setting by allowing the decision maker to accept $X_i$ even after previously accepting $X_j$ for some $j < i$. If a variable $X_j$ was already accepted, it must be revoked (and its reward forfeited) by paying a cancellation cost of $f \cdot X_j$. The goal is to maximize the expected net reward, defined as the value of the final accepted variable minus all cancellation costs.\footnote{While it may appear that cancellations happen in an online manner, one could equivalently view this as an online allocation problem with overbooking, where only the request yielding the highest reward is ultimately allocated, and the decision maker pays to cancel all other overbooked requests at the end of the decision making horizon.}

From a technical perspective, the above is a natural model due to invariance under scaling of rewards. From an applications perspective, the model is also quite natural: not only does it fit certain revenue management settings described above, but the cancellation cost can also be alternatively interpreted as a \emph{buyback cost}, i.e., a ``compensation fee'' in the form of a fixed percentage of the original transfer paid to the user to reclaim the resource---which is prevalent in practice, e.g., in airline booking
~\citep{USDOT2011}.

To establish prophet inequalities in this setting, we compare the expected net reward of an online policy with the prophet benchmark---the offline omniscient policy that observes all realizations in advance and selects the maximum, thereby achieving a reward of \(\EX[\max_i X_i]\). We measure performance via the \emph{competitive ratio}, defined as the worst-case ratio between these two expected rewards. Let \(\alpha(f)\) denote the optimal competitive ratio achievable by any online policy. Because each distribution \(F_i\) is known, one can compute an optimal online policy using polynomial-time dynamic programming.\footnote{The state of this dynamic program is essentially the pair of the current time and the value of the previously accepted variable, so it is polynomial-time solvable by using backward induction, once values are discretized properly.} Hence, \(\alpha(f)\) is the worst-case ratio of the expected net rewards of the optimal online and offline policies. 
In the special case of \(f = +\infty\), our model reduces to the classic prophet inequality problem, which admits a simple threshold-based \(0.5\)-competitive policy~\citep{samuel1984comparison}. 
Importantly, this policy is \emph{order-agnostic}~\citep{kleinberg2012matroid}, which means that it requires only knowledge of the set $\{F_i\}$, not the order in which random variables arrive. As \(f \to 0\), cancellations become free, and a greedy policy can exactly obtain the offline solution, implying \(\lim_{f \to 0}\alpha(f) = 1\).
Consequently, we expect \(\alpha(f)\in [0.5,1]\) for all $f\geq 0$.
We now ask the following research questions:

\smallskip
\begin{displayquote}
\begin{enumerate}[label=(\emph{\roman*}),leftmargin=0.22in]
\item  \emph{Can we characterize the optimal competitive ratio $\alpha(f)$ as a function of the buyback parameter $f$, for all possible choices of $f\geq 0$?}
\item \emph{Can we design and analyze a simple, polynomial-time, and order-agnostic policy, à la \citet{samuel1984comparison}, that achieves the optimal competitive ratio $\alpha(f)$ for all $f\geq 0$?}
\end{enumerate} 
\end{displayquote}

\smallskip
\noindent\textbf{Our main contributions:} In summary, we answer both of the above questions in the affirmative. We not only characterize the extra gain from cancellations in the buyback model, but also show that there is a simple policy that achieves this gain, with all the desired properties described earlier.

\smallskip
\textbf{(i)~Optimal competitive ratio in all parameter regimes.} For the first question, we identify a particular nonlinear differential equation, parameterized by $f$, along with certain boundary conditions (see \Cref{defn:differentialequation}). We then show that it always admits a proper solution $y_f: \bigl[\tfrac{f}{f+1}, 1\bigr] \to [0, 1]$ by constructing one for any $f\geq 0$. Using this solution, we establish our first main result, which is a \emph{full characterization} of the optimal competitive ratio $\compratio$ for \emph{all} buyback parameters $f\geq 0$:

\mybox{
\begin{displayquote}
{\textbf{{$\large{\boldsymbol{[}}$Main Result \RomanNumeralCaps{1}$\large{\boldsymbol{]}}$}}} For any buyback parameter $f \geq 0$, the optimal online policy achieves a competitive ratio $\alpha(f) = \tfrac{1}{2 - y_f(1)}$ against the optimal offline benchmark. Moreover, for any $f \geq 0$, there exists an instance with finitely many two-point random variables for which no online algorithm can attain a competitive ratio greater than $\tfrac{1}{2 - y_f(1)}$.
\end{displayquote}
}
See \Cref{fig:competitive-ratios} for a comparison of this bound with the literature. For small values of $f$, say $f=0.2$ (with $\alpha(0.2)\approx 0.82$) , there is a substantial improvement in the competitive ratio versus $0.5$. We note that even for moderately large values of $f$, say $f=1$ (with $\alpha(1)=\frac{2}{3}$) or $f=2$ (with $\alpha(2)=\frac{3}{5}$), there is still a considerable improvement, showing the value of introducing costly cancellations in the prophet inequality setting. We can also compare our bound with the best bounds in the adversarial setting, where prior stochastic information is not available. In this setting, \citet{babaioff2008selling} established an optimal competitive ratio of $\tfrac{1}{1 + 2f + 2\sqrt{f(1+f)}}$ for deterministic algorithms, and the follow-up work of \citet{ashwinkumar2009randomized} showed an optimal competitive ratio of $\tfrac{1}{W_{-1}\bigl(-\tfrac{1}{e(1+f)}\bigr)}$, where $W_{-1}$ is the non-principal branch of the Lambert $W$ function.\footnote{The Lambert $W$ function is the inverse of $y(x) = x e^x$. The non-principal branch is the solution  $x = W_{-1}(y)$ when $x \le -1$.} We can clearly observe a considerable gap between their competitive ratios versus ours, even 
as $f \to 0$,\footnote{For $f=+\infty$, our bound converge to $0.5$ while adversarial bounds converge to $0$. It can also be shown that our bound has an asymptotic behavior of $1 - \Theta(f \log \tfrac{1}{f})$ for small values of $f$ (\Cref{sec:smallbuybackregime}), while both adversarial bounds are asymptotically $1 - \Theta(\sqrt{f})$, thus converging to 1 slower than our bound.} which can be interpreted as the ``value of information'' about the underlying distributions in our setting.
\input{figures/yf}

We also highlight that for the special case when $f \geq 1$, we derive a closed-form solution for the differential equation, resulting in a competitive ratio of $\compratio = \frac{1+f}{1+2f}$ in this regime. The instance that demonstrates the tightness of this bound for $f \ge 1$ is quite intuitive and serves as a natural generalization of the classic ``bad example'' in prophet inequalities: consider two random variables $X_1 = 1$ and $X_2 = (1+f)\textrm{Ber}(\tfrac{1}{1+f})$. It is straightforward to verify that any online policy achieves a new reward of at most 1 in expectation. In fact, if the policy selects $X_1 = 1$, it is indifferent between ignoring $X_2$ or performing a buyback, obtaining $1$ in expectation either way. If it ignores $X_1$, it still obtains at most $1$, since $\EX[X_2] = 1$. Meanwhile, $\EX[\max\{X_1, X_2\}] = \tfrac{1}{f+1}(f+1) + \tfrac{f}{f+1} = \tfrac{2f + 1}{f + 1}$.

Interestingly, this example is \emph{not} the worst-case instance for $f < 1$. For example, when $\tfrac{1}{3} \le f < 1$, one can construct an instance with $n = 3$ variables for which no online algorithm can attain a competitive ratio better than $\tfrac{(1+f)\bigl(\sqrt{f(2-f)} + 1\bigr)}{(1+f)\sqrt{f(2-f)} + 3f + 1} < \tfrac{1+f}{1+2f}$ (see \Cref{sec:three-variable-example} for the example). This suggests that the problem poses new algebraic challenges for smaller values of $f$, as the worst-case instance appears to require more variables when $f$ decreases. By devising another (more intricate) closed-form solution for our differential equation when $\tfrac{1}{3} \le f < 1$, we prove that this bound is indeed optimal for that regime. However, this investigation serves as evidence that it may be infeasible to obtain closed-form solutions for the entire interval $0 \leq f < 1$. Although we do not provide an explicit solution to the differential equation (and therefore for $\compratio$) for $0 \leq f < \tfrac{1}{3}$, we do provide the means to construct a proper solution and identify its properties. We then apply this solution not only to analyze the competitive ratio of the optimal online policy (\Cref{sec:upper-bound}), but also to construct a tight worst-case instance (\Cref{sec:lower-bound}). Our results imply that, indeed, the worst-case instance involves more random variables, and the optimal online policy performs more buyback operations, as $f$ becomes smaller.


\smallskip
\textbf{(ii) Order-agnostic \& simple optimal-competitive policy.} To address the second question, we begin by examining the structural properties of the (order-aware) optimal online policy under the worst-case instance identified in our first main result. Building on the insights and technical building blocks that helped us prove our first set main result, we arrive at our second main result.

 \mybox{
\begin{displayquote}
\textbf{{$\large{\boldsymbol{[}}$Main Result \RomanNumeralCaps{2}$\large\boldsymbol{]}$}} 
For any buyback parameter $f \ge 0$, there is a polynomial-time randomized adaptive policy (\Cref{alg:order-oblivious}) that (i) achieves the optimal competitive ratio $\compratio = \frac{1}{2 - y_f(1)}$, and (ii) is order-agnostic, relying only on the set of distributions $\{F_i\}$ rather than their arrival order---in contrast to the optimal online policy, requiring the full sequence of distributions. 
\end{displayquote}
}
Our order-agnostic online policy is a simple algorithm with randomized adaptive thresholds. At a high level, the algorithm is greedy in nature, but has a safeguard: it starts with a randomized threshold to make the first acceptance. After each acceptance, it essentially searches for the ``right next quantile'' of the distribution of $\max_i X_i$, which if accepted and swapped with the currently accepted random variable, generates a sufficiently large positive marginal gain. This marginal gain would offset the forfeited reward, the current cancellation cost, and any potential future cancellation costs that might arise as a consequence of this decision. Importantly, this decision to identify the next threshold (or quantile) heavily uses the solution $y_f$ to the differential equation and the last accepted quantile, and implicitly ``mimics'' the behavior of the optimal online policy in the worst-case instance (using randomization in a certain way). We explain these connections in detail later (\Cref{sec:algorithm}). 

Finally, we evaluate the numerical performance of our algorithm on instances beyond the worst-case in \Cref{sec:numerical}, compared to other existing benchmarks for the classic prophet inequality problem, the adversarial version of the problem, and also an alternative threshold-based greedy algorithm that we introduce and analyze in \Cref{sec:thresholdalgorithm}. We observe that in almost all of our simulations, our algorithm either outperforms the other benchmarks or is among the best in terms of empirical performance, while it is the only one that enjoys the above theoretical performance guarantee.

\medskip
\noindent\textbf{Overview of our techniques:}
To show our main results, we propose a rich technical framework that involves multiple steps. We sketch the key ideas here and defer the full details to later sections.

\smallskip
 \emph{Flow-based LP formulation.} At a high level, our results are derived from a linear programming (LP) approach. Although in principle one could formulate a nonlinear program to characterize the worst-case performance ratio of the optimal online policy against the prophet benchmark, analyzing such a program can be intractable due to being exponential in size. Inspired by LP-based analyses for the classic prophet inequality problem and its extensions~\citep{alaei2014bayesian,niazadeh2018prophet,feng2022near,jiang2022tight}, we circumvent this issue by proposing an alternative \emph{polynomial size parametric LP} (and its dual) that captures the worst-case competitive ratio of the optimal online policy in our generalization of the prophet inequality problem for any given $f>0$.

To set up this parametric LP, we begin by proving that for every instance of our problem, there exists a harder instance consisting of two-point random variables $X_i =v_i \textrm{Ber}(q_i)$,  where $v_1 \le v_2 \le \ldots \le v_n$ and $q_i\in[0,1]$. This reduction to a discrete, monotonic instance is intuitive because it ensures that information about the maximum value is revealed as slowly as possible, thereby making the problem more challenging for an online decision-maker who observes these variables sequentially. Formally, we establish this reduction by ``splitting'' the original random variables into two-point distributions without altering the prophet benchmark---in the same spirit as similar variable-splitting techniques used in other prophet-inequality-type problems, e.g., \citet{liu2021variable}---and then showing that any policy in the split instance can be simulated in the original instance without loss.

Once we have a monotonic instance, we set up a factor revealing program to analyze the worst-case competitive ratio. However, it turns out that the constraints are nonlinear in the probabilities $q_i$ of the two-point variables, so these probabilities must remain external parameters to obtain our (parametric) LP. Applying LP duality yields another LP, which we informally call ``online contention resolution with recourse.'' This is somewhat analogous to vanilla contention resolution LPs~\citep{alaei2014bayesian,jiang2022tight},  where each variable represents the probability of allocating in a certain state, and we aim to achieve a certain level of \emph{allocation coverage} at each time.

However, our program includes new features: each state $(i,j)$ with $i < j$ corresponds to a directed edge from node $i$ to $j$, indicating that at time $j$, the last accepted variable was at time $i$. The LP then takes the form of a \emph{generalized flow problem} on this digraph, with ``leakage'' along each edge governed by the buyback parameter $f$ (see \Cref{fig:flow}). There are also capacities on the edges to ensure that the final policy is implementable online, and certain demands should be maintained at each node $i$ to account for the allocation coverage at each time in the final policy. 

\smallskip
\emph{Continuous LP, greedy-type flow, \& differential equation.} The remaining task is to find an explicit solution to this generalized flow problem for every choice of the probabilities $q_i$. It turns out that in the worst-case, it is safe to assume that all the probabilities $q_i$ are small. This allows us to reduce the problem of finding a solution to the LP for potentially many different values of $q_i$ to solving a single continuous generalized flow problem. We then aim to exploit the special combinatorial structure of this problem to derive an explicit feasible solution that characterizes our competitive ratio.

Informally speaking, any feasible solution to this continuous LP can be equivalently interpreted as an implementable online algorithm in a surrogate ``quantile acceptance problem,'' where a decision maker sequentially observes a continuum of Bernoulli random variables in the interval $(0,1]$, each active with a probability of $\frac{dt}{t}$. At each step $t \in (0,1]$, if the current quantile $t$ is active, the decision maker should decide whether to select $t$ for the first time (if no quantile was selected before), or decide whether to swap a previously accepted quantile $s < t$ with $t$. The goal is to satisfy the coverage condition of the LP for every $t \in (0,1]$, where coverage is defined as the total probability of selecting $t$ minus the ``leakage'' due to buyback feature of our problem, which is $(1+f)$ times the total probability of swapping $t$ with a future active quantile (see \Cref{sec:reduction-to-diff-eq}). We may interpret this as requiring each quantile $t$ be accepted with a sufficiently large \emph{effective probability}.

Viewing these quantiles as small random variables in a monotonic instance, intuitively speaking, an effective algorithm should strive to pick the last active quantile (the analog of the prophet benchmark) with as few swaps as possible, or somewhat similarly, maximize the effective probability that each quantile is selected.\footnote{Quite surprisingly, under this dual continuous LP perspective, we only need to consider effective probability of selecting each quantile, rather than their values in the monotonic instance---mirroring how online contention resolution schemes remain agnostic to values when employed in designing prophet inequalities.}  Consequently, one can guess that a candidate optimal strategy would follow a greedy-type rule but with some safeguards: the algorithm selects an active quantile as its first choice and then swaps a previously accepted quantile $s$ with the current active quantile $t$ only if $t > \tau(s)$, for some function $\tau: [0,1] \rightarrow [0,1]$ (likely satisfying $\tau(s) > s$).

It turns out that restricting to this class of solutions for the quantile acceptance problem---or equivalently our continuous LP---is without loss. Formally speaking, we characterize a particular differential equation for the function $\tau$ (or its inverse function $y\triangleq \tau^{-1}$) in a somewhat unconventional form (\Cref{defn:differentialequation}), so that the solution $y_f$ to this differential equation produces a feasible greedy-type solution for our continuous LP with coverage $\Theta = \tfrac{1}{2 - y_f(1)}$. Our differential equation is obtained by first identifying structural properties of the greedy-type algorithm described above, then encoding those properties into a single differential equation for the $y$-function used by the algorithm.

We conclude our analysis by showing that a solution to this differential equation (satisfying certain regularity conditions) indeed exists and describing how to build such a solution algorithmically. Our construction is involved and employs various techniques---such as backward-in-time recursion and the identification of ``correct'' recursive invariants---that may be of independent interest (see \Cref{sec:diff-equation-sketch} and \Cref{sec:diff-eq-soln} for details; see also \Cref{fig:diff-equ} for a visualization of our solution).

We note that our algorithm for solving the ``quantile acceptance problem,''  previously described as providing a feasible solution to our continuous LP (see \Cref{alg:quantile-selection}), is also more directly linked to our buyback problem. Indeed, every time we observe a $X_i$ in the buyback problem, we implicitly observe a quantile of $X_{\max}$, and must decide whether to accept it. In the buyback problem, we may swap $X_i$ with $X_j$, analogous to how we may swap quantiles in the quantile acceptance problem. Linearity of expectation shows that accepting each quantile of $X_{\max}$ with a certain effective probability guarantees we obtain the same competitive ratio for the buyback problem.

However, there remain two major differences between the quantile acceptance problem and the buyback problem. Firstly, in the quantile acceptance problem, the activation of each quantile of $X_{\max}$ is observed in increasing order of quantiles; however, our observations of $X_i$ do not need to be in increasing order. Secondly, the probability with which a quantile $t$ of $X_{\max}$ is observed in the sequence $\{X_i\}$ need not be $\frac{dt}{t}$.\footnote{Note that if we have a monotonic instance where each random variable has a sufficiently small chance of being active, then these differences do not exist, and therefore the solution to the quantile acceptance problem can be used directly to have an algorithm for the original problem in such an instance.} How can we overcome these differences and use our solution to the quantile acceptance problem to design a simple and order-agnostic algorithm for a general instance of the prophet inequality with buyback problem?


\smallskip
\emph{Simple and order-agnostic optimal-competitive algorithm.} To answer the above question, we formalize a continuum of Bernoulli random variables on $(0, 1]$ in which each $t$ is active with a probability of $\frac{dt}{t}$ as a \emph{Poisson point process} (PPP) with rate function $\frac{1}{t}$. Our goal is to embed each observed $X_i$ (implicitly a quantile of $X_{\max}$) into this PPP via a simple coupling, and then apply the quantile acceptance algorithm on the resulting coupled process to obtain an algorithm for the original problem.

Our algorithm generates this PPP in an online fashion. As it observes $X_i$, it tracks the current maximum observed value in the sequence so far, denoted by $\hat{x}$. Whenever a new $X_i$ exceeds $\hat{x}$, to couple $X_i$ with the PPP, the algorithm simply inserts the quantile of $X_{\max}$ corresponding to $X_i$ into the PPP. It then ``fills the gap'' between $\hat{x}$ and $X_i$ by artificially inserting extra points between these two quantities drawn from a PPP with the rate of exactly $\frac{1}{t}$. Notice that the algorithm ignores observations that are not peaks, thereby solving the issue that the observed quantiles may not arrive in increasing order. After generating the PPP, our algorithm flags the same quantiles that are accepted in the quantile acceptance problem. At this point, if we could treat the flagged values as if they were the values the algorithm accepted, we would be done.

 However, our algorithm can only accept the values $X_i$, not all the values in the PPP (since the rest are artificially generated). Fortunately, by linearity of expectation, accepting $X_i$ with a carefully chosen probability $p$ can simulate accepting a (flagged) value $z < X_i$. This allows us to bound the competitive ratio of the algorithm with the performance of our quantile acceptance algorithm, thereby establishing that it is optimal-competitive. Ultimately, the resulting algorithm is simple: observe the sequence of peaks in the realizations of $X_i$ and only accept each peak with carefully chosen probability. These probabilities, defined using the PPP and our solution to the quantile acceptance problem, are easy to calculate and do not depend on the order of future arrivals. See \Cref{alg:order-oblivious}. 


\smallskip
\emph{Indifference condition \& bad examples.} Somewhat surprisingly, the same solution to the differential equation that was foundational in our algorithm design also allows us to construct a lower bound instance with two-point distributions for our original problem, which \emph{exactly} matches our upper bound of $\theta_f = \frac{1}{2-y_f(1)}$. The key idea is to identify and formalize a property that we observe in the bad examples for $n=2$ and $n=3$: for these worst-case instances, the optimal competitive online algorithm, when holding a random variable, is indifferent between accepting or rejecting the next one. Restricting ourselves to classes of two-point ``instances with indifference'' (see \Cref{sec:indifference} for a formal definition), we use our differential equation solution $y_f$ in a novel way
to construct activation probabilities in our instance. In particular, defining the sequence $\{k_i\}$, the ``orbits of 0 under $y_f$,'' that is, $k_0=0$ and $y_f(k_{i+1})=k_i$, we let the $i^\textrm{th}$ random variable be the maximum with probability $k_i - k_{i-1}$. Together with the corresponding values calculated from the indifference conditions, this instance shows our desired bound.

Our work is connected to various lines of literature in operations research and computer science. See \Cref{sec:further-related} for a comprehensive discussion of further related work.

\section{Characterizing the Optimal Competitive Ratio}
\label{sec:upper-bound}
In general, one could formulate a nonlinear, exponential-size mathematical program to characterize the worst-case performance of the optimal online policy against the prophet benchmark---mirroring the approach of finding the worst-case competitive ratio through the equilibrium of a zero-sum game between an algorithm (choosing an online policy) and an adversary (choosing the sequence of distributions). In this section, rather than pursuing this na\"ive approach, we develop a parametric linear program whose parameters encode a lower-dimensional ``sufficient statistic'' of the worst-case instance. We then show how this LP characterizes the worst-case ratio between the optimal online policy and the offline optimum, under the adversary's worst-case choices of these parameters.

To build this parametric factor-revealing LP, we begin in \Cref{sec:monotone reduction} by introducing a series of reductions that convert any instance to a restricted form, without changing the worst-case competitive ratio. Next, in \Cref{sec:LP}, we present a discrete version of our LP and its dual, along with a continuous variant of the dual (which corresponds to taking a particular limit of the discrete dual). Finally, inspired by treating our continuous LP as a \emph{generalized flow problem}~\citep{pulat1989relation}, we show how to effectively solve the continuous LP (or identify a ``good'' feasible solution) via a specific differential equation in \Cref{sec:reduction-to-diff-eq}. We briefly sketch how to construct a solution for the differential equation in \Cref{sec:diff-equation-sketch} and defer the full details to \Cref{sec:diff-eq-soln}.

\subsection{Reduction to a monotonic sequence of two-point distributions}
\label{sec:monotone reduction}
We claim that, without loss of generality, we can restrict our attention to problem instances comprising a sequence of random variables $X_i$ such that (i)~they have \emph{scaled Bernoulli} distributions, that is, $X_i \in \{0, v_i\}$, and (ii)~they are \emph{monotonic}, that is, $v_1 \le v_2 \le \ldots \le v_n$ for some $n\in\mathbb{N}$. More precisely, we show the following result, which paves the way for the rest of our analysis.


\begin{proposition}
\label{thm:reduction}
If a competitive ratio $\alpha(f)$ can be achieved for any instance with random variables $X_i = v_i \cdot \mathrm{Ber}(q_i)$, where $\mathrm{Ber}(q_i)$ is a $0/1$-Bernoulli random variable with expectation $q_i$ and $0 < v_1 \leq v_2\leq \ldots \leq v_n$, then the same competitive ratio $\alpha(f)$ can be achieved for any instance with nonnegative random variables with finite expectation.
\end{proposition}

To prove \Cref{thm:reduction}, we proceed in a sequence of simple reductions. The first step is to discretize the random variables. See \Cref{apx:lem:discrete} for the proof of the following lemma. 

\begin{lemma}
\label{lem:discretize}
If a competitive ratio $\alpha(f)$ can be achieved for any instance with discrete nonnegative random variables (with finitely many points in the support), then the same competitive ratio $\alpha(f)$ can be achieved for any instance with nonnegative random variables with finite expectations.
\end{lemma}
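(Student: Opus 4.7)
The plan is to approximate any general instance by a sequence of truncated-and-rounded discrete instances and transfer the competitive guarantee via a coupling. Fix parameters $M > 0$ and $\delta > 0$, and on the same probability space define $\tilde{X}_i = \delta \lfloor \min(X_i, M)/\delta \rfloor$; each $\tilde{X}_i$ then takes at most $\lceil M/\delta \rceil + 1$ values and satisfies the pointwise bounds $0 \leq X_i - \tilde{X}_i \leq \delta + (X_i - M)\,\indic[X_i > M]$. By hypothesis, the discrete instance $(\tilde{X}_1,\ldots,\tilde{X}_n)$ admits an online policy $\tilde\pi$ whose expected net reward is at least $\alpha(f)\cdot \EX[\max_i \tilde{X}_i]$. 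I would then lift $\tilde\pi$ to a policy $\pi$ on $(X_i)$ that, upon observing $X_t$, computes $\tilde{X}_t$ from the same deterministic formula and copies the decision of $\tilde\pi$ on $\tilde{X}_t$; this is causal and hence a valid online policy.

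The next step is a pathwise comparison of net rewards. Since $\pi$'s decisions depend only on $\tilde{X}$, the final accepted index $i^\ast$ and the set of bought-back indices $B$ are the same random variables under $\pi$ on $(X_i)$ and under $\tilde\pi$ on $(\tilde{X}_i)$. So $\pi$'s net reward is $X_{i^\ast} - f\sum_{j\in B} X_j$ while $\tilde\pi$'s is $\tilde{X}_{i^\ast} - f\sum_{j\in B}\tilde{X}_j$, and from $X_i \geq \tilde{X}_i$ pointwise we get
\[
\EX[\text{reward of }\pi] \;\geq\; \EX[\text{reward of }\tilde\pi] - f\sum_{j=1}^n \EX[X_j - \tilde{X}_j] \;\geq\; \EX[\text{reward of }\tilde\pi] - f\Bigl(n\delta + \sum_{j=1}^n \EX[X_j\,\indic[X_j > M]]\Bigr).
\]
Each tail $\EX[X_j\,\indic[X_j > M]]$ vanishes as $M\to\infty$ since $\EX[X_j]<\infty$. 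Likewise, the inequalities $\max_i \tilde{X}_i \leq \max_i X_i \leq \max_i \tilde{X}_i + \delta + \max_i X_i\cdot\indic[\max_i X_i > M]$ yield $\EX[\max_i \tilde{X}_i] \geq \EX[\max_i X_i] - \delta - \EX[\max_i X_i\cdot\indic[\max_i X_i > M]]$, and the truncation term vanishes as $M\to\infty$ because $\EX[\max_i X_i] \leq \sum_i \EX[X_i] < \infty$.

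Chaining these estimates with the discrete-instance guarantee gives
\[
\EX[\text{reward of }\pi] \;\geq\; \alpha(f)\,\EX[\max_i X_i] - \mathrm{err}(\delta, M),
\]
where $\mathrm{err}(\delta,M)\to 0$ as $\delta\to 0$ and then $M\to\infty$. Since the true optimal online policy on $(X_i)$ dominates $\pi$, passing to the limit proves the lemma. The main obstacle is cleanly handling the buyback-cost discrepancy: bought-back values are charged at their original (larger) magnitudes under $\pi$, so the per-variable gap of size $\delta$ (plus the truncation excess) accumulates across as many as $n-1$ buybacks. What rescues the argument is that $n$ is fixed and the finite-expectation assumption controls the truncation tail uniformly, so the cumulative error still tends to $0$ in the limit.
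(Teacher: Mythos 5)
Your proof is correct and follows the same high-level template as the paper's own argument: discretize, lift the discrete-instance policy to the original instance by coupling the two sequences on a common probability space, and show the error vanishes in the limit. The substantive difference is the direction of rounding. You round \emph{down} ($\tilde X_i \leq X_i$) on a uniform grid of spacing $\delta$ after capping at $M$; the paper rounds \emph{up} on a geometric grid of ratio $1+\delta$ after capping at a tail-controlled threshold. Rounding up makes the buyback-cost comparison automatic --- the discretized values dominate the originals, so the discretized policy pays at least as much in buyback and the charge carried over to the original can only decrease --- while the discretized prophet benchmark can only overshoot, which is the direction that must then be controlled. Your choice of rounding down flips both sides: the benchmark can only undershoot (harmless), but now the lifted policy's buyback charge can exceed the discretized one, and you must bound the excess $f\sum_{j\in B}(X_j-\tilde X_j)$ explicitly. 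You do this correctly by bounding it crudely by $f\sum_{j=1}^n \EX[X_j-\tilde X_j] \leq f\bigl(n\delta + \sum_j \EX[X_j\,\indic(X_j>M)]\bigr)$, and the dependence on $n$ is harmless since $n$ is fixed for the instance and you send $\delta\to 0$ before $M\to\infty$. Your final step --- the online optimum dominates each $\pi_{\delta,M}$, so it is at least $\alpha(f)\EX[X_{\max}]$ after taking limits --- is the same conclusion the paper reaches by contradiction. Both arguments are sound; yours is a shade more elementary (uniform grid, no geometric rounding factor, no tail-defined cap $M'$) at the price of tracking the buyback discrepancy that the paper's rounding direction avoids.
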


The second step is to reduce the random variables by ``splitting'' into scaled Bernoulli distributions.

\begin{lemma}
\label{lem:splitting}
If a competitive ratio $\alpha(f)$ can be achieved for any instance with scaled Bernoulli random variables (with values in $\{0,v_i\}$ for some $v_i > 0$), then the same competitive ratio $\alpha(f)$ can be achieved for any instance with discrete nonnegative random variables.
\end{lemma}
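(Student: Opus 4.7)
The plan is to reduce any discrete instance to a Bernoulli one by splitting each discrete variable into a sequence of independent scaled Bernoullis, one value at a time. Given a variable $X$ in the sequence with $\Pr(X = 0) = p_0$ and $\Pr(X = w_j) = p_j$ for distinct positive values $0 < w_1 < \cdots < w_k$, I would replace $X$ by two independent variables occurring at consecutive timesteps: a scaled Bernoulli $X' \in \{0, w_1\}$ with $\Pr(X' = w_1) = p_1/(p_0+p_1)$, followed by a discrete variable $X'' \in \{0, w_2, \ldots, w_k\}$ with $\Pr(X'' = 0) = p_0 + p_1$ and $\Pr(X'' = w_j) = p_j$ for $j \geq 2$. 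A direct check shows $\max\{X', X''\} \sim X$, so the prophet value $\mathbb{E}[\max_i X_i]$ is preserved. Iterating this construction on each variable (each step strictly reduces the total number of distinct positive values), after finitely many steps we obtain an instance in which every random variable is a scaled Bernoulli.

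The key step, and the main obstacle, is to show that each splitting step weakly decreases the online optimum. Let $\Psi$ denote the DP continuation value function right after the block for $X$, which the split leaves unchanged. By Lemma~\ref{lem:recurr} together with $\Psi(0) - fy \leq \Psi(y)$ (from $\Psi + fy$ non-decreasing, Lemma~\ref{lem:Phi}), the DP value $\Phi^{\mathrm{orig}}(x)$ right before $X$ in the original simplifies to
\[
\Phi^{\mathrm{orig}}(x) \;=\; p_0\,\Psi(x) \;+\; \sum_{j=1}^k p_j \max\{\Psi(x),\; \Psi(w_j) - fx\}.
\]
A similar unfolding of the split DP value $\Phi^{\mathrm{split}}(x)$ over $X'$ (which gives a convex combination of ``do not swap at $X'$'' and ``swap at $X'$'') and then over $X''$, followed by a term-by-term comparison, reduces the desired pointwise bound $\Phi^{\mathrm{split}}(x) \leq \Phi^{\mathrm{orig}}(x)$ to (i) the trivial inequality $\Psi(w_1) - fx \leq \max\{\Psi(x), \Psi(w_1) - fx\}$ on the $p_1$-term, and (ii) for each $j \geq 2$ the key inequality
\[
\max\{\Psi(w_1) - fx,\; \Psi(w_j) - fw_1 - fx\} \;\leq\; \max\{\Psi(x),\; \Psi(w_j) - fx\}.
\]
Once the pointwise bound $\Phi^{\mathrm{split}}(x) \leq \Phi^{\mathrm{orig}}(x)$ is established, monotonicity of the Bellman recurrence in its continuation propagates it backward to the initial DP value $\Phi_0(0)$, so the online optimum does not increase.

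The inequality in (ii) above is where all three properties of $\Psi$ from Lemma~\ref{lem:Phi} come crucially together. Since $\Psi(w_j) - fw_1 - fx \leq \Psi(w_j) - fx$, it suffices to show either $\Psi(w_1) \leq \Psi(x) + fx$ or $\Psi(w_1) \leq \Psi(w_j)$. Suppose for contradiction that both fail. If $x \geq w_1$, then $\Psi + fy$ non-decreasing gives $\Psi(w_1) + fw_1 \leq \Psi(x) + fx$, and hence $\Psi(w_1) \leq \Psi(x) + fx$, contradicting the first failure. If $x < w_1$, then $\Psi(w_1) > \Psi(w_j)$ with $w_j > w_1$ combined with convexity of $\Psi$ forces the right derivative $\Psi'_+(w_1)$ to be negative, so by convexity $\Psi$ is non-increasing on $[0, w_1]$, giving $\Psi(x) \geq \Psi(w_1)$ and hence $\Psi(x) + fx \geq \Psi(w_1)$, again contradicting the first failure. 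This completes the single-step reduction, which then iterates to produce the desired Bernoulli instance and finish the proof.
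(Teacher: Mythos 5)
Your proof is correct, but it takes a genuinely different route from the paper's. The paper replaces each discrete $X_i$ by a block of Bernoullis arranged in \emph{decreasing} order of value, and then shows the new instance is \emph{exactly equivalent}: since the optimal policy never accepts a second variable in the same block (subsequent values are only smaller), together with Lemma~\ref{lem:online-monotonicity} it follows that the optimal policy accepts precisely the first nonzero value in each block, which reproduces $X_i$ itself. Your proof instead peels off the \emph{smallest} value first and places the Bernoullis in \emph{increasing} order; with this ordering the split instance is not equivalent (the policy may pay intra-block buyback to upgrade), so you instead argue a one-sided inequality, namely that the split instance is weakly \emph{harder}. You do this by establishing the pointwise DP bound $\Phi^{\mathrm{split}}(x) \leq \Phi^{\mathrm{orig}}(x)$ directly, distributing the outer $\max$ over the convex combination in $\Phi''$ (subadditivity of $\max$ over sums) and reducing everything to the single inequality $\max\{\Psi(w_1)-fx,\ \Psi(w_j)-fw_1-fx\} \leq \max\{\Psi(x),\ \Psi(w_j)-fx\}$, which you verify using all three structural properties from Lemma~\ref{lem:Phi} (convexity, $\Phi_t + f\cdot$ non-decreasing, and implicitly $\Phi_t - \cdot$ through monotonicity on $[0,w_1]$). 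I checked your case analysis for that inequality and the propagation of the pointwise bound through the monotone Bellman recurrence; both are sound. The trade-off is that the paper's route is cleaner (an equivalence, so no inequalities are needed and the reduction is reversible), while yours is more technical but demonstrates, in a self-contained way, that the convexity/monotonicity properties of $\Phi_t$ are strong enough to control how splitting interacts with buyback — a point that is conceptually closer in spirit to the subsequent monotonization step (Lemma~\ref{lem:monotonize}). One small nit: your termination measure "each step strictly reduces the total number of distinct positive values" is slightly imprecise (the total count is conserved across the split); you should instead note that the split strictly reduces the support size of the non-Bernoulli variable being processed, so finitely many iterations suffice.
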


\begin{proofof}{Proof}
Assume that $\cD_i$ is supported on a finite set of values $\cV_i = \{0, v^1_i, \ldots, v^k_i\}$ with $0<v^1_i < \cdots < v^k_i$. We define $k$ new random variables $X^1_i,\ldots,X^k_i$,
where $X^j_i$ has support $\{0, v^j_i\}$ and the probabilities are chosen so that $\max_{1 \leq j \leq k} X^j_i$ has the same distribution as $X_i$ (in particular, we set $\Pr[X^j_i = v^j_i] \;=\; \frac{\Pr[X_i = v^j_i]}{\Pr[X_i \le v^j_i]}$). We denote the distribution of $X^j_i$ by $\cD^j_i$.


Given the original instance $I_1\;=\;(\cD_1,\ldots,\cD_n)$ with $n$ random variables, consider a new instance $I_2$ with $n+k-1$ random variables, formed by ``splitting'' $X_i$ into $k$ scaled Bernoulli random variables $\{X_i^j\}$ as described above. Specifically, let $I_2 \;=\; \bigl(\cD_1,\cD_2,\ldots,\cD_{i-1}, \cD^1_i,\cD^2_i,\ldots,\cD^k_i,\cD_{i+1},\ldots,\cD_n\bigr)$.
We couple the two instances so the instantiations of all the random variables are the same, except for $X_i$ in $I_1$ which equals $\max_{1 \leq j \leq k} X^j_i$ in $I_2$ (and the instantiations of $X^j_i$ are otherwise independent).


Let us define a policy $\cP_1$ for the instance $I_1$ based on a policy $\cP_2$ for the instance $I_2$. First, $\cP_1$ mimics $\cP_2$ up to the arrival of $X_{i}$. Suppose both policies hold value $x$ after processing $X_{i-1}$. Upon observing $X_{i} = v_i^r$, $\cP_1$ \emph{simulates} $\cP_2$ via the above coupling. Specifically, $\cP_1$ generates samples $X^j_i$ from $\cD^j_i$ for $j < r$, and observes the last acceptance of $\cP_2$ on $X^1_i, X^2_i,\ldots, X^{r-1}_i, X_i = v_i^r, X^{r+1}_i = 0, \ldots, X^k_i = 0$. If this last accepted value is $v$ (one of the $v_i^j$ with $j \le r$, or $x$ if $\cP_2$ does not accept any $v_i^j$), then $\cP_1$ accepts $X_{i}$ with probability $\frac{v - x}{v^r_i - x}$. Afterward, $\cP_1$ continues to follow $\cP_2$, \emph{pretending} to hold $v$.

We claim that $\cP_1$ performs at least as well as $\cP_2$. To see why, recall we couple the instantiation of random variables in $I_1$ and $I_2$ as described earlier. Then, upon observing $\{X_i^j\}_{j=1}^k$ and $X_i$ respectively, $\cP_1$ and $\cP_2$ hold the same value in expectation after processing these random variables, because:
\[
x \left(1 - \frac{v - x}{v^r_i - x}\right) + v^r_i \left(\frac{v - x}{v^r_i - x}\right)=
\frac{x(v^r_i - v) + v^r_i(v - x)}{v^r_i - x}=
v.
\]
Since the expected net reward (including the buyback cost) is linear in the accepted values, by linearity of expectation, the future expected net reward of $\cP_1$ after observing $X_i$ equals that of $\cP_2$. $\cP_1$ only surpasses $\cP_2$ because $\cP_2$ might accept multiple $X_i^j$ and therefore pays extra buyback costs.
\end{proofof}

The final step of our reduction is to reorder the scaled Bernoulli distributions so that the values $v_i$ appear in increasing order. We show that this reordering can only make the instance more difficult.

\begin{lemma}
\label{lem:monotonize}
If a competitive factor $\alpha(f)$ can be achieved for any instance with monotonic scaled Bernoulli random variables, with values in $\{0,v_i\}$ for some $v_i > 0$ 
and $v_1 \leq \ldots \leq v_n$, then the same competitive factor $\alpha(f)$ can be achieved for any instance with scaled Bernoulli random variables.
\end{lemma}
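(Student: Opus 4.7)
The plan is a bubble-sort reduction to adjacent swaps. Any sequence of scaled-Bernoulli variables with value sequence $v_1,\ldots,v_n$ can be sorted into non-decreasing order by iteratively exchanging pairs $(X_i,X_{i+1})$ for which $v_i>v_{i+1}$. Since the offline benchmark $\EX[\max_j X_j]$ is permutation-invariant, it suffices to show that each such adjacent swap weakly decreases the online optimum $\Phi_0(0)$; iterating then transfers the competitive factor $\alpha(f)$ from monotone instances to arbitrary Bernoulli instances.

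Fix a single adjacent swap at positions $i,i+1$, and write $A=v_i$, $p=q_i$, $B=v_{i+1}$, $q=q_{i+1}$ with $A>B$. Let $\Psi=\Phi_{i+1}$ be the value function at position $i+1$; this function is identical in both orderings because the variables at positions beyond $i+1$ are unchanged. Two applications of Lemma~\ref{lem:recurr} then give explicit formulas for $\Phi_{i-1}^{\mathrm{orig}}(x)$ and $\Phi_{i-1}^{\mathrm{swap}}(x)$. The key technical step I would aim for is the pointwise comparison
\[
\Phi_{i-1}^{\mathrm{swap}}(x)\;\leq\;\Phi_{i-1}^{\mathrm{orig}}(x)\qquad\forall\,x\geq 0.
\]
Given this, backward induction over the remaining positions yields $\Phi_0^{\mathrm{swap}}(0)\leq\Phi_0^{\mathrm{orig}}(0)$: the Bellman operator $\Phi_t\mapsto\Phi_{t-1}$ in Lemma~\ref{lem:recurr} is monotone in $\Phi_t$, and Lemma~\ref{lem:Phi} ensures both value functions retain the structural properties (convexity, $\Phi+fx$ non-decreasing, $\Phi-x$ non-increasing) needed to continue the induction.

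For the pointwise comparison I would enumerate the eight deterministic policies Swap can play across the two positions, parameterized by $(b,a_x,a_B)\in\{0,1\}^3$: here $b$ encodes the decision (accept $Y=B$?) at position $i$, and $a_x,a_B$ encode the decisions (accept $X=A$?) at position $i+1$ given the held value is $x$ or $B$. For seven of the eight combinations I would exhibit an original policy $(a,b_x,b_A)$ that dominates Swap realization-by-realization. The representative constructions are: for $b=0$ use $(a_x,0,0)$, which matches exactly; for $b=1,a_B=0$ use $(a_x,1,0)$, which gives an excess of $a_x[\Psi(A)-\Psi(B)]$ in the $(X,Y)=(A,B)$ realization, with $\Psi(A)\geq\Psi(B)$ following from Lemma~\ref{lem:online-monotonicity} once we observe that $b=1$ being optimal forces $\Psi(B)>\Psi(x)+fx$; for $b=1,a_B=1,a_x=1$ use $(1,1,0)$, which saves Swap's unnecessary buyback cost $fB$ in the $(A,B)$ realization.

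The main obstacle is the remaining combination $(b,a_x,a_B)=(1,0,1)$, in which Swap accepts $Y=B$ first, declines $X=A$ when $Y$ does not materialize, and nonetheless swaps $B\to A$ when both realize. Since Orig's decision at position $i$ cannot depend on $Y$, no single original policy dominates Swap realization-by-realization here, so I instead plan to rule out this combination as optimal. Translating the three optimality conditions into inequalities on $\Psi$ gives $\Psi(A)\leq\Psi(x)+fx$ (from $a_x=0$), $\Psi(A)>\Psi(B)+fB$ (from $a_B=1$), and, after substituting these into $\Phi_i^{\mathrm{swap}}(B)-fx>\Phi_i^{\mathrm{swap}}(x)$, the strict lower bound $\Psi(B)>\Psi(x)+fx+\tfrac{p}{1-p}fB$. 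For $x\geq B$, the non-decreasing property of $\Psi+fx$ from Lemma~\ref{lem:Phi} gives $\Psi(B)\leq\Psi(x)+fx-fB$, a direct contradiction. For $x<B<A$, convexity of $\Psi$ combined with $\Psi(A)\leq\Psi(x)+fx$ gives $\Psi(B)\leq\Psi(x)+\tfrac{B-x}{A-x}fx$, which combined with the strict lower bound forces $\tfrac{B-x}{A-x}>1$, i.e., $B>A$, contradicting our hypothesis $A>B$. Hence the pathological combination is ruled out, the pointwise inequality holds, and the bubble-sort reduction completes the proof.
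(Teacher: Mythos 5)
Your proposal is correct and follows the paper's overall plan: reduce to adjacent swaps via bubble sort, observe the offline benchmark is permutation-invariant, and show each swap from a decreasing pair to an increasing one can only lower the online optimum. The paper also establishes exactly the pointwise comparison $\Phi_{i-1}^{\mathrm{swap}} \le \Phi_{i-1}^{\mathrm{orig}}$ and then chains it backward by monotonicity of the Bellman operator.

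Where you differ is in how the one genuinely awkward sub-case is dispatched. You enumerate all eight deterministic policies $(b,a_x,a_B)$ at the two positions, exhibit dominating original-order policies for seven of them, and then rule out $(1,0,1)$ by deriving $\Psi(B)>\Psi(x)+fx+\tfrac{p}{1-p}fB$ from the optimality conditions and contradicting it via convexity (separately for $x\geq B$ and $x<B<A$). This works, but the paper shortcuts all of it with Lemma~\ref{lem:online-mono2}: since $\Phi_{t-1}(y)-\Phi_{t-1}(x)\le\Phi_t(y)-\Phi_t(x)$ for $x<y$, the hypothesis $b=1$ (i.e.\ $\Phi_i^{\mathrm{swap}}(B)-fx>\Phi_i^{\mathrm{swap}}(x)$) immediately lifts to $\Psi(B)-fx>\Psi(x)$, and Lemma~\ref{lem:online-monotonicity} then gives $\Psi(A)-fx\geq\Psi(B)-fx>\Psi(x)$, i.e.\ $a_x=1$ is forced. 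So the pathological combination never arises, and one never needs the convexity argument or the $x\geq B$ split (though you should note that the $x\geq B$ branch is in fact also ruled out at the level of $\Phi_i^{\mathrm{swap}}$ by Lemma~\ref{lem:Phi}'s non-decreasing property of $\Phi+f\cdot$). What your route buys is self-containment from the convexity properties alone, without invoking Lemma~\ref{lem:online-mono2}; what the paper's route buys is brevity and the avoidance of an exhaustive policy enumeration. Both are valid, and the remaining seven dominations you construct match the paper's realization-by-realization comparisons essentially line for line.
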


\begin{proofof}{Proof sketch} 
Consider an instance $I_2$ with variables $X_i = v_i \,\mathrm{Ber}(q_i)$ and $X_{i+1} = v_{i+1}\,\mathrm{Ber}(q_{i+1})$, where $v_i < v_{i+1}$. Let $I_1$ be the same instance, except that $X_i$ and $X_{i+1}$ are swapped in order. In $I_1$, we might as well assume that at time $i$ we observe only $\max(X_{i+1}, X_i)$ (and nothing else), which gives us strictly less information than observing $X_{i+1}$ followed by $X_i$. Yet, as shown in the proof of \cref{lem:splitting}, there exists a policy for $I_1$ (observing $\max(X_{i+1},X_i)$ at time $i$) that outperforms a given policy on $I_2$ (observing $X_i$ and then $X_{i+1}$, where $v_i < v_{i+1}$). Since any permutation can be written as a composition of repeated swaps, this completes the proof. We also include a complete ``from scratch'' proof of \cref{lem:monotonize}, not dependent on \cref{lem:splitting} in \cref{sec:lem:monotonize}.
\end{proofof}

Combining Lemmas~\ref{lem:discretize}, \ref{lem:splitting}, and \ref{lem:monotonize} implies Theorem~\ref{thm:reduction}. By virtue of this result, in the remainder of this section, we focus only on instances with a sequence of monotonic scaled Bernoulli distributions.


\subsection{Factor-revealing LPs: finite and infinite-dimensional formulations}
\label{sec:LP}
Let us now formulate a parametric LP that captures the worst-case ratio of the optimal online policy to the prophet benchmark. To this end, first observe that the optimal online policy can be expressed as a simple dynamic program (DP) given the knowledge of the sequence of distributions. We use the DP formulation to set up our ``factor revealing'' linear program.

\smallskip
\noindent\textbf{DP formulation:} To derive this DP, we identify a simple recurrence, also known as the \emph{Bellman optimality equation}~\citep{bellman1954theory}, which characterizes the optimal expected reward-to-go at each stage of the process. Suppose that the instance consists of the sequence of random variables $X_1, X_2, \ldots, X_n$ with known distributions. For $t = 0, \ldots, n-1$, define
\[
\Phi_t(x) \triangleq  \EX\left[\textrm{reward of an optimal online algorithm  holding $x$ and observing $(X_{t+1},\ldots,X_n)$}\right].
 \]
Here, any buyback cost the algorithm might have paid before observing $X_{t+1}$ is ignored. For consistency with the above, $\Phi_n(x) \triangleq x$, since there is no subsequent variable $X_{n+1}$ to observe and the algorithm simply keeps the value $x$. Given the above definition, $\Phi_t(x)$ satisfies the following recurrence equation:
\begin{equation}
\label{eq:recurrance}
\Phi_{t-1}(x) = \EX\left[\max \left\{ \Phi_t(x), \Phi_t(X_t) - fx\right\} \right]~,~~~~~~~\forall\, t\in[1:n],\forall\, x\in \mathbb{R}~.
\end{equation}
The above recurrence holds, simply because upon observing $X_t$, any online algorithm has two options: either keeping the value $x$ and skipping $X_t$ (in which case it obtains an expected reward of $\Phi_t(x)$), or paying the buyback cost of $f\cdot x$ and accepting the value $X_t$ (in which case it obtains an expected reward of $\Phi_t(X_t)-fx$). Given the value of $X_t$, the optimal online algorithm chooses the better of the two options, and in expectation obtains $\EX\left[\max \{ \Phi_t(x), \Phi_t(X_t) - fx\} \right]$.

Now we focus on instances in the form provided by Theorem~\ref{thm:reduction}: monotonic sequences of scaled Bernoulli distributions, with $X_i = v_i \mathrm{Ber}(q_i)$, where $0\triangleq v_0 <v_1 \leq v_2 \leq \ldots \leq v_n$. With a slight abuse of notation, let us define $\Phi_{i,t}\triangleq \Phi_t(v_i)$. Then the recurrence in \eqref{eq:recurrance} takes the following form:
\begin{align}
\label{eq:recurr}
\Phi_{i,t-1} = \max \{ \Phi_{i,t}, (1-q_t)  \Phi_{i,t} + q_t ( \Phi_{t,t} - f v_i) \}~,~~~~~~~\forall\,0\leq i<t\leq n~,
\end{align}
and we define $\Phi_{i,n}=v_i$ for all $i\in[0:n]$.
Here, we use the fact that $\Phi_t(x) \geq \Phi_t(0) - fx$ (or, in other words, it never makes sense to accept a random variable that is equal to $0$), which can be proved for the optimal online policy using an easy induction.

\smallskip
\noindent\textbf{Discrete LP formulation:} Fixing $n\in\mathbb{N}$ and the probabilities $\{q_j\}_{j\in[n]}$ as parameters, where $q_j\in[0,1]$, we formulate a minimization LP with variables $\{\Phi_{i,t}\}_{i,t\in[n]\cup\{0\}}$ and $\{v_i\}_{i\in[n]}$. The LP's objective is $\Phi_{0,0}$, representing the expected reward of an optimal online policy from the beginning (holding nothing). The LP constraints capture the optimality equation in \eqref{eq:recurr} and are valid for the optimal online policy. We also add a constraint that normalizes the optimal offline (the expected maximum value) in the monotonic sequence of scaled Bernoulli random variables $X_i=v_i\mathrm{Ber}(q_i)$, namely, $\sum_{i=1}^{n} v_i \, q_i \prod_{j=i+1}^{n} (1-q_j) = 1$. Note the non-linearity in the parameters $q_j$: this is why we cannot include them as variables in the LP, but leave them as external parameters. For any fixed choice of these parameters $\{q_j\}_{j\in[n]}$, this LP precisely captures the worst-case ratio between the optimal online policy and the optimal offline.\footnote{Note that the worst-case instance for the competitive ratio of the optimal online policy is an instance of the form $X_i = v_i \,\mathrm{Ber}(q_i)$, $0 \leq v_1 \leq \ldots \leq v_n$ due to \Cref{lem:monotonize}. Therefore, intuitively speaking, this LP characterizes the best-response of an adversary who picks the problem instance to minimize the competitive ratio, when we tie the hands of the adversary by fixing $n$ and $\{q_j\}_{j\in[n]}$.}   Formally, we have the following lemma, introducing our first LP.

\begin{lemma}
\label{lem:LP-1}
Given $q_1,\ldots,q_n \in [0,1]$, the following LP characterizes the (worst-case) competitive ratio of an optimal online policy for any instance in the form $\forall\,i\in[n]:~X_i = v_i \,\mathrm{Ber}(q_i)$:
\begin{align*}
 \qquad\underset{\Phi_{i,t}\geq 0\,,\,v_i\geq 0}{\min} \;\;\;\;\;\; & \Phi_{0,0} && \textrm{s.t} \\
& \Phi_{i,t-1} \geq  \Phi_{i,t} && \forall\, 0 \leq i < t \leq n\\
& \Phi_{i,t-1} \geq (1-q_t) \Phi_{i,t} + q_t (\Phi_{t,t} - f v_i) && \forall\,0 \leq i < t \leq n\\
& \Phi_{0,n} = 0~~,~~\Phi_{i,n} = v_i && \forall\, 1 \leq i \leq n\\
& \sum_{i=1}^{n} v_i q_i \prod_{j=i+1}^{n} (1-q_j) \geq 1
\end{align*}
\end{lemma}


\begin{proofof}{Proof}
Given any candidate worst-case instance $X_i = v_i\,\mathrm{Ber}(q_i)$, with $0 \leq v_1 \leq \dots \leq v_n$ (\Cref{lem:monotonize}), we first normalize $\EX[X_{\textrm{max}}]$ to $1$ by uniformly scaling all $v_i$ so that $\sum_{i=1}^{n} v_i\,q_i \prod_{j=i+1}^{n} (1 - q_j) \;=\; 1.$
For an optimal online policy in this normalized instance, $\Phi_{i,t} = \Phi_t(v_i)$ is feasible in our LP by the optimality equation in \eqref{eq:recurr}. Thus, the LP's optimal objective value is at most the expected reward of this online policy in the normalized instance, which in turn equals its competitive ratio in the unnormalized instance.

Conversely, let $\{\Phi^*_{i,t}, v^*_i\}$ be an optimal solution to the LP, and consider the instance $X_i = v^*_i\,\mathrm{Ber}(q_i)$. By \eqref{eq:recurr} and backward induction---and using that our LP is a minimization program---the optimal expected reward of holding $v^*_i$ at time $t$ is $\Phi^*_{i,t}$. 
Moreover, we claim we can add the constraint $v^*_{i+1} \geq v^*_i$ without changing the optimal objective value. Hence we may assume $v^*_1 \leq v^*_2 \leq \dots \leq v^*_n$. To see this claim, note that \Cref{lem:monotonize} proves that the worst-case performance of the optimal online policy is attained when the values $v_i$ are ordered in ascending order, and, the expression $\sum_{i=1}^{n} v_i\,q_i \prod_{j=i+1}^{n} (1-q_j)$ 
(which is equal the expected last nonzero $X_i$ in the given order) can only increase when the values are re-arranged in an ascending order.

Putting these observations together, the optimal online policy has an overall expected reward $\Phi_{0,0}$, and $\EX[X_\text{max}] \;=\;\sum_{i=1}^{n} v^*_i\,q_i \prod_{j=i+1}^{n} (1-q_j) \;\geq\; 1$ when $v^*_1 \le v^*_2 \le \cdots \le v^*_n$. Thus, the worst-case competitive ratio of the optimal online policy is at most the LP's optimal objective value.
\end{proofof}

We now slightly reformulate the LP to obtain a simpler equivalent form by a change of variables $\Delta_{i,t} \triangleq \Phi_{i,t-1} - \Phi_{i,t}$ for $0 \leq i < t \leq n$, and applying the constraints $\Phi_{i,n} = v_i, \Phi_{0,n} = 0$:
\begin{align}
\label{eq:disc-primal-LP}\tag{\textsc{Primal-Disc-LP}}
\qquad\underset{\Delta_{i,t}\geq 0\,,\,v_i\geq 0}{\min} \;\;\;\;\;\;& \sum_{t=1}^{n} \Delta_{0,t}
     && \text{s.t.} 
    \\
     & \Delta_{i,t} \geq q_t \left( \sum_{j=t+1}^{n} (\Delta_{t,j} - \Delta_{i,j}) + v_t - (1+f) v_i \right) 
     && \forall \, 0 \leq i < t \leq n \nonumber\\
     & \sum_{i=1}^{n} v_i q_i \prod_{j=i+1}^{n} (1-q_j)\geq 1
     && \nonumber
\end{align}

Now we appeal to LP duality to derive another LP that will be much more useful for our purposes. We postpone the proof of the following lemma, primarily based on strong duality, to \Cref{apx:lem:duality}.

\begin{lemma}
\label{lem:LP-3}
Given $q_1,\ldots,q_n \in [0,1]$, the following (dual) LP gives the (worst-case) competitive ratio of an optimal online policy for any instance in the form $\forall\,i\in[n]:~X_i = v_i \,\mathrm{Ber}(q_i)$:
\begin{align*}
\label{eq:discrete-LP}\tag{\textsc{Dual-Disc-LP}}
\underset{x_{s,t}\geq 0\,,\,\Theta\geq 0}{\max} \;\;\;\;\;\; & \Theta && \textrm{s.t} \\
& x_{0,t} \leq q_t \left( 1 - \sum_{i=1}^{t-1} x_{0,i} \right) && \forall\, 1 \leq t \leq n \\
& x_{s,t}  \leq  q_t \left( \sum_{i=0}^{s-1} x_{i,s} - \sum_{j=s+1}^{t-1} x_{s,j} \right) && \forall\, 1 \leq s < t \leq n \\
& \Theta\, \hat{q}_t \leq \sum_{i=0}^{t-1} x_{i,t} - (1+f) \sum_{j=t+1}^{n} x_{t,j} && \forall\, 1 < t \leq n 
\end{align*}
where we define $\hat{q}_t \triangleq q_t \prod_{j=t+1}^{n} (1-q_j)$.
(Equivalently, $\hat{q}_t\in[0,1]$ is the probability that $X_t = X_{\textrm{max}}$ in an instance with monotonic scaled Bernoulli distributions).
\end{lemma}



The LP in \ref{eq:discrete-LP} has a natural interpretation. Specifically, $x_{0,t}$ represents the probability that $X_t$ is the first variable we accept, and for $1 \le s < t$, $x_{s,t}$ represents the probability that we swap $X_s$ for $X_t$ at some point. The first constraint encodes that by the time we reach $X_t$, there is probability $1 - \sum_{i=1}^{t-1} x_{0,i}$ of not having accepted any variable yet. Since $X_t$ appears with probability $q_t$, we get $q_t\bigl(1 - \sum_{i=1}^{t-1} x_{0,i}\bigr)$ as an upper bound on the probability of accepting $X_t$. Similarly, $\sum_{i=0}^{s-1} x_{i,s} - \sum_{j=s+1}^{t-1} x_{s,j}$ is the probability we have accepted $X_s$ and not yet replaced it by $X_j$ for some $j < t$. Thus, $q_t\Bigl(\sum_{i=0}^{s-1} x_{i,s} - \sum_{j=s+1}^{t-1} x_{s,j}\Bigr)$ upper-bounds the probability of swapping $X_s$ for $X_t$. We refer to these constraints as the \emph{online implementability} constraints.

The third constraint, which we refer to as the \emph{coverage constraint}, enforces that the probability of accepting $X_t$ minus $(1+f)$ times the probability that we discard it later (that is, the \emph{effective probability} of accepting $X_t$ given the buyback cost), should be at least $\Theta$ times the probability $X_t$ is the maximum. Satisfying this constraint ensures that we recover at least a $\Theta$-fraction of $X_t$'s contribution to the optimal offline. Although it is not immediately obvious that this condition is also necessary, \cref{lem:LP-3} shows that it must hold if a competitive ratio of $\Theta$ is achievable.

Finally, from a combinatorial perspective, we can view this LP as a \emph{generalized flow problem}, that is, a network flow with a linear leakage across edges~\citep{pulat1989relation}. In this combinatorial interpretation,  we have nodes $\{0,1,\ldots,n\}$ and directed edges $\{(s,t):0\leq s<t\leq n\}$. Moreover, $(1+f)x_{s,t}$ represents the flow going out from node $s$ to node $t$, each node extracts a demand proportional to $\Theta$, and a fraction $f x_{s,t}$ of the flow $x_{s,t}$ along each edge $(s,t)$ is ``lost'' due to the buyback cost (hence, the residual flow $x_{s,t}$ is going into $t$ from $s$) (see \Cref{fig:flow} for more details). Inspired by this interpretation, we plan to demonstrate how to solve the LP in the following sections.

\begin{figure}[ht]
    \centering
\begin{tikzpicture}[>=stealth, scale=0.7]
\node[draw, rectangle] (N0) at (0, 0) {$0$};
\node[draw, circle]    (Ns) at (3.5, 0) {$s$};
\node[draw, circle]    (Nt) at (7, 0) {$t$};
\node[draw, circle]    (Nn) at (14, 0) {$n$};


\coordinate (Break0sLeft) at ($(N0)!0.35!(Ns)$);
\coordinate (Break0sRight) at ($(N0)!0.65!(Ns)$);
\coordinate (Mid0s) at ($(N0)!0.5!(Ns)$);

\draw (N0) -- (Break0sLeft);
\draw (Break0sRight) -- (Ns);

\node[above] at (Mid0s) [yshift=-4.5pt]{\dots};

\coordinate (BreaksTLeft) at ($(Ns)!0.35!(Nt)$);
\coordinate (BreaksTRight) at ($(Ns)!0.65!(Nt)$);
\coordinate (MidsT) at ($(Ns)!0.5!(Nt)$);

\draw (Ns) -- (BreaksTLeft);
\draw (BreaksTRight) -- (Nt);

\node[above] at (MidsT) [yshift=-4.5pt]{\dots};

\coordinate (BreaktNLeft) at ($(Nt)!0.42!(Nn)$);
\coordinate (BreaktNRight) at ($(Nt)!0.58!(Nn)$);
\coordinate (MidtN) at ($(Nt)!0.5!(Nn)$);

\draw (Nt) -- (BreaktNLeft);
\draw (BreaktNRight) -- (Nn);

\node[above] at (MidtN) [yshift=-4.5pt]{\dots};

\coordinate (Mid) at (5.25, -1.25);

\draw[midarrow, bend right=30] (Ns) to node[below, xshift=-10pt, font=\small]{$\boldsymbol{(1+f)x_{s,\,t}}$} (Mid);
\draw[midarrow, bend right=30] (Mid) to node[below, xshift=5pt, yshift=-4pt, font=\small]{$\boldsymbol{x_{s,\,t}}$} (Nt);

\coordinate (Below) at (5.25, -2.25);
\draw[midarrow,red] (Mid) -- (Below) node[below, font=\small]{$\boldsymbol{f x_{s,\,t}}$};

\draw[midarrow, bend left] (N0) to node[above] {$\boldsymbol{x_{0,\,s}}$} (Ns);
\draw[midarrow, bend left=60] (N0) to node[above] {$\boldsymbol{x_{0,\,t}}$} (Nt);
\end{tikzpicture}
\caption{Interpreting \ref{eq:discrete-LP} as a generalized flow; the red arrow indicates ``lost'' flow or the leakage (due to the buyback).}
\label{fig:flow}
\vspace{-3mm}
\end{figure}

\smallskip

\noindent\textbf{Continuous LP formulation:} Recall that the number of random variables $n\in\mathbb{N}$ served as a parameter in our discrete LP formulations. However, there is a worst-case instance for the competitive ratio involving infinitely many ``vanishing'' scaled Bernoulli random variables, each active with only a tiny probability. 
Intuitively, by splitting a variable $X_i = v_i \,\mathrm{Ber}(q_i)$ into smaller pieces (for example $X_{i_1},X_{i_2} = v_i \,\mathrm{Ber}(1-\sqrt{1-q_i})$) one can move towards a problem with vanishing probabilities without actually changing the problem.
This observation suggests that the problem can be reformulated in terms of a continuum of (scaled) Bernoulli random variables with infinitesimal probabilities of being non-zero. The first step in solving our discrete LP is to rigorously develop this idea by reformulating the problem in a continuous framework, as formally established in the next proposition. We refer to the parameter $\Theta$ in this proposition as the \emph{coverage parameter} in the remainder of the paper.
\begin{proposition}
\label{lem:continuous-reduction}
For a given $f>0$, if there exist non-negative, integrable functions $h$ and $g$ with domains $(0, 1]$ and $(0,1]\times (0,1]$, and parameter $\Theta>0$, such that
\begin{align}\label{integrals}\tag{\textsc{Dual-Cont-LP}} \\ \label{OIfromzero}\tag{A}
& th(t) \leq 1 - \int_0^t h(r)dr && \forall 0 < t \leq 1\\
\tag{B}
&\label{OI} tg(s, t) \leq h(s) + \int_0^s g(r, s)dr - \int_s^t g(s, r)dr && \forall 0 < s \leq t \leq 1 \nonumber\\
\label{coverage}\tag{C}
& \Theta \leq h(s) + \int_0^s g(r, s)dr  - (1+f) \int_{s}^{1} g(s, r)dr && \forall 0 < s \leq 1 \nonumber 
\end{align}
then $\alpha(f) \geq \Theta.$ 
\end{proposition}

\begin{proofof}{Proof} 
We construct a solution for \ref{eq:discrete-LP} using a pair of functions $h(.)$ and $g(.)$ that satisfy the above inequalities. To this end, without loss of generality, we first add a dummy variable $0$ at the beginning of our instance, distributed as $0 \cdot \mathrm{Ber}(1)$ (hence, we set $v_0 = 0$, and $q_0 = 1$), to ensure that the first random variable is deterministic. Since $q_0 = 1$ and we have defined $\hat{q}_t = q_t \prod_{j=t+1}^{n} (1-q_j)$, it follows that $\sum \hat{q_i} = 1$. Let us introduce the notation $r_j = \frac{\hat{q_j}}{q_j} = \sum_{i = 0}^{j}\hat{q_i}$. The $r_j$'s vary from $0$ to $1$, and represent the probability that the last non-zero random variable occurs before or at $X_j$. Note that $\hat{q_j} = r_j - r_{j-1},$ and thus $q_j = \frac{\hat{q_j}}{r_j} = \frac{r_j-r_{j-1}}{r_j}$. To obtain a feasible solution for \ref{eq:discrete-LP} , we just need to set the decision variables as follows:
$$x_{0,t} = \int_{r_{t-1}}^{r_t}h(r)dr~~~~~~\text{ and }~~~~~~x_{s,t}  = \int_{r_{s-1}}^{r_s}\int_{r_{t-1}}^{r_t}g(r, u)du dr.$$ 
Feasibility of the above solution in \ref{eq:discrete-LP} can be shown immediately by integrating the inequalities in \ref{integrals} from $r_{t-1}$ to $r_t$ and from $r_{s-1}$ to $r_s$ as appropriate. We postpone the details and include the calculations in \cref{apx:proof-cont-red} for the sake of completeness.
\end{proofof}

\subsection{Reduction to the differential equation}
\label{sec:reduction-to-diff-eq}
\newcommand{\buybackfactor}{f}
Our next goal is to construct a particular feasible solution to the continuous LP feasibility problem \ref{integrals} for any given buyback factor $f>0$. This solution is a candidate for being optimal, that is, the feasible solution to \ref{integrals} with the maximum possible $\Theta$. The existence of such a feasible solution with $\Theta = \theta_f$ (defined later) certifies that $\alpha(f) \geq \theta_f$. Later, in \Cref{sec:lower-bound}, we will establish that $\alpha(f) \leq \theta_f$, confirming that our candidate solution is indeed optimal. 

\smallskip
\noindent\textbf{A differential equation:} To explain our construction and the main result, we first define the following \emph{differential equation} (in an unconventional form), which plays a key role in our construction.

\begin{definition}[Y-Function Differential Equation]
\label{defn:differentialequation}
For any $f \geq 0$, let $\cf\triangleq \frac{f}{1+f}$, $k_1 \geq \cf$, and consider this differential equation in $t \in [k_1,1]$ for the function $y \colon [k_1,1] \to [0, 1]$ with $y(k_1)=0$:
\begin{align*}\label{difEq}\tag{\textsc{Diff-Eq}}
    & y'(y(t)) 
      = \frac{y(t)-\cf}{t-\cf}
        \left(2 - \frac{y(t)}{y'(t)\,\bigl(t-\cf\bigr)}\right), 
      &&\forall t \in [k_1,1] : y(t)\geq k_1 \nonumber \\[6pt]
    & y(t) 
      = \frac{(t-\cf)^2}{1-\cf} - 1 + \cf + y(1),
      && \forall t \geq y(1)
\end{align*}
\end{definition}



If we conjecture that a solution to this differential equation satisfies $y(t)< t$, we could then solve the equation running ``backward in time,'' since the equation relates the derivative of $y$ at an earlier point $y(t)$ to the value (and derivative) of $y$ at a later point $t$. Building on this observation, in the following theorem, we show that \ref{difEq} admits such a ``proper'' solution. 

\begin{theorem}[Existence of Proper Y-Function]
\label{thm:diff-eq-existence}
For all $f > 0$, there exists a solution $\yf$ to the differential equation when $k_1=\cf$. Moreover, this solution satisfies the following properties:
\begin{enumerate}[label=(\roman*)]
    \item $\yf(t)$ is continuous, strictly increasing, and $\yf(t)< t$ for all $t \in [k_1, 1]$.
    \item $\frac{\yf(t)}{t - \cf}$ is increasing for all $t \in (k_1, 1]$ $\big($or equivalently $\frac{t}{y^{-1}_f(t)-\cf}$ is increasing for all $t\in (0,y(1)]\big)$
    \item There exists a finite integer $n\leq 1+\frac{1}{f}$ such that $y^{(n)}_f(1)\leq k_1$, where $y^{(n)}_f\triangleq \underbrace{y_f \circ y_f \circ \cdots \circ y_f}_{n\text{ times}}$.
\end{enumerate}
\end{theorem}
We sketch the proof of the existence of the solution $\yf$ to \ref{difEq} in \Cref{sec:diff-equation-sketch}, and defer the formal proof to \Cref{sec:diff-eq-soln}. We are now ready to state the main result of this section.

 \begin{theorem}[Optimal Competitive Ratio -- Lower bound]
 \label{thm:alg-existence}
    The optimal online algorithm obtains a competitive ratio of at least $\theta_f \triangleq \frac{1}{2-\yf(1)}$ (see \Cref{fig:competitive-ratios}).
 \end{theorem}

\smallskip
\noindent\textbf{Overview of the construction:} In the rest of this section, we sketch how to construct a feasible solution to \ref{integrals} using the solution of the differential equation $\ref{difEq}$, to prove \Cref{thm:alg-existence} (formal proofs are postponed to the appendix). This construction will proceed in multiple steps:
\smallskip
\begin{itemize}
    \item \textbf{\Cref{sec:overview-interpret}:} We start by providing an interpretation of \ref{integrals} . Then, we argue informally that constructing a feasible solution of  \ref{integrals} is equivalent to finding an algorithm that acts in an online fashion on a continuum of Bernoulli random variables, accepting each one with an effective probability $\Theta$. Such an algorithm is in some sense an analog to the \emph{online contention resolution schemes}~\citep{feldman2016online} adapted to the buyback model.
    \smallskip
    \item \textbf{\Cref{sec:overview-greedy}:} Next, we identify a natural class of algorithms that are likely to accept each variable with a large effective probability. We parameterize this class of algorithms by a number $k_1\in (0,1]$, and a pair of functions, $\varphi:(0,1]\rightarrow [0,1]$ and $y:[k_1,1]$.
    \smallskip
    \item \textbf{\Cref{sec:overview-property}:} Then, we identify the properties that $\varphi$ and $y$ should satisfy to obtain an online algorithm with an effective probability $\Theta$ of accepting each random variable.
    \smallskip
    \item \textbf{\Cref{sec:overview-diffeq}:} Finally, we show how these properties for $\varphi$ and $y$ can be expressed in terms of just the function $y$. This immediately leads to the differential equation \ref{difEq} for $y$.
\end{itemize}

In summary, for a given $f>0$, the (proper) solution $y_f$ of \ref{difEq} allows us to construct $\varphi$ and $y$ with desirable properties, leading to an algorithm for the problem of accepting Bernoulli random variables on a continuum that obtains the maximum effective probability $\Theta=\theta_f$. This algorithm can be interpreted as a solution to \ref{integrals} with the maximum possible coverage $\Theta=\theta_f$ .

\subsubsection{Interpretation of \ref{integrals}}
\label{sec:overview-interpret}
Recall the natural interpretation of \ref{eq:discrete-LP};
we provide a similar (informal) interpretation for \ref{integrals} by considering an instance with a continuum of (monotone) scaled Bernoulli random variables, each having infinitesimal non-zero probability, arriving over the time interval $[0,1]$. Inspired by the proof of \cref{lem:continuous-reduction}, we interpret the infinitesimal $ds$ as the probability that the random variable arriving at time $s$ is the ``maximum'' or the last non-zero random variable (i.e., $\hat{q}_s \approx ds$), so the location of the maximum is uniformly distributed in $[0,1]$. Furthermore, we interpret $q_s \approx \tfrac{ds}{s}$ as the probability that the random variable arriving at time $s$ is non-zero (since $\hat{q}_s = \Pr[\text{location of maximum is in }[0,s]] \cdot q_s = s \, q_s$).\footnote{Alternatively, the interval $[0,1]$ can be interpreted as all quantiles of the random variable $X_\textrm{max}$, arriving sequentially. Once a quantile $s$ arrives, it will be \emph{active} with probability $\tfrac{ds}{s}$. The last active quantile corresponds to the realized quantile of $X_{\textrm{max}}$, which is uniformly distributed over $[0,1]$.} 


We now consider running an online algorithm on this continuum. The algorithm starts at $0$,\footnote{In fact, to be more mathematically precise, the algorithm starts at $0^+$ by skipping the $0$ itself.} makes a first pick, performs swaps when necessary, and attempts to ensure that the random variable arriving at time $t$ is picked with an effective probability of $\Theta dt$. That is, the probability of picking $t$ minus $(1+f)$ times the probability of swapping away from $t$ is at least $\Theta dt$.  We interpret $h(s)\,ds$ as the probability that the random variable arriving at $s$ is our algorithm's first pick, and $g(s,t)\,ds\,dt$ as the probability that the algorithm will swap the random variable arriving at $s$ with the one arriving at $t > s$. Similar to our discussion of \ref{eq:discrete-LP}, the existence of such an algorithm is equivalent to the existence of a feasible solution to \ref{integrals}. 

\subsubsection{Greedy-type algorithms with a monotonicity property}
\label{sec:overview-greedy}
To search for an online algorithm on the continuum with the largest possible $\Theta$, we posit a particular \emph{monotonicity} property: if the algorithm decides to swap from $s$ to $t$ for $s<t$, then for any $s'<s$, it would also swap from $s'$ to $t$. Define $y(t)$ as the largest $s$ for which the algorithm is willing to swap to $t$. We further assume that the function $y(\cdot)$ satisfies the following properties:
\begin{align}
    \label{eq:y-property-1}\tag{\texttt{Property I}}&\textrm{$y(t)$ is strictly increasing in $t$, continuous, and $y(t)< t$}~,\\
    \label{eq:y-property-2}\tag{\texttt{Property II}}&\textrm{$y(k_1)=0$ for a parameter $k_1$},
\end{align}
Note that $y(t)$ is only well defined for $t\in[k_1,1]$. We also define the inverse function $\tau(\cdot)\triangleq y^{-1}(\cdot)$ with support $[0,y(1)]$. By convention, we extend the support by setting $\tau(s)=1$ for $s\in[y(1),1]$.


To fully specify the algorithm, we must also define the probability with which it selects each $s \in (0,k_1]$ as its first pick. We denote this probability by $\varphi(s)\,ds$, for a function $\varphi : (0,k_1] \to \mathbb{R}$. With these ingredients in place, we now informally describe the execution of the resulting simple candidate algorithm on the continuum $[0,1]$. See \Cref{alg:quantile-selection}.

\begin{algorithm}[ht]
\raggedright
\caption{(informal) candidate algorithm for the continuum $[0,1]$}
\label{alg:quantile-selection}
Initialize $s\leftarrow 0$ {\small\color{\commentcolor}\tcc{$s\in[0,1]$ tracks what the algorithm holds at any time}}

\For{active $t\in(0,1]$}{
\If{$t<k_1$ and $s=0$}{
Pick $t$ with probability $\frac{t\varphi(t)}{1-\int_{0}^{t}\varphi(s)\,ds}$. {\small\color{\commentcolor} \tcc{$t$ is active w.p. $\frac{dt}{t}$, and nothing was accepted before $t$ w.p. $1-\int_{0}^{t}\varphi(s)\,ds$, so the algorithm ends up selecting $t$ w.p. $\displaystyle\varphi(t)dt$. }}

If $t$ is picked, update $s\leftarrow t$.
}\Else{
\If{$t>\tau(s)\left(\equiv s<y(t)\right)$}{Swap from $s$ to $t$. {\small\color{\commentcolor} \tcc{$t$ is active w.p. $\frac{dt}{t}$, so the algorithm ends up selecting $t$ w.p. $\frac{dt}{t}$ conditional on holding $s$.}}

Update $s\leftarrow t$.
}
}
}
\end{algorithm}



While we only defined $\varphi(t)\,dt$ for $t \in (0,k_1]$, we can naturally define the \emph{extension} of $\varphi(t)$ to $(0,1]$ by letting it be the probability with which $t$ is ever picked by our algorithm (this includes both the probability that it is the algorithm's final pick, and the probability that the algorithm switches away from $t$). We now ask the following natural question regarding $\{y(\cdot), \varphi(\cdot)\}$:
\begin{displayquote}
\emph{What conditions must $\varphi$ and $y$ satisfy so that (i) \Cref{alg:quantile-selection} is an implementable online algorithm, and (ii) it achieves an effective selection probability of $\Theta\,dt$ for every $t \in (0,1]$?}
\end{displayquote}


\subsubsection{Properties of the functions $\varphi$ and $y$}
\label{sec:overview-property}

We now present the following simple and natural observations, which help pinpoint all the properties of $\varphi$ and $y$ needed to satisfy (i) and (ii).
\begin{itemize}[leftmargin=0.25in]

\item For \Cref{alg:quantile-selection} to be well-defined, the probability of accepting $t<k_1$ must not exceed $1$, therefore:
\begin{equation}
\label{eq:OIfromzero-phi-property3}\tag{\texttt{Property III}}
    t\varphi(t)\le1-\int_{0}^{t}\varphi(s)\,ds
    \quad\forall\,t\in(0,k_1],
\end{equation}

\item Similarly, for any $t\in[k_1,1]$, the probability that \Cref{alg:quantile-selection} does not pick anything between $y(t)$ and $t$ is exactly $1-\int_{y(t)}^{t}\varphi(s)\,ds$. At the same time, the algorithm should pick the random variable arriving at time $t$ as long as it is non-zero (i.e., is \emph{active}) \textit{if and only if} it has not picked anything between $y(t)$ and $t$. Since $\varphi(t)\,dt$ is the probability of ever picking $t$, the quantity $t\varphi(t)=\frac{\varphi(t)\,dt}{\frac{dt}{t}}$ is the probability of picking $t$ conditioned on $t$ being non-zero. Therefore, we get:
\begin{equation}
\label{eq:phi-after-k1-property4}\tag{\texttt{Property IV}}
t\varphi(t)=1-\int_{y(t)}^{t}\varphi(s)\,ds,
\quad\forall\,t\in[k_1,1].
\end{equation}

\item We further posit that \Cref{alg:quantile-selection} always attempts to select the random variable arriving at time $t$ with the lowest possible effective probability, while still maintaining an effective selection probability of $\Theta dt$ for all $t$. That is, for all $t\in(0,1)$ we have:
\begin{align}
\label{eq:coverage-tight}
\Theta\,dt
&=\varphi(t)\,dt-(1+f)\Pr[\text{\Cref{alg:quantile-selection} swaps from }t].
\end{align}

Assuming this, we make the following observations:
\begin{enumerate}
    \item For $t\in(0,y(1)]$, \Cref{alg:quantile-selection} ends up swapping from $t$ if it eventually picks $t$ \emph{and} the last non-zero random variable arrives after $\tau(t)$. These two events are independent. Moreover, the probability of picking $t$ is $\varphi(t)\,dt$, and the probability that the last non-zero arrival is after $\tau(t)$ is $1-\tau(t)$. Combining these with \cref{eq:coverage-tight}, for any $t\in (0,y(1)]$ we obtain:
   \begin{equation}
   \label{eq:coverage-property5-part1}
   \Theta
   =\varphi(t)
   -(1+f)\varphi(t)\bigl(1-\tau(t)\bigr)
   =\varphi(t)\Bigl((1+f)\tau(t)-f\Bigr).
   \end{equation}

    \item For $t\ge y(1)$, we have $\varphi(t)=\Theta$, since there is no further swapping if we pick something after $y(1)$. Moreover, setting $t=1$ in \ref{eq:phi-after-k1-property4}, we see:
   \begin{align*}
   \Theta=\varphi(1)=1-\int_{y(1)}^{1}\varphi(s)\,ds
   =1-\int_{y(1)}^{1}\Theta\,ds
   =\frac{1}{2-y(1)}.
   \end{align*}
\end{enumerate}
Putting the above observations together yields:
\begin{equation}
\label{eq:coverage-property5}\tag{\texttt{Property V}}
\Theta =\frac{1}{2-y(1)}
\quad\text{and}\quad
\varphi(t)=\frac{\Theta}{(1+f)\,y^{-1}(t)-f}
\quad\forall\,t\in(0,1],
\end{equation}
where we adopt the convention $y^{-1}(t)\triangleq 1$ for $t\in(y(1),1]$.

\item  Taking the derivative on both sides in \ref{eq:phi-after-k1-property4}, we get that:
\begin{equation}
\label{eq:differentiated-property-iv}
2\varphi(t) + t \varphi'(t) = \varphi(y(t))y'(t),
\quad\forall t \in [k_1,1].
\end{equation}

Multiplying both sides by $t$ and integrating the resulting formula from $k_1$, we get:
\begin{equation}
\label{eq:property6}\tag{\texttt{Property VI}}
    t^2\varphi(t)=k_1^2 \varphi(k_1)+\int_{0}^{y(t)} \varphi(s)y^{-1}(s)\,ds,\quad \forall t\in[k_1,1].
\end{equation}
\end{itemize}

\subsubsection{Identifying the differential equation}
\label{sec:overview-diffeq}
Observe that as a consequence of \ref{eq:coverage-property5},  $\varphi$ can be defined in terms of $y$. Therefore, statements about the existence of functions $\varphi$ and $y$ with the desired properties ought to be reducible to a statement purely about $y$. By substituting all instances of $\varphi$ in \cref{eq:differentiated-property-iv} using \ref{eq:coverage-property5}, we derive a version of \ref{eq:phi-after-k1-property4} entirely in terms of $y$:
\begin{equation}
\label{eq:rearranged-differential}
\frac{2}{y^{-1}(t)-\cf} - \frac{t(y^{-1})'(t)}{\left(y^{-1}(t)-\cf\right)^2} = \frac{y'(t)}{t-\cf},
\quad \forall t \in [k_1,1].
\end{equation}
Notice that for $t \geq y(1)$, $y^{-1}(t) = 1$ and therefore the second term on the left-hand side becomes $0$. This implies that:
\begin{equation*}
 \frac{2}{1-\cf} = \frac{y'(t)}{t-\cf}, \quad \forall t \in [y(1),1]
\end{equation*}
which yields that  $y(t) 
      = \frac{(t-\cf)^2}{1-\cf} - 1 + \cf + y(1),$ for all $t \geq y(1)$ (an ``initial condition'' in \ref{difEq}). For $t \leq y(1)$, by replacing $t$ with $y(t)$ in \cref{eq:rearranged-differential} we obtain the following simplified formula:
\begin{align*}
     \frac{2}{t-\cf} - \frac{t}{y'(t)\left(t-\cf\right)^2} = \frac{y'(y(t))}{y(t)-\cf}  \qquad  \forall t \in [k_1,1] : y(t) \geq k_1,
\end{align*}
which is exactly our differential equation in \ref{difEq} after rearrangement.

\smallskip
\subsubsection{Formal analysis} Equipped with all the properties identified in our earlier informal discussion, we are now ready to formally show that the \cref{thm:diff-eq-existence} guarantees the existence of functions $\{y(\cdot),\varphi(\cdot)\}$ satisfying these properties, and furthermore that the existence of these functions certifies the existence of the desired feasible solution for \ref{integrals}. We summarize these results in the following propositions. The proofs are essentially a formalization of the our informal arguments. We postpone the proofs to \cref{apx:proof:phiy-existence} and \Cref{apx:proof:properties}.

\begin{proposition}
\label{prop:from-diffeq-to-phiy}
If there exists a solution to \ref{difEq} as defined by \cref{thm:diff-eq-existence}, then there exists nonnegative real-valued functions $y:[k_1,1]\rightarrow [0,1]$ and $\varphi:(0,1]\rightarrow\mathbb{R}$ that satisfy \ref{eq:y-property-1}, \ref{eq:y-property-2}, \ref{eq:OIfromzero-phi-property3}, \ref{eq:phi-after-k1-property4}, \ref{eq:coverage-property5}, and \ref{eq:property6}.
\end{proposition}

\begin{proposition}
\label{prop:from-phiy-to-integrals}
If there exist nonnegative real-valued functions $y:[k_1,1]\rightarrow [0,1]$ and $\varphi:(0,1]\rightarrow\mathbb{R}$ that satisfy \ref{eq:y-property-1}, \ref{eq:y-property-2}, \ref{eq:OIfromzero-phi-property3}, \ref{eq:phi-after-k1-property4}, \ref{eq:coverage-property5}, and \ref{eq:property6}, then there exists a feasible solution for \ref{integrals} with coverage $\displaystyle\Theta=\frac{1}{2-y(1)}$.
\end{proposition}

We have succeeded in reducing \Cref{thm:alg-existence} to \Cref{thm:diff-eq-existence}.

\subsection{Solving the differential equation: proof sketch}
\label{sec:diff-equation-sketch}
As discussed above, we have reduced the problem of analyzing the optimal competitive ratio to solving \ref{difEq} with $k_1 = \cf$. Since $y(1)$ is unknown, let us introduce the parameter $\theta = \frac{1}{2-y(1)}.$ Recalling that $\cf = \frac{f}{1+f}$, we must then solve this differential equation:
\begin{itemize}
\item For $t \in [2-\frac{1}{\theta}, 1]$, $y(t)$ is given explicitly as
\begin{equation}
\label{eq0:init-segment}
 y(t) = 1 - \frac{1}{\theta} + \kk + \frac{(t-\kk)^2}{1-\kk} 
\end{equation}
As stated earlier, this segment can be viewed as the initial condition of the differential equation. 
\item For every $t<1$, if $y(t) \geq k_1$, then
\begin{equation}
\label{eq0:diff-eq}
y'(y(t)) = \frac{y(t) - \kk}{t - \kk} \left( 2 - \frac{y(t)}{y'(t) (t - \kk)} \right)
\end{equation}
\end{itemize}

Given a parameter $\theta \in [1/2,1]$, a solution (or at least a partial solution) can be found as follows:
The function $y(t)$ is determined explicitly on the interval $[y(1), 1]$; recall that $y(1) = 2 - \frac{1}{\theta}$. Given this segment of the solution, we can now compute an extension to the interval $[y(y(1)), y(1)]$, using equation (\ref{eq0:diff-eq})---note that all the quantities on the right-hand side are known for $t \in [y(1), 1]$, due to the initial segment. By integration, we obtain a solution $y(t)$ for $t \in [y(y(1)), y(1)]$. Then we iterate this process, obtaining a solution on the interval $[y(y(y(1))), y(y(1))]$, etc. The process terminates if we extend the solution to the full interval $[\cf, 1]$, or if the solution fails for some reason. 


Our goal is to find a solution $y(t)$ on $[\cf, 1]$ such that $y(\cf) = 0$ and $y(t)$ is increasing on $[\cf, 1]$. As we prove in \cref{sec:diff-eq-soln}, such a solution can always be found for a suitable choice of $\theta \in [1/2,1]$, and we will show in \cref{sec:lower-bound} that this is exactly the optimal bound for a given value of $f$. Here we discuss at a high-level why such a solution exists and how the proof works. Before proceeding to the general case, let us discuss two special cases in which we can explicitly solve the differential equation and obtain a closed-form solution $y(t)$, as well as the optimal competitive ratio.

\paragraph{The case of $f \geq 1$:}
In this case, the choice of 
 \[\crf = \frac{1}{2-y_f(1)} = \frac{1+f}{1+2f} \] 
and $k_1 = \cf$ leads to a complete solution given by the initial segment: 
\[y_f(t) = (1+f)(t-\cf)^2 \text{ for } \cf \leq t \leq 1. \]
We have $y_f(1) = 2 - \frac{1}{\crf} = \frac{1}{1+f} \leq \frac{f}{1+f} = \cf$. This shows that the initial segment already covers the interval $[\cf, 1]$ and the solution is complete. Given the simple two-variable example discussed in the introduction, we also obtain that $\frac{1+f}{1+2f}$ is the optimal competitive ratio for $f \geq 1$.

\begin{figure}[ht]
\centering
\begin{minipage}{0.49\textwidth}
\centering
\begin{tikzpicture}[scale=0.85]
  \begin{axis}[
    axis lines=middle,  
    enlargelimits,      
    clip=false,
    xmin=0.0, xmax=1,
    ymin=0.0, ymax=0.5,
    label style={font=\footnotesize},   
    tick label style={font=\footnotesize},
  ]
    \addplot [thick, domain=0.5:1.0] {2*(x-0.5)*(x-0.5)};
    \filldraw (axis cs:0.5,0) circle (2pt);
    \filldraw (axis cs:1,0.5) circle (2pt);
  \end{axis}

   \node at (7,0.12) {\footnotesize{${t}$}};
   \node at (1.1,5.5) {\footnotesize{$y_f(t)$}};
   \node at (6.5,3) {\footnotesize{$f=1$}};
   \node at (3.3,1) {\footnotesize{$\cf=0.5$}};
   \node at (6.5,2) {\footnotesize{$y_f(1)=0.5$}};
   \node at (6.5,1) {\footnotesize{$\crf=\tfrac{2}{3}$}};
\end{tikzpicture}
\end{minipage}
\hfill
\begin{minipage}{0.49\textwidth}
\centering
\begin{tikzpicture}[scale=0.85]
  \begin{axis}[
    axis lines=middle,  
    enlargelimits,      
    clip=false,
    xmin=0.0, xmax=1,
    ymin=0.0, ymax=0.713,
    label style={font=\footnotesize},   
    tick label style={font=\footnotesize},
  ]
    \addplot [thick, domain=0.25:0.71] {1.2*(x-0.25)*(x-0.25)};
    \addplot [thick, domain=0.71:1.00] {1.2*(x-0.25)*(x-0.25) + 0.5*(x-0.71)*(x-0.71)};
    \filldraw (axis cs:0.25,0) circle (2pt);
    \filldraw (axis cs:0.713,0.257) circle (2pt);
    \filldraw (axis cs:1,0.713) circle (2pt);
  \end{axis}

   \node at (7,0.12) {\footnotesize{${t}$}};
   \node at (1.1,5.5) {\footnotesize{$y_f(t)$}};
   \node at (6.5,3) {\footnotesize{$f=\tfrac{1}{3}$}};
   \node at (2.3,1) {\footnotesize{$\cf=0.25$}};
   \node at (6.5,2) {\footnotesize{$y_f(1)=\tfrac{4\sqrt{5}+6}{4\sqrt{5}+12} \simeq 0.713$}};
   \node at (6.5,1) {\footnotesize{$\crf=\tfrac{4\sqrt{5}+12}{4\sqrt{5}+18} \simeq 0.777$}};
\end{tikzpicture}
\end{minipage}
\end{figure}

\paragraph{The case of $\frac{1}{3} \leq f < 1$:}
Here, after setting $k_1= \cf$, we need to extend the solution beyond the initial segment. By integration, we obtain a solution on $[y(y(1)), y(1)]$, and then we determine that in order to obtain $y(\cf) = 0$, the correct choice of the parameter $\theta$ is
\begin{equation}
\label{eq:cr-3}
    \crf = \frac{1}{2-y(1)} = \frac{(1+f)(\sqrt{f(2-f)}+1)}{(1+f)\sqrt{f(2-f)}+3f+1}.
\end{equation}

If we define $a_f = \frac{f\left(1-\sqrt{f(2-f)}\right)^2}{(1+f)(1-f)^2}$, the solution has the form
\begin{align*}
    y_f(t) &= \begin{cases} 
     \frac{(t-\cf)^2}{1-\cf} - a_f & \text{ for } y(1) \leq t \leq 1 \\
      \sqrt{\frac{1+f}{t+a_f}} \cdot \left((t-\cf)^2-\cf\left(\sqrt{t+a_f}-\sqrt{\cf+a_f}\right)^2\right) & \text{ for } \cf \leq t \leq y(1)
   \end{cases}
\end{align*}






This implies that there exists an algorithm obtaining a competitive ratio of $\crf = \frac{(1+f)(\sqrt{f(2-f)}+1)}{(1+f)\sqrt{f(2-f)}+3f+1}$ in the regime of $f \in [\frac13, 1]$. The interval of $[\frac13, 1]$ is determined by the fact that for $f < 1/3$, we need to iterate the procedure more than two times to obtain a solution on $[\cf, 1]$.





\paragraph{The general case.}
For general $f>0$ (in particular $0 < f < \frac13$), we do not provide an explicit solution of the differential equation, and we also do not have an explicit formula for the optimal competitive ratio $\crf$. Nevertheless, we prove that a solution exists for a suitable choice of $\theta$ and we describe the solution in a somewhat implicit form. Here we sketch a high-level argument for the existence of a monotonic solution such that $y(\cf) = 0$.

Consider a fixed buyback factor $f>0$, and different choices of $\theta \in [\frac12, 1]$ (which also determine the value $y(1)$ and the initial segment of the solution). By solving the differential equation (as far as possible), we obtain a solution $y^\theta(t)$. Depending on our choice of $\theta$, several options could occur:
\begin{itemize}
\item The solution can be extended to the full interval $[\cf,1]$ so that it is monotonically increasing and $y^\theta(\cf) = 0$. This is the ideal option---it means that we have found a value of $\theta$ for which a $\theta$-competitive algorithm can be designed.
\item A monotonic solution exists on $[\cf,1]$, but it fails to satisfy $y(\cf) = 0$. 
\item A monotonic solution does not exist on $[\cf,1]$---either because we obtain $y'(t) = 0$ for $t > \cf$ which leads to division by $0$ in the differential equation, or we obtain $y'(t) < 0$ which means that the solution is not monotonically increasing. 
\end{itemize}
The figure below illustrates the different options that might happen.
\begin{center}
\begin{tikzpicture}[scale=1]
\centering
  \begin{axis}[
    domain=0:1, 
    samples=300, 
    axis lines=middle, 
    enlargelimits, 
    clip=false,
  ymin=0.055, ymax=0.61,
  label style={font=\footnotesize},   
  tick label style={font=\footnotesize},
  ]

    \addplot [gray, thick, dashed] [domain=0.2:0.5] {(x-0.25)*(x-0.25) - 0.05 };
    \addplot [gray, thick, dashed] [domain=0.5:0.75] {(x-0.25)*(x-0.25) + 0.1*(x-0.5) - 0.05 };
    \addplot [gray, thick, dashed] [domain=0.75:1.00] {0.275 + 1.2*(x-0.75) - 0.05 };

    \addplot [gray, thick , dashed] [domain=0.2:0.5] {(x-0.25)*(x-0.25) - 0.025 };
    \addplot [gray, thick, dashed ] [domain=0.5:0.75] {(x-0.25)*(x-0.25) + 0.1*(x-0.5) - 0.025 };
    \addplot [gray, thick, dashed ] [domain=0.75:1.00] {0.275 + 1.2*(x-0.75) - 0.025 };

    \addplot [thick] [domain=0.25:0.5] {(x-0.25)*(x-0.25) };
    \addplot [thick] [domain=0.5:0.75] {(x-0.25)*(x-0.25) + 0.1*(x-0.5) };
    \addplot [thick] [domain=0.75:1.00] {0.275 + 1.2*(x-0.75) };

    \addplot [gray, thick, dashed ] [domain=0.20:0.5] {(x-0.25)*(x-0.25) + 0.025 };
    \addplot [gray, thick, dashed ] [domain=0.5:0.75] {(x-0.25)*(x-0.25) + 0.1*(x-0.5) + 0.025 };
    \addplot [gray, thick, dashed ] [domain=0.75:1.00] {0.275 + 1.2*(x-0.75) + 0.025 };

    \addplot [gray, thick, dashed ] [domain=0.2:0.3] {4*(x-0.3)*(x-0.3) + 0.2*(x-0.3) + 0.05 };
    \addplot [gray, thick, dashed ] [domain=0.3:0.8] {(x-0.3)*(x-0.3) + 0.2*(x-0.3) + 0.05 };
    \addplot [gray, thick, dashed ] [domain=0.8:1.00] {0.4 + 1.1*(x-0.8 };

    \addplot [gray, thick, dashed ] [domain=0.25:0.4] {10*(x-0.4)*(x-0.4) + 0.1*(x-0.4) + 0.12};
    \addplot [gray, thick, dashed ] [domain=0.4:0.6] {2*(x-0.4)*(x-0.4) + 0.1*(x-0.4) + 0.12 };
    \addplot [gray, thick, dashed ] [domain=0.6:1.00] {0.22 + 1.05*(x-0.6 };

    \addplot [gray, thick, dashed ] [domain=0.35:0.5] {20*(x-0.5)*(x-0.5) + 0.3*(x-0.5) + 0.2 };
    \addplot [gray, thick, dashed ] [domain=0.5:0.6] {3*(x-0.5)*(x-0.5) + 0.3*(x-0.5) + 0.2 };
    \addplot [gray, thick, dashed ] [domain=0.6:1.00] {0.26 + 1.0*(x-0.6 };

    \draw (60, 60) -- (60, 40);
    \node at (60, -50) {$c_f$};
\draw[->, thick]
  (axis cs:1.03, 0.5) -- (axis cs:1.03, 0.67)
  node[midway, right] {\footnotesize{$\theta~~\text{increases}$}};
  \draw[-, thick, draw=cornellred]
  (axis cs:0.26, -0.07) -- (axis cs:0.26, 0.6);


    \end{axis}

 \node at (0.5,5.8) {$y^\theta(t)$};
  \node at (6.7,-0.3) {$t$};
  \node at (6,2.9) {$f=\tfrac{1}{3}$};
  \node at (6,2.1) {$c_f=\tfrac{f}{1+f}=0.25$};
  \node at (6,1.3) {$\theta=?$}; 
\end{tikzpicture}
\end{center}
The crux of the proof is that the solution $y^\theta(t)$ as a function of the parameter $\theta$ behaves {\em continuously in} $\theta$, for a range of $\theta$ such that the solution $y^\theta(t)$ can be defined on the interval $[\cf,1]$. We remark that due to the non-linear nature of the differential equation, this is by no means an obvious fact. In fact, the behavior of $y^\theta(t)$ can be rather unstable for larger values of $\theta$, where the solution cannot be extended to the interval $[\cf,1]$, and the solution can break down or diverge in various ways. 

A key property that we prove is that {\em if a solution $y^\theta(t)$ is well-defined on $[\cf,1]$, and $y(\cf) \leq 0$, then $y^\theta(t)$ must be convex}. This leads to other desirable properties of $y^\theta(t)$ (for example, the derivative $y'(t)$ for $t > \cf$  satisfies $y'(t) \geq \frac{y(t)}{t-\cf}$ which means that quantities in the differential equation do not diverge). Finally, a continuity argument proves that since there exist solutions such that $y^\theta(\cf) < 0$, and also $y^{\theta'}(\cf) > 0$, there must be a critical value $\theta^*$ such that $y^{\theta^*}(\cf) = 0$, and this is the desired value $\theta=\theta_f$ for any given $f>0$. The rigorous proof of this fact is quite technical and is deferred to \Cref{sec:diff-eq-soln}. We next show the actual optimality of this factor via a concrete worst-case instance.


\section{Worst-Case Instances}
\label{sec:lower-bound}
Previously, we showed that the competitive ratio $\crf = \frac{1}{2-\yf(1)}$ can be achieved by an optimal online algorithm for any $f>0$. Our goal in this section is to prove that this result is tight.
\begin{theorem}[Optimal Competitive Ratio -- Upper bound]
\label{thm:WorstCase}
No online algorithm, either deterministic or randomized, can achieve a competitive ratio better than $\crf = \frac{1}{2-\yf(1)}$.
\end{theorem}

Recall the simple two-variable example showing that no competitive ratio better than $\frac{1+f}{1+2f}$ can be achieved by an online algorithm: $X_1=1$ (deterministic) and $X_2=(1+f)$ w.p. $\frac{1}{(1+f)}$ and $0$  otherwise. By comparing with the competitive ratio guarantee of our optimal online algorithm for $f\ge 1$, we conclude that this bound is tight for $f\ge 1$. Note that in this instance, the algorithm is \emph{indifferent} between accepting or rejecting the first random variable. In \Cref{sec:three-variable-example}, we build on this idea to construct a harder instance with $n=3$ random variables for $0<f<1$  where any algorithm achieves an even worse competitive ratio (in fact, exactly the competitive ratio established in \eqref{eq:cr-3} for $f\in[\tfrac{1}{3},1]$. Then, in \Cref{sec:indifference} and \Cref{sec:general-instance}, we construct a general worst-case instance for every $f$ based on similar ideas, using our solution to the differential equation, completing the proof of \cref{thm:WorstCase}.


\subsection{Warm-up: Hard instance with three random random variables}
\label{sec:three-variable-example}
Let $0<f<1$. Consider the following instance with $n=3$ random variables and parameter $x \geq (1+f)$:
\begin{align*}
    X_1 = 1, \;\;\;\;\;\; X_2= \left\{
     \begin{array}{@{}l@{\thinspace}l}
       x \;\;\;\;\; & \text{w.p.} \;\; \frac{1}{x} \\
       0 & \text{o.w.} \\
     \end{array}
    \right. ,
     \;\;\;\;\;\; X_3 = \left\{
     \begin{array}{@{}l@{\thinspace}l}
       x(1+f) \;\;\;\;\; & \text{w.p.} \;\; \frac{x-1-f}{(1+f)(x-1)} \\
       0 & \text{o.w.} \\
     \end{array}
   \right. ~.
\end{align*}
The values and probabilities are chosen so that the optimal online algorithm remains indifferent between accepting $X_i$ or rejecting it, assuming it has already accepted $X_{i-1}$ for $i \in \{2,3\}$. The expected value of the prophet benchmark in this instance can be calculated as follows:
$$\EX[X_{\max}] = \frac{f}{1+f} + \frac{xf}{(1+f)(x-1)} + \frac{x(x-1-f)}{(x-1)} = x+ f\frac{x-1-xf}{(x-1)(1+f)}.$$
To determine the expected reward for the optimal online algorithm, we can solve the recurrence described in \eqref{eq:recurr} to show that $\Phi_0(0) = x-f$. By setting $x\geq (1+f)$ to minimize the competitive ratio, we conclude that no online algorithm can exceed the competitive ratio
\begin{align}\label{eq:cr3variables}
    \min_{x\geq (1+f)} \frac{x-f}{x+ f\frac{x-1-xf}{(x-1)(1+f)}} = \frac{(1+f)(\sqrt{f(2-f)}+1)}{(1+f)\sqrt{f(2-f)}+3f+1}~,
\end{align}
attained at $x^* = \frac{f+2+\sqrt{f(2-f)}}{2}$ (Note that $x^*>(1+f)$ if $f<1$). The corresponding instance is:
\begin{align*}
X_1 = 1, &
\;\;X_2 = \left\{
\begin{array}{ll}
\frac{f+2+\sqrt{f(2-f)}}{2} & \text{w.p.}\;\; \frac{2}{f+2+\sqrt{f(2-f)}} \\
0 & \text{o.w.}
\end{array}
\right.,\;\;
X_3 = \left\{
\begin{array}{ll}
\frac{f+2+\sqrt{f(2-f)}}{2}(1+f) & \text{w.p.}\;\; \frac{1-\sqrt{f(2-f)}}{(1+f)(1-f)} \\
0 & \text{o.w.}
\end{array}
\right.~.
\end{align*}
Notice that this instance is well defined only when $\frac{1-\sqrt{f(2-f)}}{(1+f)(1-f)} \geq 0$, which holds when $f < 1$. One can verify that \eqref{eq:cr3variables} is strictly less than our previous $\frac{1+f}{1+2f}$ bound when $f < 1$, and matches $\frac{1+f}{1+2f}$ at $f=1$. Comparing this upper bound with the competitive ratio guarantee of our algorithm in \eqref{eq:cr-3} confirms the tightness of the competitive ratio in \eqref{eq:cr3variables} for $\frac{1}{3} \leq f \leq 1$.

The above instances with 2 and 3 random variables suggest that as $f$ decreases, we may need more random variables to enforce the worst-case competitive ratio, but all these instances might satisfy a certain indifference property. Expanding on these observations, we now turn to instances with potentially more than three variables and aim to find the hardest possible instance for every $f$ that enforces a similar indifference condition to that of the instances with $n=2$ or $n=3$ random variables.


\subsection{Instances with indifference}
\label{sec:indifference}
Let $X_1,...,X_n$ be $n\in\mathbb{N}$ scaled Bernoulli random variables, where $X_i=x_i\,\mathrm{Ber}(p_i)$ for $x_i\geq 0 $ and $p_i\in[0,1]$. To imitate the example with $n=3$ variables, we would like to construct instances with the property that, holding a random variable, the optimal online algorithm is indifferent between accepting or rejecting the next. Informally, suppose the algorithm is currently holding $x_{i-1}$. Accepting $x_i$ is beneficial only if all future variables are inactive, and the resulting benefit is:
\[
(x_i - (1+f)\,x_{i-1}) \prod_{j=i+1}^n (1-p_j).
\]
On the other hand, accepting $x_i$ is costly if any subsequent variable is active; in that case, the cost is:
\begin{align*}
\left(1- \prod_{j=i+1}^n (1-p_j)\right) f \cdot x_i.
\end{align*}
Intuitively, if these two quantities are equal, we obtain an instance of the desired type. This idea is formalized in the following lemma, with the proof deferred to \Cref{apx:proof:lem-indifference}. 

\begin{lemma}[Indifference Condition] \label{lem:indifference}
Consider $n \geq 2$ scaled Bernoulli random variables $X_1,...,X_n$ with $X_i=x_i\,\mathrm{Ber}(p_i)$ for $x_i\geq 0 $ and probability $p_i\in[0,1]$. Letting $x_0= 0$, suppose for $i \in [1, n]$ that:
\begin{align}
\label{eq:indifference-condition}
    (x_i - (1+f)x_{i-1})\prod_{j=i+1}^n (1-p_j)  = \left(1- \prod_{j=i+1}^n (1-p_j)\right) f \cdot x_i.
\end{align} Then, the optimal online algorithm, when holding $x_{i-1}$, is indifferent  between accepting and rejecting $x_i$, i.e., $\Phi_i(x_{i-1}) = \Phi_i(x_i) - f x_{i-1}$ for $i \in [1, n]$. Furthermore, the optimal online algorithm, in expectation, obtains exactly $x_{n} - f \sum_{i=1}^{n-1} x_i$. 
\end{lemma}

\subsection{General hard instance: Proof of \Cref{thm:WorstCase}}
\label{sec:general-instance}
Let function $y_f(\cdot)$ be the solution of \ref{eq:diff-eq}. Recall that \cref{thm:diff-eq-existence} guarantees that for any $f>0$, one can always find a finite integer $n$ and a sequence of points $\left(k_0,k_1,\ldots,k_{n-1}\right)$, such that $k_0 = 0$, $y_f (k_{i+1}) = k_i$ for $0\leq i\leq n-2$,  and $k_{n-1} \geq y_f(1)$. By convention, we also set $k_n=1$. Based on this sequence $\{k_i\}$, and by using \Cref{lem:indifference}, we provide a problem instance on which no online algorithm can achieve a competitive ratio better than $\crf=\frac{1}{2-y_f(1)}$. 


Consider $n$ scaled Bernoulli random variables $X_1,...,X_n$ where $X_i$ is equal to $x_i$ with probability $p_i$ and is $0$ otherwise. Furthermore, let $p_1=1$,  normalize the random variables so that $x_1 = 1$, and implicitly let $x_0 = 0$. Let us choose $p_i$ so that the probability that $X_i$ is the maximum random variable is exactly $k_i - k_{i-1}.$ In this case, it follows that
$p_i = \frac{k_i - k_{i-1}}{k_i}$ for all $1\leq i\leq n.$ Following \cref{lem:indifference}, let us impose the condition \eqref{eq:indifference-condition} on $x_i$'s and $p_i$'s.
Hence, 
\begin{align}\label{xii-1Connection}
x_1=1~~~~~~,~~~~~~2\leq i \leq n:~x_i =\frac{k_i}{k_i-k_1}x_{i-1} = \prod_{j=2}^i \frac{k_j}{k_j-k_1},
\end{align}   
and under this instance, we have:
\begin{align*}
    \text{OPT} = \sum_{i=1}^n \left(k_i- k_{i-1}\right)x_i,~~~\text{ and }~~~\text{ALG} = x_{n} -f\sum_{i=1}^{n-1}x_i~,
\end{align*}
where we use $\textrm{OPT}$ to denote $\EX[X_{\textrm{max}}]$ and $\textrm{ALG}$ to denoted the expected (net) reward of the optimal online algorithm. We are now prepared to prove \cref{thm:WorstCase}.
\smallskip

\begin{proofof}{Proof of \Cref{thm:WorstCase}} Since the function $y_f$ is a proper solution to the differential equation \ref{difEq}, as described in \Cref{thm:diff-eq-existence}, and recalling the proof of \Cref{thm:alg-existence}, there must exist a function $\varphi:(0,1]\rightarrow\mathbb{R}$ satisfying all six properties listed in \Cref{sec:reduction-to-diff-eq} when we set $\Theta = \theta_f$. 

First, recall \ref{eq:property6}. Setting $t = k_i$ for $i \in [1,n-1]$ in this property and noting that $\varphi(s)\tau(s) = \frac{f\varphi(s) + \theta_f}{1+f}$ (because of \ref{eq:coverage-property5}), we get:
\begin{align}\label{recurrence}
    k_i^2 \varphi(k_i) = k_1^2\varphi(k_1) + \int_{0}^{k_{i-1}}\frac{f\varphi(s) + \theta_f}{1+f}ds
\end{align}
Moreover, while we cannot set $i = n$, by plugging in $t=1$ in \ref{eq:property6} we get:
\begin{align}\label{eq:extraRecurrence}
    \theta_f (1+k_{n-1}-y(1)) &= k_1^2\varphi(k_1) + \int_{0}^{y(1)}\frac{f\varphi(s) + \theta_f}{1+f}ds + \int_{y(1)}^{k_{n-1}}\frac{f\varphi(s) + \theta_f}{1+f}ds \nonumber \\ &= k_1^2\varphi(k_1) + \int_{0}^{k_{n-1}}\frac{f\varphi(s) + \theta_f}{1+f}ds
\end{align}
From \cref{xii-1Connection}, we know that $-\frac{fx_i}{k_i}=(1+f)(x_{i-1}-x_{i})$. For $i \in [1,n-1]$, let us multiply the left hand side of  \cref{recurrence} by $-\frac{fx_i}{k_i}$, and the right hand side of \cref{recurrence} by $(1+f)(x_{i-1}-x_{i})$. Next, multiply the left hand side of \cref{eq:extraRecurrence} by $x_{n}$ and the right hand side by an equal value of $(1+f)x_{n-1}$. Then, sum all of the resulting equations together. Remembering $y(1) = 2-\frac{1}{\theta_f} = 1+k_n-\frac{1}{\theta_f}$, we get:
\begin{eqnarray*}
    && \theta_f \left(k_{n-1}-k_n+\frac{1}{\theta_f}\right)x_{n} - f\sum_{i=1}^{n-1} k_i \varphi(k_i)x_i \\
    &=& x_{n-1}\int_{0}^{k_{n-1}}(f\varphi(s) + \theta_f) ds - \sum_{i=1}^{n-1} (x_i-x_{i-1})\int_{0}^{k_{i-1}}(f\varphi(s) + \theta_f) ds  \\
    &=& \sum_{i=1}^{n-1} x_i \int_{k_{i-1}}^{k_{i}} (f\varphi(s) + \theta_f) ds \overset{(a)}{=}\sum_{i=1}^{n-1} x_i \left(f-fk_{i}\varphi(k_i)+ \theta_f (k_i-k_{i-1})\right),
\end{eqnarray*}
where in (a), we used \ref{eq:phi-after-k1-property4}. We conclude that:
\[    \text{ALG} = x_n - f\sum_{i=1}^{n-1} x_i = \theta_f \sum_{i=1}^n x_i(k_i-k_{i-1}) = \theta_f \cdot \text{OPT},\]
which finishes the proof. \end{proofof}

\section{A Simple Order-Agnostic Optimal Competitive Algorithm}
\label{sec:algorithm}
In the previous sections, we essentially developed an algorithm for the prophet inequality with buyback problem on monotonic scaled-Bernoulli instances with small activation probabilities. This algorithm relies on our solution to the problem of maximizing the effective probability of accepting Bernoulli random variables---the analog of the online contention resolution scheme problem for the buyback model---on a continuum (\Cref{alg:quantile-selection}). By applying our ``variable splitting'' reduction (detailed in the proof of \Cref{lem:continuous-reduction}), we could extend this algorithm to obtain another algorithm that works for any monotonic scaled-Bernoulli instance (which could alternatively be derived by directly solving \ref{eq:discrete-LP}). Moreover, using the sequence of reductions from \Cref{sec:monotone reduction}, we could further transform the resulting algorithm into one that applies to general instances. 

As a result of these reductions, in addition to the optimal online policy---which is optimal competitive by definition, but requires full knowledge of the sequence of distributions---we obtain another optimal competitive online algorithm. However, it remains unclear whether this algorithm can be implemented explicitly in polynomial time using the constructs from \Cref{sec:upper-bound}. More importantly, it is not yet clear whether such an implementation can be made \emph{order-agnostic}, relying only on the set of distributions rather than their arrival order.

In this section, we describe a simple and order-agnostic polynomial-time algorithm, \cref{alg:order-oblivious}, that obtains the optimal competitive ratio for a general instance of the problem. Throughout this section, we assume that the instances consist of a sequence of distributions $\cD_1,\ldots,\cD_n$ of independent nonnegative random variables $X_1,\ldots,X_n$, with $\cD_i$ having a cumulative distribution function $F_i$. We also denote the CDF of $X_{\max} = \max_{1 \leq i \leq n} X_i$ by $F(x) = \prod_{i=1}^{n} F_i(x)$.

\subsection{Intuition behind the design of \cref{alg:order-oblivious}}

An issue with general instances---even when the random variables are Bernoulli---is that the values may be observed in an arbitrary order, in which case our analysis of \Cref{alg:quantile-selection} from \cref{sec:upper-bound} does not apply. Intuitively, seeing a higher value earlier is advantageous, since we immediately know that any lower value appearing afterwards is useless. This suggests that designing an algorithm for general instances should not be harder than for monotonic instances, a fact formally captured by the reductions in \Cref{sec:monotone reduction}. However, implementing this idea requires substantial technical care.

To illustrate, consider a sequence of non-monotonic Bernoulli variables. If we observe a variable $X_i = v_i$ earlier than it would appear in the monotonic order, we can \emph{simulate} what the algorithm would have done had it seen a monotonic sequence before observing this variable. It might have chosen to accept a lower value $X_j = v_j < v_i$, though we do not yet know if such a variable will appear. Assuming that we are currently holding value $z < v_j$, we can simulate accepting $X_j$ by accepting $X_i = v_i$ with probability $\frac{v_j - z}{v_i - z}$. In expectation, this mirrors the behavior of the original algorithm.


While this idea can be used to handle the case of non-monotonic Bernoulli variables, to handle general instances (possibly with continuous distributions), we need a technique to appropriately \emph{embed} realized values from any sequence of distributions into a sequence of quantiles. For this, we use the formalism of the \emph{Poisson point process (PPP)}, a probabilistic model in which points on the real line are activated according to a given density. In our framework, the PPP represents the observed values, or possibly their quantiles w.r.t. $X_{max}$.

\smallskip
Our design of \cref{alg:order-oblivious}, formalizes and combines the above high-level ideas in several steps:
\begin{itemize}
    \item \textbf{\Cref{sec:alg-overview-ppp}:} We start by interpreting \cref{alg:quantile-selection} as an algorithm that acts on a particular kind of non-homogeneous PPP; one that captures the quantiles of the observed variables.
    \item \textbf{\Cref{sec:alg-overview-embedding}:} Next, we explain how based on an observed a sequence of random variables $X_i$, we can generate an ``online PPP'' consistent with the distribution of the full instance, such that at each time the maximum value in the PPP is the maximum value $X_i$ observed so far.
    \item \textbf{\Cref{sec:alg-overview-threshold}:} Next, we explain a modification to \cref{alg:quantile-selection} that ensures that it always makes decisions to accept points based only on a comparison with a certain threshold (note that this is true for \cref{alg:quantile-selection} as stated when $t \geq k_1$, but not for $t<k_1$). While this modification is not strictly speaking necessary, it simplifies the design of \cref{alg:order-oblivious}.
    \item \textbf{\Cref{sec:alg-overview-flagging}:} Finally, we describe \cref{alg:order-oblivious} which can be applied to a general instance of the prophet inequality with buyback problem, by using our online PPP to simulate the behavior of \cref{alg:quantile-selection} (with the modification described just above).
\end{itemize}

\subsubsection{Scale-invariant Poisson point process}
\label{sec:alg-overview-ppp}
We begin by recalling that in our previous discussion of \cref{alg:quantile-selection}, we worked with a continuous limit of Bernoulli random variables defined on the interval $(0, 1]$, where a random variable corresponding to the interval $[t, t + dt]$ is active independently with probability $\frac{dt}{t}$. We can formalize this by viewing the set of active points as a Poisson point process (PPP) on $(0,1]$ with rate $\lambda(t) = \frac{1}{t}$, or equivalently, with intensity measure $\Lambda((t,1]) = -\log t$. Such a PPP is referred to as a scale-invariant Poisson point process~\citep{arratiascaleinvariant}.\footnote{This process has been considered previously in optimal stopping problems; see, for instance, the work of \citet{gnedin2021beat1estrategybestchoice}.} A formal discussion of PPPs is provided in \cref{sec:apx-PPP}. Thus, \cref{alg:quantile-selection} can now be viewed as a procedure that observes points arising from a scale-invariant PPP on $[0,1]$ and accepts or swaps these points in such a way that each active point $t$ is accepted with an effective probability equal to $\Theta$ times the probability that $t$ is the maximum point of the PPP. (Recall that the effective probability refers to the probability of accepting the point minus $(1+f)$ times the probability of discarding the point.)

\subsubsection{Online PPP generation}
\label{sec:alg-overview-embedding}
An algorithm that aims to use \cref{alg:quantile-selection} to solve a general instance of the problem must find a way to embed its sequential observations of random variables $X_i$ (arriving in an arbitrary order) into a process whose points are observed in a monotonic order. To achieve this, we generate a PPP in an online fashion such that it globally agrees with the distribution of $X_{max}$ and its maximum at each time captures the maximum observed value so far, $\max_{1 \leq j \leq i} X_j$. This embedding allows us to make accept-or-swap decisions based on this evolving maximum value and thereby leverage \cref{alg:quantile-selection}, which expects to observe values in a monotonic order. 

A technical detail is that, while previously we described PPPs supported on the ``quantile space'' $(0,1]$, it is more convenient in the formulation of \cref{alg:order-oblivious} to work directly with the ``value space'' $(0,\infty)$ (noting the straightforward correspondence between these spaces via the CDF $F(x)$). More formally, we embed the random variables $X_i$ into a PPP with intensity measure $\Lambda((t, +\infty)) = -\log F(t)$ on $(0, +\infty)$, where $F$ is the CDF of $X_{\max}$. By the mapping theorem for PPPs \citep{kingman-poisson-processes}, there is a direct correspondence between this PPP and a scale-invariant PPP, allowing points generated in the value-space PPP to be converted to a scale-invariant PPP on $(0, 1]$ by applying the CDF $F$.

The remaining question is how to generate a value-space PPP by adding points in an online fashion, ensuring that at each time the current maximum is consistent with the maximum $X_i$ observed so far, and that the PPP at the end has the desired intensity measure $-\log F(x)$. The cornerstone of our approach is to simultaneously track the current maximum observed value $\hat{x}$ among $X_1, X_2, \ldots, X_{i-1}$, as well as a PPP supported on the interval $(0, \hat{x}]$. When a new random variable $X_i$ arrives, we simply ignore it if $X_i \leq \hat{x}$. On the other hand, if $X_i > \hat{x}$, we explicitly add $X_i$ to the PPP to ensure that $X_{\max}$ appears in the PPP. We then need to generate the portion of the PPP between $\hat{x}$ and $X_i$ by filling in additional points to maintain the intensity measure $-\log F(t)$ of the  point process, as desired.

An initial attempt might be to generate this portion by sampling from a PPP on $(\hat{x}, X_i]$ with intensity measure $\Lambda((t, +\infty)) = -\log F(t) = -\sum_{r=1}^n \log F_r(t)$. However, this approach would clearly generate too many points, exceeding the desired intensity measure (since we already explicitly added $X_i$). Moreover, conceptually speaking, this approach incorrectly ignores the partial information that we have already observed: namely, that $X_j \leq \hat{x}$ for all $j < i$. Intuitively, the random variables $X_j$ for $j < i$ should not contribute to generating the PPP between $\hat{x}$ and $X_i$. The correct approach is therefore to sample from a PPP on the interval $(\hat{x}, X_i]$ with intensity measure $\Lambda((t, +\infty)) = -\sum_{r=i}^n \log F_r(t)$, as formally established in \cref{lem:generated-samples}. This adjustment prevents artificial inflation of the PPP's density and accurately reflects our partial knowledge from prior observations.

\subsubsection{Making \cref{alg:quantile-selection} threshold-Based}
\label{sec:alg-overview-threshold}
We note that while \cref{alg:quantile-selection} makes deterministic acceptance decisions for quantiles $t \geq k_1$, it uses randomized decisions for quantiles $t<k_1$: it accepts each active quantile $t<k_1$ with probability $\frac{t\varphi(t)}{1- \int_{0}^t\varphi(s)ds}$. However, for convenience, we prefer to modify the algorithm so that it makes deterministic, threshold-based decisions for $t<k_1$ as well.

One way to convert \cref{alg:quantile-selection} into a threshold-based algorithm is by introducing a \textit{randomized} threshold $\tau(0) \in (0, k_1)$ and then deterministically accepting the first active quantile after this threshold. A calculation shows that the appropriate choice of the randomized threshold $\tau(0)$—to ensure each quantile $t$ is accepted with probability $\varphi(t)dt$—is to sample it from a distribution with CDF $G(t) \coloneqq t\varphi(t)+\int_{0}^t\varphi(s)ds$. See the proof of \cref{lem:quantile-acceptance} for more details; the fact that $G(t)$ is a valid CDF follows from \cref{cor:shrinkingGap}.

\subsubsection{Flagging points and simulating \cref{alg:quantile-selection}}
\label{sec:alg-overview-flagging}
Given the online PPP constructed above, our algorithm (\Cref{alg:order-oblivious}) aims to simulate \cref{alg:quantile-selection} (with an initial randomized threshold $\tau(0)$) on these points. To achieve this, \Cref{alg:order-oblivious} first identifies the values in the value-space PPP corresponding to quantiles that \cref{alg:quantile-selection} would accept. At this point, if our algorithm could simply accept all these values, we were done; This holds because the effective acceptance probability for each quantile of $X_{\max}$ would match that of \cref{alg:quantile-selection}, and linearity of expectation would imply a competitive ratio equal to this effective acceptance probability.

However, \cref{alg:order-oblivious} cannot directly accept every PPP point because many of them are ``fictitious'' points added by our own sampling procedure, rather than being ``real'' values corresponding to actual observed variables $X_i$. On the positive side, we know that the maximum PPP point at any given time is a real value, as it corresponds to the maximum observed variable $X_i$ thus far. Therefore, we can simulate accepting a fictitious point by instead accepting a real variable $X_i$ with an appropriate probability. Formally, our algorithm \emph{flags} points corresponding to quantiles accepted by \cref{alg:quantile-selection}, and then simulates the acceptance of these flagged points via accepting actual observed values $X_i$.

As previously discussed (and formalized in \cref{lem:splitting}), if we currently hold a value $z<X_i$ and want to simulate the acceptance of a value $z' < X_i$, we can instead accept $X_i$ with probability $\frac{z'-z}{X_i - z}$ and act in the future as if we had accepted the value $z'$. This simulation approach is valid because our objective is linear in the variables $X_i$ (that we are swapping or holding), and
\[
\mathbb{E}\left[\textrm{value held in future}\right]=
\left(1 - \frac{z' - z}{X_i - z}\right)z  + \left(\frac{z' - z}{X_i - z}\right) X_i =
z + \left( \frac{z' - z}{X_i - z} \right) (X_i - z) = z'.
\]

\begin{algorithm}[ht]
\raggedright
\caption{Order-agnostic algorithm}
\label{alg:order-oblivious}
Initialize $\hat{x} \leftarrow 0$ {\small\color{\commentcolor}\tcc{$\hat{x}$ tracks the peak value the algorithm has observed so far}}
Initialize $\hat{z} \leftarrow 0$ {\small\color{\commentcolor}\tcc{$\hat{z}$ tracks the last value the algorithm has flagged at any time}}
Initialize threshold $T$ as $F^{-1}(\tau(0))$ with $\tau(0)$ sampled from a distribution with CDF equal to $G(t) := t\varphi(t)+\int_{0}^t\varphi(s)ds$
{\small\color{\commentcolor}\tcc{$T$ determines whether we flag a new value. This is a well defined CDF on $[0, k_1]$ because of \Cref{cor:shrinkingGap}.}}
\For{$i\in[1:n]$}{
\If{$X_i > \hat{x}$}{

Update $z \leftarrow \hat{z}$

Generate a set of samples $S_i$ from a PPP on $(\hat{x},X_i]$ with intensity measure $-\sum_{r=i}^n\log F_r(t)$

\While{$X_i \geq T$}{
Update $\hat{z} \leftarrow \min\{x\in S_i \cup \{X_i\} | x \geq T\}$ {\small\color{\commentcolor}\tcc{the value $\hat{z}$ takes has now been \textit{flagged}}}

Update $T \leftarrow F^{-1}(\tau(F(\hat{z})))$
}

Accept $X_i$ with probability $\frac{\hat{z}-\sample}{X_i-\sample}$.

Update $\hat{x} \leftarrow X_i$
}
}
\end{algorithm}

\subsection{Analysis of \cref{alg:order-oblivious}}
Now, let us formally analyze \cref{alg:order-oblivious}.
First, we would like to show that the samples generated by the algorithm are distributed in exactly the same way as the quantiles observed by \cref{alg:quantile-selection}: \begin{lemma} \label{lem:generated-samples}
The set of samples generated by \cref{alg:order-oblivious}, together with the observed peaks, when considered as quantiles of $X_{\max}$, form a scale-invariant Poisson point process on $(0, 1]$, i.e., a PPP with intensity measure $\Lambda\left((t, 1]\right) = -\log t$.
\end{lemma}
The formal proof is postponed to Appendix~\ref{sec:generated-samples}. Here, we will provide an intuitive proof sketch.

\smallskip
\begin{proofof}{Proof sketch}
First, note that a PPP with intensity measure $-\sum_{r=i}^n \log F_r(t)$ can be alternatively generated by separately generating independent PPPs with intensity measures $-\log F_r(t)$ for each $r \in \{i, \dots, n\}$, and then taking the union of these generated points. Thus, we can fix an $r$ and focus only on the samples generated from the $-\log F_r(t)$ portion of the PPP during a run of \cref{alg:order-oblivious}. We then show that these points have the same distribution as a PPP $S_r$ on $(0, \infty)$ with intensity measure $\Lambda((t, +\infty)) = -\log F_r(t)$.

During a run of \cref{alg:order-oblivious}, the samples from the $-\log F_r(t)$ portion of the PPP are generated in two stages: In the first stage, a PPP with intensity $-\log F_r(t)$ is generated on the interval $(0, \hat{x}]$, where $\hat{x} = \max(X_1,\dots,X_{r-1})$. In the second stage, $X_r$ is observed. If $X_r \leq \hat{x}$, it is ignored. If $X_r > \hat{x}$, we explicitly add the point $X_r$ along with a PPP on the interval $(\hat{x}, X_r)$ with intensity $-\log F_r(t)$ to the set of points generated in the first stage.

Observe that the distribution of $X_r$ matches the distribution of the last sampled point $\tilde{X}_r$ from $S_r$. Consequently, the probability that $X_r > \hat{x}$ equals the probability that $\tilde{X}_r > \hat{x}$. Therefore, we can couple samples of $S_r$ where $\tilde{X}_r \leq \hat{x}$ with the points generated during the first stage when $X_r \leq \hat{x}$. Likewise, we can couple samples of $S_r$ where $\tilde{X}_r > \hat{x}$ with the combined points generated in stages one and two when $X_r = \tilde{X}_r$. The desired conclusion follows.
\end{proofof}

Next, we would like to formally establish the link between the values \cref{alg:order-oblivious} actually flags, and the quantiles \cref{alg:order-oblivious} accepts. For each value $v$ \cref{alg:order-oblivious} flags, we can consider the associated quantile of $X_{\max}$, $F(v)$. Then:
\begin{lemma} \label{lem:quantile-acceptance}
In the quantile space of $X_{\max}$, the probability that the algorithm flags a  quantile $t$ conditioned on that quantile being sampled by the algorithm is exactly $t\varphi(t)$.
\end{lemma}
\begin{proofof}{Proof}
Throughout the proof, we say quantile ``$t$ is sampled'' if $F^{-1}(t)$ is one of the values added by the \Cref{alg:order-oblivious} to the PPP it generates in an online fashion. We also often refer to quantities such as \[\Pr (\text{flagging quantile } t| t \text{ is sampled}),\]
which correspond to \Cref{alg:order-oblivious}. Here, we are conditioning on an event with zero probability, but this is well-defined for our particular point processes. Formally speaking, these probabilities are \textit{Palm probabilities} related to a \textit{reduced Palm distribution}~\citep{palm1943variation}, but an understanding of this theory is not necessary to follow our argument. As another technical note, the distribution of a scale-invariant PPP conditioned on $t$ being sampled is the same as the distribution of a scale-invariant PPP. Informally, this is because in a PPP, each point in the interval belongs to the PPP independently. Formally speaking, this is a consequence of the \textit{Slivnyak-Mecke theorem} \citep{slivnyak1962some,mecke1967stationare}, but once again, knowing this theorem is not necessary to follow our proof.\footnote{We refer the curious reader to \cite{penrosepoisson}, and a related tutorial by \citet{coeurjolly2017tutorial}, for more details around these mathematical formalities.} 

Now, note that for $t \in (0, k_1]$ we have that $\Pr (\text{flagging quantile } t| t \text{ is sampled})$ is equal to: 
\[ 
    \int_{0}^t\Pr \left(\text{no sampled $ s \in (\tau_0,t)$ }|t \text{ is sampled}, \tau(0) = \tau_0\right)
    dG(\tau_0).\]
(Recall that our intial threshold $\tau(0)$ is a sample from a distribution with CDF $G(t) = t\varphi(t)+\int_{0}^t\varphi(s)ds$.) The probability that no quantile in  $(\tau_0, t)$ is sampled for a scale-invariant PPP is exactly $\frac{\tau_0}{t}$, and $\tau(0)$ is independent of the realization of the PPP. Therefore, the above is equal to:
\begin{align*}
    \int_{0}^t \frac{\tau_0}{t}dG(\tau_0)
     =& 
    \int_{0}^t \frac{\Pr (t>\tau(0)>\tau_0)}{t}d \tau_0
    \cdot\\
    =& 
    \int_{0}^t \frac{ t\varphi(t)-\tau_0\varphi(\tau_0)+\int_{\tau_0}^t\varphi(s)ds}{t}d \tau_0 \\
    =& 
    t\varphi(t) -\int_{0}^t \frac{\tau_0\varphi(\tau_0)}{t}d \tau_0+\frac{\int_{0}^t\int_{\tau_0}^t\varphi(s)ds d \tau_0 }{t}\\
     =& 
    t\varphi(t) -\int_{0}^t \frac{\tau_0\varphi(\tau_0)}{t}d \tau_0+\frac{\int_{0}^t\int_{0}^s\varphi(s)d \tau_0 d s }{t} \\
    =& 
    t\varphi(t) -\int_{0}^t \frac{\tau_0\varphi(\tau_0)}{t}d \tau_0+\int_{0}^t\frac{s\varphi(s)}{t}ds \\
    =& t\varphi(t).
\end{align*}
If we define $t\varphi_0(t) = \Pr (\text{flagging quantile } t| t \text{ is sampled}) $, we observe that $\varphi_0(t) = \varphi(t)$ for $t \in (0, k_1]$. Furthermore, for $t \in [k_1,1]$ we have:
\begin{align*}
t\varphi_0(t) =& \Pr (\text{flagging quantile } t| t \text{ is sampled})\\
=& \Pr \left(\text{not flagging any quantile } s \in (y(t),t)\right | t \text{ is sampled}) \\
=& 1 - \Pr \left(\text{flagging a quantile } s \in (y(t),t) | t \text{ is sampled}\right) \\
=& 1 - \Pr \left(\text{flagging a quantile } s \in (y(t),t)\right) \\
=& 1 - \int_{y(t)}^t \Pr \left(\text{flagging quantile } s | s \text{ is sampled} \right) \frac{ds}{s}  \\
=& 1 - \int_{y(t)}^t \varphi_0(s)ds.
\end{align*}
The  second last equation comes from the fact that the events of flagging any quantile within the interval $(y(t),t)$ are mutually exclusive, and the fact that the rate of the PPP is $\lambda(t) = 1/t$.

Notice that this equation is exactly the same as the equation $\varphi(t)$ satisfies, and we need to show that $\varphi_0(t) = \varphi(t)$ for $t \in (0, 1]$. We already know that $\varphi_0(t) = \varphi(t)$ for $t \in (0, k_1]$. Suppose by way of contradiction that the supremum of $t$ such that $\varphi_0(s) = \varphi(s)$ for all $s \in (0, t]$ is $t_0 < 1$. Now if $y(t_0 + \epsilon) < t_0$, then, for $s < t_0 + \epsilon$, we have
\[s\varphi_0'(s) + 2\varphi_0(s) = \varphi_0(y(s))y'(s) = \varphi(y(s))y'(s) = s\varphi'(s) + 2\varphi(s),\]
so $\frac{d}{ds}\left(s^2(\varphi_0(s) - \varphi(s))\right) = 0$, from which we immediately conclude that $\varphi_0(s) = \varphi(s)$ for $s < t+ \epsilon$. This is a contradiction, which finishes the proof.
\end{proofof}
Since we have now shown that the values \cref{alg:order-oblivious} flags are basically the same as the quantiles accepted by \cref{alg:quantile-selection} (after applying $F$), we must next link the performance of the algorithm to the values it flags. Establishing this link is the goal of \cref{lem:alg-expectation}.

\begin{lemma}\label{lem:alg-expectation}
$\mathbb{E}[\text{ALG}] \geq \mathbb{E}\left[\max_{z \in S'} z\right]-f\,\mathbb{E}\left[\sum_{z \in S'} z - \max_{z \in S'} z \right]$, where $\text{ALG}$ is the realized net reward of the algorithm and $S'$ is the (random) set of values the algorithm flags.  
\end{lemma}
\begin{proofof}{Proof}
First, we define the random variables $\hat{X}_i$ and $\hat{Z}_i$. The variable $\hat{X}_i$ represents the largest value accepted by the algorithm before observing $X_i$, and $\hat{Z}_i$ represents the largest value flagged by the algorithm before observing $X_i$. (If no values have been accepted or flagged, respectively, these variables equal $0$.) For convenience, we set $\hat{X}_0 = \hat{Z}_0 = 0$.

Now, note that:
\[\text{ALG} = \hat{X}_n - f\sum_{i=0}^{n-1}\indic(\hat{X}_{i+1} > \hat{X}_i)\hat{X}_i \geq \hat{X}_n - f\sum_{i=0}^{n-1}\indic(\hat{Z}_{i+1} > \hat{Z}_i)\hat{X}_i~,\]
since if $\hat{Z}_{i+1} = \hat{Z}_i$, then $\hat{X}_{i+1} = \hat{X}_i$. Next, note that in case we have $\hat{Z}_{i+1} > \hat{Z}_i$, then we have:
\begin{align*}
    \hat{X}_{i+1} &= \begin{cases} 
    X_{i+1} & \text{ with probability } \frac{\hat{Z}_{i+1}-\hat{Z}_i}{X_{i+1}-\hat{Z}_i} \\
     \hat{X}_{i}  & \text{ otherwise}~.
   \end{cases}
\end{align*}
Now fix an instantiation of the randomness related to the values of $X_i$, the numbers generated by the algorithm in the PPP, and $\tau(0)$, and call this instantiation $\mathcal{F}$. Conditioned on $\mathcal{F}$, the $X_i$ and $\hat{Z}_i$ are all deterministic. By induction on $j$, we will show that for $j \in[0, n]$  we have that $\mathbb{E}[\hat{X}_j | \mathcal{F}] = \hat{Z}_j$. The base case is true since $\hat{X}_0=\hat{Z}_0 = 0$. Let us now assume that the statement is true for $j = i$. If $\hat{Z}_{i+1} = \hat{Z}_{i}$, we already said that $\hat{X}_{i+1} = \hat{X}_i$, so there is nothing to prove. Otherwise, we have:
    \begin{align*}
        \mathbb{E}[\hat{X}_{i+1} | \mathcal{F}]  =& \mathbb{E}\left[\frac{\hat{Z}_{i+1}-\hat{Z}_i}{X_{i+1}-\hat{Z}_i}X_{i+1} + \left(1-\frac{\hat{Z}_{i+1}-\hat{Z}_i}{X_{i+1}-\hat{Z}_i}\right)\hat{X}_{i}\middle| \mathcal{F}\right] = \frac{\hat{Z}_{i+1}-\hat{Z}_i}{X_{i+1}-\hat{Z}_i}X_{i+1} + \left(1-\frac{\hat{Z}_{i+1}-\hat{Z}_i}{X_{i+1}-\hat{Z}_i}\right)\hat{Z}_{i} = \hat{Z}_{i+1}.
    \end{align*}
We conclude that:
\[
    \mathbb{E}[\text{ALG}|\mathcal{F}] \geq  \hat{Z}_n - f \sum_{i=0}^{n-1} \mathbbm{1}(\hat{Z}_{i+1} > \hat{Z}_{i})\hat{Z}_{i}.
\]
Our results simply follows by taking expectation of the above inequality over $\mathcal{F}$.
\end{proofof}

We are finally ready to establish the competitive ratio of \cref{alg:order-oblivious}:
\begin{theorem}
    [Competitive Ratio of \cref{alg:order-oblivious}]
 \label{thm:alg-CR}
    \cref{alg:order-oblivious} obtains a competitive ratio of at least $\theta_f = \frac{1}{2-\yf(1)}$, and therefore is an optimal competitive online algorithm.
\end{theorem}
\begin{proofof}{Proof}
Note that by \cref{lem:alg-expectation}, it follows that:
\begin{eqnarray*}
\mathbb{E}[\text{ALG}] &\geq& \mathbb{E}[\max_{z \in S'} z]-f\,\mathbb{E}\left[\sum_{z \in S'} z - \max_{z \in S'} z\right]  \\
&=& \mathbb{E}\left[\sum_{z \in S'} z\right]-(1+f)\,\mathbb{E}\left[\sum_{z \in S'} z - \max_{z \in S'} z\right]  \\
&=& \int_{0}^1F^{-1}(t)\Pr (\text{flagging quantile } t| t \text{ is sampled}) \frac{dt}{t}\\ 
&&- (1+f)\int_{0}^1F^{-1}(t) \Pr (\text{unflagging quantile } t| t \text{ is sampled}) \frac{dt}{t}.
\end{eqnarray*}
The last equation follows from the fact that the probability of flagging each quantile can be expressed by conditioning on the event of it being sampled. It follows from \cref{lem:generated-samples} that the chance of there being a sampled quantile $>\tau(t)$ conditional on the quantile $t$ being sampled is exactly $1-\tau(t)$. Using \cref{lem:quantile-acceptance} and \eqref{eq:coverage-property5}, the above is then equal to:
\begin{eqnarray*}
&& \int_{0}^1F^{-1}(t)\left(t\varphi(t)- (1+f) t\varphi(t)(1 - \tau(t))\right) \frac{dt}{t} = \int_{0}^1 \theta_f F^{-1}(t)dt = \theta_f \mathbb{E}[X_{\max}],
\end{eqnarray*}
as desired.
This completes our analysis of \cref{alg:order-oblivious}. 
\end{proofof}
Finally, we note that \Cref{alg:order-oblivious} can be implemented in polynomial time.  The discussion of the running time and this implementation is deferred to \Cref{sec:running-time}.

\section{Conclusion}
\label{sec:conclusion}
In this paper, we introduced and studied a version of the prophet inequality problem in which cancellation is possible by paying a penalty $f$ times the value of the discarded item. We constructed an order-agnostic and optimally competitive algorithm for this problem for all $f\geq 0$, and calculated its competitive ratio through the solution to a particular differential equation, which we derived explicitly. We also use the same differential equation to find worst-case instances for our problem.
We finally obtain explicit solutions for the differential equation for $f\geq \frac13$, providing explicit (simpler) forms for both the algorithm and its competitive ratio in this regime.

Our algorithm allowing cancellations, and taking into account distributional information on the input, obtains a better competitive ratio than the algorithms in the classical prophet inequality setting, and the algorithms in the setting with adversarial inputs. As $f \to 0$, our algorithm's competitive ratio goes to $1$ significantly faster than the competitive ratio of algorithms with adversarial input. Furthermore, our algorithm, unlike the optimal online algorithm, does not need to know the order in which the random variables will arrive in the future.

\paragraph{Future research:} There are several interesting directions for future research. Firstly, is there a closed-form formula for the optimal competitive ratio $\alpha(f)$ for every $f>0$? We conjecture that this is not the case, but perhaps there is a more explicit description of the behavior of $\alpha(f)$ than the one provided in this paper. Secondly, future research could extend our model to the multi-item prophet inequality or, more broadly, to the prophet inequality with a matroid constraint. Thirdly, another interesting direction is to adopt our model to the prophet inequality with a matching constraint, or more generally to the prophet inequality for combinatorial auctions. We believe that this model, and our technical framework, might find other interesting applications in revenue management.




\newpage
\begin{APPENDIX}{}
\section{Omitted Proof of \cref{lem:generated-samples}}
\label{sec:generated-samples}
First of all, we more formally describe the set of samples and peaks generated by \cref{alg:order-oblivious} as the set $S$ returned by \cref{alg:PPP-generation}.

\begin{algorithm}[ht]
\raggedright
\caption{PPP generation}
\label{alg:PPP-generation}
Initialize $\hat{x} \leftarrow 0$ {\small\color{\commentcolor}\tcc{$\hat{x}$ tracks the peak value the algorithm has observed at any time}}
Initialize $S \leftarrow \emptyset$ {\small\color{\commentcolor}\tcc{$S$ is the PPP we are creating}}
\For{$i\in[1:n]$}{
\If{$X_i > \hat{x}$}{
Generate a set of samples $S_i$ from a PPP on $(\hat{x},X_i]$ with intensity measure $-\sum_{r=i}^n\log F_r(t)$

Update $S \leftarrow S \cup \{X_i\} \cup S_i$.

Update $\hat{x} \leftarrow X_i$

}
}
Return $S$
\end{algorithm}

Second, we have the following simple algebraic proposition, useful in the proof of \cref{lem:generated-samples}.

\begin{proposition}\label{prop:nestedGeometric}
For all $m,n \geq 0$ and formal variables $x_{r, j}$ (for $r \in [1:n]$ and $j \in [1:m]$) we have:
    \begin{align*}
        \sum_{0 \leq i_1 \leq \cdots \leq i_m \leq n} \prod_{j=1}^m \left(\left(1-\indic (i_{j+1} > i_j) \propPar_{i_j+1,j}\right) \prod_{r=1}^{i_j}\propPar_{r,j} \right) = 1,   
    \end{align*}
with the interpretation that $i_{m+1} = n$.
\end{proposition}
\begin{proofof}{Proof}
We prove this by induction on $m$. The base case, $m=0$, holds vacuously. Suppose the statement is true for $m = k -1$. To prove the statement for $m = k$, notice:
\begin{align*}
        & \sum_{0 \leq i_1 \leq \cdots \leq i_k \leq n} \prod_{j=1}^k \left(\left(1-\indic (i_{j+1} > i_j) \propPar_{i_j+1,j}\right) \prod_{r=1}^{i_j}\propPar_{r,j} \right) \\ 
        =& \sum_{i_k = 0}^n \left( \left(\left(1-\indic (n > i_k) \propPar_{i_k+1,k}\right) \prod_{r=1}^{i_k}\propPar_{r,k} \right) \sum_{0 \leq i_1 \leq \cdots \leq i_k} \prod_{j=1}^{k-1} \left(\left(1-\indic (i_{j+1} > i_j) \propPar_{i_j+1,j}\right) \prod_{r=1}^{i_j}\propPar_{r,j} \right) \right) \\ 
        =& \sum_{i_k = 0}^n \left(\left(1-\indic (n > i_k) \propPar_{i_k+1,k}\right) \prod_{r=1}^{i_k}\propPar_{r,k} \right) \cdot 1 \\
        =& \sum_{i_k = 0}^n \prod_{r=1}^{i_k}\propPar_{r,k} - \sum_{i_k = 1}^n \prod_{r=1}^{i_k}\propPar_{r,k} = 1,
\end{align*}
where the penultimate equation comes from the fact that the inner summation is equal to $1$ using the induction hypothesis with $n = i_k$ and $ m= k-1$.
\end{proofof}
We are ready to prove \cref{lem:generated-samples}, that when the points in $S$ are converted to quantiles of $X_{\max}$ we obtain a PPP on $(0, 1]$ with intensity measure $\Lambda(t) = -\log t$:

\begin{proofof}{Proof of \cref{lem:generated-samples}}
First, note that by the mapping theorem for PPPs \citep{kingman-poisson-processes}, this is the same as showing that the distribution of samples generated during a run of \cref{alg:PPP-generation} is exactly the same as the distribution of a PPP on $(0, \infty)$ with intensity measure $\Lambda((t, \infty)) = -\log F(t)$. Now suppose we are given $0 < s_1 < t_1 < s_2 < t_2 < \cdots < s_k < t_k$. We can calculate the probability that a PPP with intensity measure $-\log F(x)$ generates no points in $B = (s_1, t_1]\cup \cdots \cup (s_k, t_k]$ as follows:
\[\Pr(\text{no points in } B) = \prod_{j=1}^ke^{-\log F(t_j) + \log F(s_j)} = \prod_{j=1}^k\frac{F(s_j)}{F(t_j)}.\]

Now by Rényi's theorem for PPPs \citep{kingman-poisson-processes}, it suffices to show that the probability $S$ contains no points in $B$ is exactly $\prod_{j=1}^k\frac{F(s_j)}{F(t_j)}$.\footnote{There is one little wrinkle--as stated in the cited book, the result only applies to intensity measures that are finite on bounded sets. But we can generate a scale-invariant PPP from a homogenous PPP by applying the map $t \to e^{-t}$, so the result we need follows from the result for the homogenous case.} The probability that a PPP on $(x',x]$ with $x' < s < t < x$ and intensity measure $-\sum_{r=i}^n\log F_r(t)$ generates no points in $(s, t]$ is exactly $\prod_{r = i}^n \frac{F_r(s)}{F_r(t)}$. Hence:
\begin{align}\label{eq:noPointinB}
    \Pr(S \cap B = \emptyset) = \sum_{0 \leq i_1 \leq \cdots \leq i_k \leq n} \Pr \left[\max(X_1,...,X_{i_j}) < s_j < t_j < X_{i_j+1} \;\forall j \right]\prod_{j=1}^k \prod_{r=i_j+1}^n \frac{F_r(s_j)}{F_r(t_j)},
\end{align}
and we would like to show this equals $\prod_{j=1}^k\prod_{r = 1}^n \frac{F_r(s_j)}{F_r(t_j)}$. Multiplying the right hand side of \ref{eq:noPointinB} by $\prod_{j=1}^k \prod_{r=1}^n \frac{F_r(t_j)}{F_r(s_j)}$, we get:
\begin{align*}
    & \prod_{j=1}^k \prod_{r=1}^n \frac{F_r(t_j)}{F_r(s_j)} \sum_{0 \leq i_1 \leq \cdots \leq i_k \leq n} \Pr \left[\max(X_1,...,X_{i_j}) < s_j < t_j < X_{i_j+1} \;\forall j \right]\prod_{j=1}^k \prod_{r=i_j+1}^n \frac{F_r(s_j)}{F_r(t_j)} \\
     =& \prod_{j=1}^k \prod_{r=1}^n \frac{F_r(t_j)}{F_r(s_j)} \sum_{0 \leq i_1 \leq \cdots \leq i_k \leq n} \prod_{j=1}^k  \left(\left(1-\indic(i_{j+1} > i_j)\frac{F_{i_j+1}(t_j)}{F_{i_j+1}(s_{j+1})} \right) \left(\prod_{r=i_{j-1} +1}^{i_j} F_r(s_j) \right) \right)
     \prod_{j=1}^k \prod_{r=i_j+1}^n \frac{F_r(s_j)}{F_r(t_j)}\\
     =& \sum_{0 \leq i_1 \leq \cdots \leq i_k \leq n} \prod_{j=1}^k \left(1-\indic(i_{j+1} > i_j)\frac{F_{i_j+1}(t_j)}{F_{i_j+1}(s_{j+1})} \right)  \left(\frac{\prod_{r=1}^{i_j} F_r(t_j)}{\prod_{r=1}^{i_{j-1}} F_r(s_j)} \right)\\
     =& \sum_{0 \leq i_1 \leq \cdots \leq i_k \leq n} \prod_{j=1}^k \left(1-\indic(i_{j+1} > i_j)\frac{F_{i_j+1}(t_j)}{F_{i_j+1}(s_{j+1})} \right) \left(\frac{\prod_{r=1}^{i_j} F_r(t_j)}{\prod_{r=1}^{i_{j}} F_r(s_{j+1})} \right) = 1,
\end{align*}
where the last equation follows directly from \cref{prop:nestedGeometric} and setting $\propPar_{r,j} = \frac{F_r(t_j)}{F_r(s_{j+1})}$. Hence:
\begin{align*}
    \Pr(S \cap B = \emptyset) = \prod_{j=1}^k\prod_{r = 1}^n \frac{F_r(s_j)}{F_r(t_j)},
\end{align*}
as desired.
\end{proofof}
\end{APPENDIX}



\newpage
\setlength{\bibsep}{0.0pt}
\bibliographystyle{plainnat}
\OneAndAHalfSpacedXI
{\footnotesize
\bibliography{refs}}

\renewcommand{\theHchapter}{A\arabic{chapter}}
\renewcommand{\theHsection}{A\arabic{section}}
\newpage
\ECSwitch
\ECDisclaimer
\section{Further Related Work}
\label{sec:further-related}
\paragraph{Buyback and recourse in online allocations.} The study of the single-item buyback setting under adversarial arrivals originates from the work of \citet{babaioff2008selling}, which established the optimal competitive ratio for deterministic integral algorithms using a modified greedy approach. The follow-up work of \citet{ashwinkumar2009randomized} showed that the optimal competitive ratio can be achieved via an an elegant correlated randomization step combined with the greedy algorithm. More recently, \citet{ekbatani2023online,ekbatani2023onlineEC} developed an optimal primal-dual fractional algorithm that, together with a lossless randomized rounding, also yields an optimal competitive randomized algorithm. Beyond the single-item case, the buyback setting has been studied for general matroids~\citep{babaioff2008selling,ashwinkumar2009randomized}, online matchings~\citep{ekbatani2023online}, and matroid intersections~\citep{badanidiyuru2011buyback}. It can also be viewed as a special case of unconstrained online allocation with a submodular (though non-monotone) combinatorial valuation \citep{rubinstein2017combinatorial}, where only an $\mathcal{O}(1)$-competitive algorithm is currently known. The special case of the buyback model with $f=0$ is also studied in the literature on online resource allocation with free-disposal~\citep{feldman2009online,devanur2016whole,feng2024batching}. Another related problem is prophet inequalities with overbooking~\citep{ezra2018prophets,assaf2000simple}, which differ by allowing the decision maker to accept more variables than their capacity, whereas here we consider a single-item capacity and a linear cancellation cost.  Although ``recourse'' has been studied in various other contexts---e.g., \cite{azar2003combining,buchbinder2014online,canetti1995bounding,han2014online,chaudhuri2009online,constantin2009online}---our focus is specifically on the classic prophet inequality with buyback.


\paragraph{Other related models in dynamic revenue management.}
Our work is also conceptually related to dynamic revenue management, where offline supplies have soft inventory constraints. Examples include dynamic revenue management with cancellation (on the demand side) and overbooking~\citep[e.g.,][]{rot-71,ET-10,ABFN-13,DKX-19,FZ-21}, as well as dynamic inventory control problems~\citep[e.g.,][]{LRS-07,mos-10,BM-13,CS-19,AJ-22,QSW-22}. Another related line of research is on revenue management with \emph{callable products}, first studied by \citet{GKP-08} in the context of airlines. They proposed a model allowing low-fare tickets to be recalled (bought back) at a certain price once high-fare customers arrive. \citet{GL-18} extended this model to multi-fare consumers, while \citet{GL-20} examined online assortment with callable products and characterized optimal policies for different choice models.

\paragraph{Prophet inequalities.} Besides the works described, other related noteworthy results in the prophet inequality literature include refined bounds for special cases such as i.i.d.\ values or independent arrivals in random order~\citep{correa2017posted,abolhassani2017beating,esfandiari2017prophet}, as well as extensions to more general feasibility environments and valuations. These include multi-unit settings~\citep{hajiaghayi2007automated,yan2011mechanism,alaei2014bayesian,chawla2024static}, matroids~\citep{kleinberg2012matroid,anari2019nearly}, matchings~\citep{alaei2012online,papadimitriou2021online}, general downward-closed environments~\citep{rubinstein2016beyond}, and combinatorial valuations~\citep{rubinstein2017combinatorial}. For a more comprehensive overview, see \citet{lucier2017economic} and \citet{correa2019recent}.

\paragraph{Online contention resolution and LP duality.} As noted, our generalized flow perspective closely relates to a generalization of the online contention resolution scheme (OCRS) problem under costly recourse. The OCRS problem traces back to the seminal works of \citet{feige2006maximizing} and \citet{vondrak2011submodular} in the offline setting, and \citet{feldman2016online} in the online setting. Since then, OCRS has been extensively studied within both computer science and operations research. For instance, \citet{alaei2014bayesian} developed an (almost) optimal OCRS for uniform matroids, later refined and extended by \citet{jiang2022tight,dinev2023simple,feng2022near}. In a different direction, \citet{lee2018optimal} established a duality between prophet inequalities (with respect to ex-ante relaxation) and OCRS, conceptually linked to our LP-based approach and other LP approaches in the literature on prophet inequalities or obtaining competitive ratios against fluid approximations in Bayesian online allocation~\citep{adelman2004price,alaei2012online,alaei2014bayesian,jiang2022tightness}. Moreover, OCRS-based techniques have found diverse applications in prophet inequalities and Bayesian mechanism design. Classic examples include sequential posted pricing~\citep{yan2011mechanism,hartline2009simple,chawla2010multi}, combinatorial auctions~\citep{dutting2020prophet,feige2015unifying}, correlations in prophet inequalities~\citep{immorlica2020prophet}, matchings~\citep{ezra2022prophet,pollner2022improved}, and prophet secretary~\citep{esfandiari2017prophet,adamczyk2018random}.

\paragraph{Generalized flow with leakage} Generalized flows have been studied from the perspective of network flow~\citep{bhaumik1974optimum,glover1973equivalence,glover1978generalized} and a generalized max-min relation has been found~\citep{pulat1989relation}. We do not appeal to this result directly, as it would just result in moving back to our primal LP.

\paragraph{Optimum online policies.} A significant portion of the prophet inequality literature focuses on simple, threshold-based algorithms and their competitive ratios against the prophet benchmark. In contrast, recent work has examined the optimal online policy from a computational perspective: \citet{anari2019nearly} established a PTAS for laminar matroids, while \citet{segev2021efficient} developed a PTAS for free-order prophet inequalities and stochastic probing, among others. Nevertheless, finding the optimal online policy can be PSPACE-hard in certain combinatorial settings (e.g., matchings~\citep{papadimitriou2021online}). Additional research explores how knowledge of the order impacts performance versus optimal online policy~\citep{niazadeh2018prophet,ezra2022significance,feng2021two,braverman2025new,braun2024approximating}. In our setting, the state space for the dynamic program is polynomial in size, allowing for an exact solution via backward induction.

\section{Solving the Differential Equation}
\label{sec:diff-eq-soln}
Given parameters $f>0$ and $\theta \in [\frac{1}{2}, \frac{1+f}{1+2f}]$, let us denote $k_1 = \cf = \frac{f}{1+f}$.
We aim to solve the following differential equation. 

\begin{itemize}
\item For $t \in [2-\frac{1}{\theta}, 1]$, $y(t)$ is given explicitly as
\begin{equation}
\label{eq:init-segment}
 y(t) = 1 - \frac{1}{\theta} + \kk + \frac{(t-\kk)^2}{1-\kk} 
\end{equation}
 Note that $y(1) = 2 - \frac{1}{\theta}$.
\item For every $t<1$, if $y(t) \geq k_1$ and $y'(t)$ exists, then
\begin{equation}
\label{eq:diff-eq}
y'(y(t)) = \frac{y(t) - \kk}{t - \kk} \left( 2 - \frac{y(t)}{y'(t) (t - \kk)} \right)
\end{equation}
\end{itemize}

We remark that this differential equation demands that for points approaching $2-\frac{1}{\theta}$ from the left, the derivative $y'(t)$ (given by (\ref{eq:diff-eq})) is different from points approaching $2-\frac{1}{\theta}$ from the right (given by (\ref{eq:init-segment})). Hence the one-sided derivatives are not equal, $y'_+(2-\frac{1}{\theta}) \neq y'_-(2-\frac{1}{\theta})$ and $y(t)$ is not differentiable at $t = 2-\frac{1}{\theta}$. More generally, $y(t)$ is not going to be differentiable at points in the form $r_1 = 2-\frac{1}{\theta}, r_2 = y(r_1), r_3 = y(r_2)$, etc. At these points, we can interpret (\ref{eq:diff-eq}) as holding for the left and right derivatives separately; this is implied by taking a one-sided limit in the equation (\ref{eq:diff-eq}). This will not cause any substantial issues but we need to keep in mind that the derivative changes discontinuously at these points.

Since the differential equation is in the form above, where $y'(y(t))$ depends only on $y(t)$ and $y'(t)$, and not on the value $y(y(t))$, we can write the solution (at least on some interval) explicitly as an integral. As above, let $r_0 = 1$ and $r_1 = y(r_0) = 2 - \frac{1}{\theta}$. For $t \in [r_1, r_0]$, we are given the initial segment
$$ y(t) = 1 + \kk - \frac{1}{\theta} + \frac{(t-\cf)^2}{1-\cf} $$
and we can also write
$$ y(y(t)) = y(r_1) - \int_{y(t)}^{r_1} y'(x) dx = r_2 - \int_{t}^{r_0} y'(y(x)) y'(x) dx $$
which determines $y(t)$ for $t \in [r_2, r_1]$ where $r_2 = y(r_1)$.
Note that the integrand is provided by the differential equation (\ref{eq:diff-eq}):
$$ y'(y(x)) y'(x) = \frac{y(x) - \kk}{x - \kk} \left( 2 y'(x) - \frac{y(x)}{x - \kk} \right) $$
Since $x \in [r_1, r_0]$ hence $y(x)$ and $y'(x)$ is given explicitly by (\ref{eq:init-segment}) and we can calculate $y(y(t))$.

Inductively, as long as $y(t)$ is increasing and differentiable for $t \in (r_{k+1}, r_k)$, we  define $r_{k+2} = y(r_{k+1}) < y(r_k) = r_{k+1}$ and 
$$ y(y(t)) = r_{k+2} - \int_{t}^{r_k} \frac{y(x) - \kk}{x - \kk} \left( 2 y'(x) - \frac{y(x)}{x - \kk} \right) dx $$
for $t \in [r_{k+1}, r_{k}]$, which defines $y(t)$ for $t \in [r_{k+2}, r_{k+1}]$. We can continue inductively (going backwards in terms of $t$) as long as $y(t)$ is increasing and $y(t) \geq \kk$.

\begin{figure}
\centering
\begin{tikzpicture}[scale=1.2]

  \begin{axis}[
    domain=0:1, 
    samples=300, 
    axis lines=middle, 
    enlargelimits, 
    clip=false,
  ymin=0.055, ymax=0.61,
  label style={font=\footnotesize},   
  tick label style={font=\footnotesize},
  ]

    \addplot [gray, thick, dashed] [domain=0.2:0.5] {(x-0.25)*(x-0.25) - 0.05 };
    \addplot [gray, thick, dashed] [domain=0.5:0.75] {(x-0.25)*(x-0.25) + 0.1*(x-0.5) - 0.05 };
    \addplot [gray, thick, dashed] [domain=0.75:1.00] {0.275 + 1.2*(x-0.75) - 0.05 };

    \addplot [gray, thick , dashed] [domain=0.2:0.5] {(x-0.25)*(x-0.25) - 0.025 };
    \addplot [gray, thick, dashed ] [domain=0.5:0.75] {(x-0.25)*(x-0.25) + 0.1*(x-0.5) - 0.025 };
    \addplot [gray, thick, dashed ] [domain=0.75:1.00] {0.275 + 1.2*(x-0.75) - 0.025 };

    \addplot [thick] [domain=0.25:0.5] {(x-0.25)*(x-0.25) };
    \addplot [thick] [domain=0.5:0.75] {(x-0.25)*(x-0.25) + 0.1*(x-0.5) };
    \addplot [thick] [domain=0.75:1.00] {0.275 + 1.2*(x-0.75) };

    \addplot [gray, thick, dashed ] [domain=0.20:0.5] {(x-0.25)*(x-0.25) + 0.025 };
    \addplot [gray, thick, dashed ] [domain=0.5:0.75] {(x-0.25)*(x-0.25) + 0.1*(x-0.5) + 0.025 };
    \addplot [gray, thick, dashed ] [domain=0.75:1.00] {0.275 + 1.2*(x-0.75) + 0.025 };

    \addplot [gray, thick, dashed ] [domain=0.2:0.3] {4*(x-0.3)*(x-0.3) + 0.2*(x-0.3) + 0.05 };
    \addplot [gray, thick, dashed ] [domain=0.3:0.8] {(x-0.3)*(x-0.3) + 0.2*(x-0.3) + 0.05 };
    \addplot [gray, thick, dashed ] [domain=0.8:1.00] {0.4 + 1.1*(x-0.8 };

    \addplot [gray, thick, dashed ] [domain=0.25:0.4] {10*(x-0.4)*(x-0.4) + 0.1*(x-0.4) + 0.12};
    \addplot [gray, thick, dashed ] [domain=0.4:0.6] {2*(x-0.4)*(x-0.4) + 0.1*(x-0.4) + 0.12 };
    \addplot [gray, thick, dashed ] [domain=0.6:1.00] {0.22 + 1.05*(x-0.6 };

    \addplot [gray, thick, dashed ] [domain=0.35:0.5] {20*(x-0.5)*(x-0.5) + 0.3*(x-0.5) + 0.2 };
    \addplot [gray, thick, dashed ] [domain=0.5:0.6] {3*(x-0.5)*(x-0.5) + 0.3*(x-0.5) + 0.2 };
    \addplot [gray, thick, dashed ] [domain=0.6:1.00] {0.26 + 1.0*(x-0.6 };

    \draw (60, 60) -- (60, 40);
    \node at (60, -65) {$\cf$};
\draw[->, thick]
  (axis cs:1.03, 0.5) -- (axis cs:1.03, 0.67)
  node[midway, right] {\footnotesize{$\theta~~\text{increases}$}};
  \draw[-, thick, draw=cornellred]
  (axis cs:0.26, -0.08) -- (axis cs:0.26, 0.6);


    \end{axis}

 \node at (0.4,5.8) {$y(t)$};
  \node at (6.5,0.1) {$t$};
  \node at (6,2.9) {$f=1/3$};
  \node at (6,2.1) {$\cf=\tfrac{f}{f+1}=1/4$};
  \node at (6,1.3) {$\theta=?$};
  
\end{tikzpicture}
 \caption{\centering Different possible outcomes of the construction, for different values of $\theta$}
    \label{fig:diff-equ}
\end{figure}

\paragraph{Classification of outcomes.}
Depending on the choice of $\theta$, this process could terminate in several possible ways:
\begin{itemize}
\item The solution $y(t)$ is defined for all $t \in [\kk, 1]$, such that $y(\kk) \leq 0$, $y(t)$ is continuous and increasing on $[\kk,1]$, 
and there is a finite sequence of ``breakpoints'' $1 = r_0 > r_1 > r_2 > \ldots > r_n$, such that $r_{i+1} = y(r_i)$, $r_n < \kk$.

\item The solution is not in the form above, i.e. either it can be defined for $t \in [\kk,1]$ but not in a continuous increasing way, or $y(\kk) > 0$, or the solution breaks down at some point and cannot be extended to $[\kk,1]$.
\end{itemize}

There are several reasons why the solution might break down: At some point, we might obtain $y'(t) = 0$ (note that we divide by $y'(t)$ in the differential equation). 
Or possibly, there is an infinite sequence of breakpoints $r_{i+1} = y(r_i) < r_i$ which never reaches $r_n < \kk$. At this point we do not aim to classify the reasons why the solution might break down; all these cases are considered a failure.

The first case is what we call a ``complete successful solution''. As we prove later, this means that there is an algorithm for the buyback problem achieving a factor of $\theta$.

\begin{defn}
\label{def:complete-successful-sol}
We call $y(t)$ a {\em complete successful solution}, if $y(t)$ is continuous and strictly increasing on $[\kk, 1]$, with $y(\kk) \leq 0$, satisfying (\ref{eq:init-segment}) for $t \in [2-\frac{1}{\theta},1]$  and (\ref{eq:diff-eq}) whenever $y(t) \geq \kk$, except for a finite number $n$ of breakpoints $r_0=1, r_{i+1} = y(r_i) < r_i$ such that $r_n < \kk$.
\end{defn}

A formally weaker notion (which we show in fact to be equivalent), is the following.

\begin{defn}
\label{def:successful-sol}
We call a solution $y(t)$ {\em successful}, if $y(t)$ is continuous and strictly increasing on $[t_0, 1]$, with $t_0 \geq \kk$, $y(t_0) \leq 0$, satisfying (\ref{eq:init-segment}) for $t \in [2-\frac{1}{\theta},1]$  and (\ref{eq:diff-eq}) whenever $y(t) \geq \kk$, except for a finite number $n$ of breakpoints $r_0=1, r_{i+1} = y(r_i) < r_i$ such that $r_n < t_0$.
\end{defn}



\subsection{Characterization of successful solutions}

The key lemmas that help us distinguish between success and failure are the following.

\begin{lemma}
\label{lem:success-criterion}
Let $\theta \in [\frac12, \frac{1+f}{1+2f}]$ be such that on every interval $[t_1,1]$ where the solution $y(t)$ can be defined, if $y'(t)$ is defined then
$$ y'(t) \geq \frac{y(t)}{t-\kk}. $$
Then $y(t)$ can be defined on $[\kk,1]$ so that it is a complete successful solution. In particular, the recurrence $r_0=1$, $r_{i+1} = y(r_i)$ reaches $r_n \leq \kk$ in $n \leq 1 + \frac{1}{f}$ steps.
\end{lemma}

We note that we always have $y'(t) \geq \frac{y(t)}{t-\kk}$ on the initial segment, $t \in (r_1,r_0)$, because
$$ y'(t) (t-\kk) - y(t) = \frac{(t-\kk)^2}{1-\kk} - 1 - \kk + \frac{1}{\theta} \geq -\frac{1+2f}{1+f} + \frac{1}{\theta} \geq 0. $$
So the assumption is always satisfied at least for $t_1 = r_1 = 2-\frac{1}{\theta}$. However, the solution can be typically extended to $t < t_1$, and that's where the assumption plays a non-trivial role.

\begin{proofof}{Proof of Lemma~\ref{lem:success-criterion}}
Let us inductively define a solution $y(t)$, starting from the initial segment $[r_1,r_0]$ where $r_0 = 1$, $r_1 = y(1) = 2 - \frac{1}{\theta}$. If $2-\frac{1}{\theta} \leq \kk$, we are done since the initial segment already covers the interval $[\kk,1]$ and we have a complete successful solution. (This happens for $f \geq 1$ and $\theta = \frac{1+f}{1+2f}$.)

Otherwise, as long as $y(t)$ satisfies $y'(t) \geq \frac{y(t)}{t-\kk}$ on an interval $(r_{i+1}, r_i)$, $r_{i+1} = y(r_i) > \kk$, we can use the differential equation (\ref{eq:diff-eq}) to extend the solution to another interval $[r_{i+2}, r_{i+1}]$, where $r_{i+2} = y(r_{i+1})$. Note that the assumption together with $(\ref{eq:diff-eq})$ implies that $y'(y(t)) \geq \frac{y(t) - \kk}{t - \kk} > 0$, so for $y(t) > \kk$ we obtain $y'(y(t)) > 0$, and the function is strictly increasing. By induction, we can continue this process and define $y(t)$ on $[r_{i+1}, r_i]$ as long as $r_i > \kk$. 

Inductively, we prove furthermore that $\frac{y(t)}{t-\kk}$ is non-decreasing in $t$, as long as $t>\kk$. On each interval $(r_{i+1}, r_i)$, we can differentiate with respect to $t$:
$$ \frac{d}{dt} \left( \frac{y(t)}{t-\kk} \right) = \frac{y'(t) (t-\kk) - y(t)}{(t-\kk)^2} \geq 0 $$
by assumption. Therefore $\frac{y(t)}{t-\kk}$ is non-decreasing for every $t>\kk$ where the solution can be defined. In particular, we have 
\begin{equation}
\label{eq:slope}
\hspace{100pt}   \frac{y(t)}{t-\kk} \leq \frac{y(1)}{1-\kk} = (1+f) \left( 2-\frac{1}{\theta} \right) \leq 1 
\end{equation}
using the assumption $\theta \leq \frac{1+f}{1+2f}$ in the last inequality. Therefore, for every $t$ where the solution can be defined, we have $y(t) \leq t - \kk$. In at most $\lfloor \frac{1}{\kk} \rfloor = \lfloor \frac{1+f}{f} \rfloor$ steps, we reach a point $r_n \leq \kk$ and we terminate. The union of intervals $[r_n, r_{n-1}], \ldots, [r_1,r_0]$ covers $[\kk,1]$ and hence the solution is well-defined and strictly increasing on $[\kk,1]$. 

Finally, we prove that $y(\kk) \leq 0$:  If $y(\kk) > 0$, then by continuity of $y(t)$, $\lim_{t \to \kk^+} \frac{y(t)}{t-\kk} = +\infty$. However, this contradicts the fact that $\frac{y(t)}{t-\kk}$ is non-decreasing for $t>\kk$.
\end{proofof}

In other words, if we try to extend the solution as far as possible, then the solution can fail only if at some point, $y'(t) < \frac{y(t)}{t-\kk}$. 

The next lemma shows a converse to this in some sense: If a solution is successful, then it must indeed satisfy $y'(t) \geq \frac{y(t)}{t-\kk}$; otherwise if at some point we obtain $y'(t) < \frac{y(t)}{t-\kk}$, then the solution fails because the curve at this point is aiming for a value $y(\kk) > 0$ and cannot reach $y(\kk) \leq 0$ anymore. This is essentially because if $y(t)$ is successful, then it must be convex, as the lemma also states. This will be crucial in the analysis.

\begin{lemma}
\label{lem:success-properties}
If $y(t)$ is a successful solution on $[t_0, 1]$, $t_0 \geq \kk$,
then $y(t)$ can be extended to a complete successful solution on $[\kk,1]$, which is moreover convex, satisfies $y(t) < t$ for all $t \in [\kk,1]$, and
$$ y'(t) \geq \frac{y(t)}{t - \kk} $$
for all $t \in (\kk, 1)$ where $y(t) > 0$ and $y'(t)$ exists.
\end{lemma}

Note that this lemma implies that a successful solution for $\theta$ exists if and only if a complete successful solution exists; the notions only differ in the domain where $y(t)$ is defined.

The proof of Lemma~\ref{lem:success-properties} is more involved. We proceed in a sequence of simpler propositions.

\subsection{Monotonicity and convexity properties}

The first proposition states that the assumption of monotonicity and finitely many breakpoints imply that $y(t) < t$.

\begin{proposition}
\label{prop:y-monotonicity}
If $y(t)$ is an increasing function on $[t_0,1]$, with a finite number of breakpoints $r_0 = 1, r_{i+1} = y(r_i) < r_i$ such that $r_n < t_0$, then $y(t) < t$ for all $t \in [t_0,1]$.
\end{proposition}

\begin{proofof}{Proof}
Since $1 > r_1 > r_2 > \ldots t_0 > r_n$, every $t \in [t_0,1]$ is in some interval $(r_{i+1}, r_i]$, and by the monotonicity of $y(t)$,
$$ y(t) \leq y(r_i) = r_{i+1} < t.$$
\end{proofof}

The second proposition states that the solution has monotonic derivatives, when we compare them at $t_2$ and $t_1 = y(t_2) < t_2$.

\begin{proposition}
\label{prop:disc-convex}
If a solution $y(t)$ is defined on $[t_0, 1]$, $\kk \leq t_0 < t_1 < t_2 \leq 1$,  $t_1 = y(t_2)$, $y$ is differentiable at $t_2$, and $y'(t_2) > 0$, then it's also differentiable at $t_1$ and
$$ y'(t_1) \leq y'(t_2). $$
\end{proposition}

\begin{proofof}{Proof}
Since $t_2 > \kk$ and $t_1 = y(t_2) \geq \kk$, we have $\frac{t_1}{t_2 - \kk} > 0$; we also have $y'(t_2) > 0$ and we can write
$$ y'(t_2) = \frac{1}{1-\lambda} \cdot \frac{t_1}{t_2 - \kk} $$
for some $\lambda < 1$. Using the differential equation,
$$ y'(t_1) = y'(y(t_2)) 
= \frac{y(t_2) - \kk}{t_2 - \kk} \cdot \left( 2 - \frac{y(t_2)}{y'(t_2) (t_2 - \kk)} \right)
= \frac{t_1 - \kk}{t_2 - \kk} \cdot \left( 2 - \frac{t_1}{y'(t_2) (t_2 - \kk)} \right). $$
Now we use the expression $y'(t_2) = \frac{1}{1-\lambda} \frac{t_1}{t_2 - \cf}$, to obtain
$$ y'(t_1) = \frac{t_1 - \kk}{t_2 - \kk} (2 - (1-\lambda)) \leq (1+\lambda) \frac{t_1}{t_2 - \kk}. $$
Finally, we observe that $1+\lambda \leq \frac{1}{1-\lambda}$ for all $\lambda < 1$, and hence $$ y'(t_1) \leq \frac{1}{1-\lambda} \cdot \frac{t_1}{t_2 - \kk} = y'(t_2).$$
\end{proofof}

The third proposition states that as long as $y$ is twice differentiable and convex at $t$, and it satisfies the criterion $y'(t) \geq \frac{y(t)}{t - \kk}$, then $y$ is also increasing and convex at $y(t)$. 

\begin{proposition}
\label{prop:second-diff}
If a solution $y(t)$ is defined on $[t_0, 1]$, $\kk \leq t_0 < y(t) < t < 1$, $y$ is twice differentiable at $t$, and 
$$ y'(t) \geq \frac{y(t)}{t - \kk}, \ \ \ y''(t) \geq 0, $$
then $y$ is also twice differentiable at $y(t)$, $y'(y(t)) > 0$ and $y''(y(t)) > 0$.
\end{proposition}

\begin{proofof}{Proof}
Note that our assumption imply in particular that $y(t) > \kk$ and $y'(t) > 0$. 
By (\ref{eq:diff-eq}), we further have
$$ y'(y(t)) = \frac{y(t) - \kk}{t - \kk} \left( 2 - \frac{y(t)}{y'(t) (t -\kk)} \right) \geq \frac{y(t) - \kk}{t - \kk}  > 0. $$
Also, the right-hand side is differentiable (since $y$ is twice differentiable at $t$), and so $y$ is twice differentiable at $y(t)$. We have
\begin{eqnarray*}
\frac{d}{dt} y'(y(t)) & = & \left( \frac{y'(t)}{t - \kk} - \frac{y(t) - \kk}{(t - \kk)^2} \right) \left( 2 - \frac{y(t)}{y'(t) (t - \kk)} \right) \\
& & + \frac{y(t) - \kk}{t - \kk} \left( -\frac{y'(t)}{y'(t) (t - \kk)} + 
 \frac{y(t) y''(t)}{(y'(t))^2(t - \kk)} + \frac{y(t)}{y'(t) (t-\kk)^2}  \right) \\
& = & \frac{2 y'(t)}{t-\kk} - \frac{y(t)}{(t-\kk)^2} - \frac{y(t) - \kk}{(t-\kk)^2}
\left( 3 - \frac{2 y(t)}{y'(t) (t-\kk)} - \frac{y(t) y''(t)}{(y'(t))^2} \right).
\end{eqnarray*}
Let us denote $\alpha = y'(t)$ and $y(t) = \alpha (t - \ell)$ for some $\ell<t$ (this is possible since we assume $\alpha = y'(t)>0$ and $y(t)>0$). Note also that $\alpha = y'(t) \geq \frac{y(t)}{t-\kk} = \frac{\alpha(t-\ell)}{t-\kk}$ which means that $\ell \geq \kk$. 
Then we can write
\begin{eqnarray*}
\frac{d}{dt} y'(y(t)) & = & \frac{2\alpha}{t-\kk} - \frac{\alpha(t-\ell)}{(t-\kk)^2} - \frac{\alpha(t-\ell)-\kk}{(t-\kk)^2} \left( 3 - \frac{2(t-\ell)}{t - \kk} - \frac{(t-\ell) y''(t)}{\alpha}  \right) \\
& \geq & \frac{1}{(t-\kk)^3} \left( 2 \alpha (t-\kk)^2 - \alpha (t-\ell)(t-\kk) - (\alpha(t-\ell)-\kk)(3(t-\kk) - 2(t-\ell)) \right)
\end{eqnarray*}
using the assumption that $y''(t) \geq 0$, so we dropped the last term.
From here,
\begin{eqnarray*}
\frac{d}{dt} y'(y(t)) & \geq & \frac{1}{(t-\kk)^3} \left( 2\alpha(t-\kk)^2 - 4\alpha(t-\ell)(t-\kk) + 2\alpha(t-\ell)^2  + \kk (t + 2\ell - 3\kk)  \right) \\
& = & \frac{1}{(t-\kk)^3} \left( 2\alpha(\ell-\kk)^2 + \kk (t + 2\ell - 3\kk)  \right) > 0
\end{eqnarray*}
since $t > \ell \geq \kk$, as discussed above.
This means that $\frac{d}{dt} y'(y(t)) = y''(y(t)) y'(t) > 0$, which using the fact that $y'(t) > 0$ implies $y''(y(t)) > 0$.
\end{proofof}

From here, we obtain an important statement: As long as $y'(t) \geq \frac{y(t)}{t-\kk}$, the function is convex. In fact, it is convex on ``one additional block'' in the sense that if the condition holds above $t_2$, $y$ is convex above $t_1 = y(t_2)$.

\begin{proposition}
\label{prop:convexity}
Assume that $y(t)$ is a solution defined on $[t_1, 1]$, where $t_1 \geq \max \{\kk, y(t_2)\}$, $t_2 \in (t_1,1)$, we have a finite number of breakpoints $r_{i+1} = y(r_i) < r_i$ such that $r_n < t_1$, and 
$$ y'(t) \geq \frac{y(t)}{t-\kk} $$
whenever $y'(t)$ is defined and $t > t_2$. Then $y(t)$ is convex for $t \in [t_1, 1]$.
\end{proposition}

\begin{proofof}{Proof}
Consider $r_0 = 1$ and $r_{i+1} = y(r_i)$ as before (for $i=1,\ldots,k+1$ such that $r_{n-1} \geq t_1, r_n = y(r_{n-1}) < t_1$).
We proceed by induction on $k$, proving that $y(t)$ is convex on each interval $[r_{i+1}, r_i]$, and also the one-sided derivatives at $r_i$ satisfy $y'_-(r_i) \leq y'_+(r_i)$. 

The initial segment is $$y(t) = 1 - \frac{1}{\theta} + \kk + \frac{(t-\kk)^2}{1-\kk}$$ on $[r_1,r_0]$ which is a convex function. We have $y(1) = 2-\frac{1}{\theta}$ and $y'(1) = 2$. 

Let us first consider the case where $2 - \frac{1}{\theta} \leq \kk$. In this case, the initial segment covers the interval $[\kk, 1]$ and hence there is no need to solve the differential equation. The conclusion that $y(t)$ is convex is satisfied trivially for $t \in [\kk, 1]$.

In the following, we can assume that $2 - \frac{1}{\theta} > \kk$. 
The right derivative at $r_1 = 2-\frac{1}{\theta}$ is:
$$ y'_+(r_1) = \frac{2(r_1-\kk)}{1-\kk} = \frac{2 (2 - \frac{1}{\theta} - \kk)}{1-\kk} > 0 $$
while the left derivative $y'_-(r_1) = \lim_{t \to r_1^-} y'(t)$ is given by the differential equation:
$$ y'_-(r_1) = y'_-(y(1)) = \frac{y(1) - \kk}{1-\kk} \left( 2 - \frac{y(1)}{y'(1)(1-\kk)} \right) 
= \frac{2-\frac{1}{\theta}-\kk}{1-\kk} \left( 2 - \frac{2-\frac{1}{\theta}}{2(1-\kk)} \right). $$
Comparing the expressions for $y'_+(r_1)$ and $y'_-(r_1)$, since $2-\frac{1}{\theta} > \kk$, we have $y'_-(r_1) \leq y'_+(r_1)$.

Now we proceed by induction, using the differential equation and Proposition~\ref{prop:second-diff}.
We assume that $y'(t) \geq \frac{y(t)}{t-\kk}$ on the interval $[t_2,1]$. 
This implies that $y(t)$ is strictly increasing, the sequence of points $r_0 > r_1 > r_2 > \ldots$ is well-defined, and assuming inductively that $y''(t) \geq 0$ on the interval $(r_i, r_{i-1})$, we also get $y''(t) \geq 0$ on $(r_{i+1}, r_i)$ from Proposition~\ref{prop:second-diff}. Since the sequence of breakpoints reaches $r_n < t_1$ after a finite number of iterations, this implies that $y''(t) \geq 0$ for all $t \geq t_1$ except for the breakpoints $r_i$ (note that Proposition~\ref{prop:second-diff} applies to the second derivative at $y(t)$, given the assumptions at $t$). 

Hence $y''(t) \geq 0$ at each point of $(t_1,1)$ except for the points $r_1, r_2, \ldots$ where $y$ is not  differentiable. Here we need to check the one-sided derivatives: By induction, $y'_-(r_i) \leq y'_+(r_i)$, using the differential equation and the inductive hypothesis that $y'_-(r_{i-1}) \leq y'_+(r_{i-1})$. Therefore, $y(t)$ is convex on $[t_1, 1]$. 
\end{proofof}



Finally, we can prove Lemma~\ref{lem:success-properties}.

\begin{proofof}{Proof of Lemma~\ref{lem:success-properties}}
Suppose $y(t)$ is a solution on an interval $[t_0, 1]$ such that $y(t_0) \leq 0$, $y(t)$ is continuously increasing on $[t_0,1]$, with a finite number of breakpoints $r_0=1, r_1=y(r_0), r_{i+1} = y(r_i), r_n < t_0$. Suppose for a contradiction that $y'(t) < \frac{y(t)}{t - \kk}$ for some $t \in (t_0,1)$ where $y(t) > 0$. 
Let $$ t_2 = \sup \left\{t \in (t_0, 1): y(t) > 0, y'(t) < \frac{y(t)}{t - \kk} \right\}.$$
Since $y'(t) \geq \frac{y(t)}{t-\kk}$ for all $t \in (t_2, 1]$, wherever $y'(t)$ is defined, we obtain from Proposition~\ref{prop:convexity} that $y(t)$ is convex on $[t_1,1]$ where $t_1 = \max\{ t_0, y(t_2) \}$, and in particular $y'(t) < \frac{y(t_2)}{t_2-\kk}$ for $t \in (t_1, t_2)$ wherever $y$ is differentiable.
Furthermore, for every $t \in (t_1,t_2)$, if $y(t) \geq t_0$ and $y'(t)$ is defined, we have $y'(y(t)) \leq y'(t)$ by Proposition~\ref{prop:disc-convex} (here we need the fact that $y'(t)>0$). By induction, since $y(y(\ldots(t_2)))$ reaches a value $\leq t_0$ in a finite number of steps, we have $y'(t) < \frac{y(t_2)}{t_2-\kk}$ for all $t \in (t_0, t_2)$ where $y$ is differentiable (which is all but finitely many points). By integration, $y(t_2) - y(t_0) = \int_{t_0}^{t_2} y'(t) dt < (t_2 - t_0) \frac{y(t_2)}{t_2 - \kk} \leq y(t_2)$, which means $y(t_0) > 0$, a contradiction.

Thus we conclude that $y'(t) \geq \frac{y(t)}{t-\kk}$ for every $t \in (t_0,1)$ where $y(t) > 0$ and $y$ is differentiable. This also implies that we can extend $y(t)$ to the interval $[\kk,1]$ by solving the differential equation for $t \in [\kk,t_0]$; the condition $y'(t) \geq \frac{y(t)}{t-\kk}$ implies that $y(t)$ is still going to be strictly increasing for $t \in [\kk,t_0]$. 
Hence, $y(t)$ is strictly increasing on $[\kk,1]$.

Further, Proposition~\ref{prop:convexity} in this case implies that $y(t)$ is convex on $[\kk,1]$:
Assume that $y(t_2) = \kk$ for some $t_2 \in (\kk,1]$ (if not, then $y(1) < \kk$, which means that $y(t)$ on $[\kk,1]$ is defined by the initial segment, which we know to be convex). Then we apply Proposition~\ref{prop:convexity} with this value of $t_2$ and $t_1 = y(t_2) = \kk$: We have $y'(t) \geq \frac{y(t)}{t-\kk}$ for all $t \in [t_2,1]$ where $y$ is differentiable, and therefore $y(t)$ is convex on $[\kk,1]$.

Finally, Proposition~\ref{prop:y-monotonicity} implies that $y(t) < t$ for all $t \in [\kk,1]$. 
\end{proofof}

\subsection{Stability properties}

In the following, we consider solutions parameterized by $\theta$, denoted by $y^{(\theta)}(t)$. (Recall that $\theta$ defines the initial segment, and then the solution $y^{(\theta)}(t)$ is uniquely defined on any interval $[t_1,1]$ where it can be found.) 

Our main goal is to prove that a solution $y^{(\theta)}(t)$, on whatever interval it is defined, behaves continuously in the parameter $\theta$. To prove this, we parameterize the solution in a different way: We introduce a variable $x$ such that the interval $[0,1]$ corresponds to the initial segment $[r_1,r_0]$, $[1,2]$ corresponds to the next segment $[r_2,r_1]$, etc. Formally, the relationship between the variables $x$ and $t$ is
\begin{itemize}
\item $t(x) = 1 - (\frac{1}{\theta} - 1) x$ for $x \in [0,1]$,
\item $t(x) = y(t(x-1))$ for $x > 1$.  
\end{itemize}
(We caution the reader that in the following, $t(\cdots)$ denotes a function rather than multiplication of $t$ and $(\cdots)$.)

Note that while $y(t)$ is increasing in a successful solution, $t(x)$ should be decreasing:
On the initial segment $[0,1]$, we have $t'(x) = -(\frac{1}{\theta} - 1) < 0$ by definition, and 
as long as $y'(t(x)) > 0$, we have $t'(x+1) = y'(t(x)) t'(x) < 0$ by induction.

Our problem can now be formulated as follows: (using the identities $t(x+1) = y(t(x))$, $t(x+2) = y(y(t(x))$,
$t'(x+1) = y'(t(x)) t'(x)$, $t'(x+2) = y'(y(t(x))) t'(x+1)$):
\begin{itemize}
\item Initial parameterization: $t(x) = 1 - (\frac{1}{\theta} - 1) x$ for $x \in [0,1]$,
\item Initial function segment: $t(x+1) = y(t(x)) = 1 - \frac{1}{\theta} + \kk + \frac{(1 - (\frac{1}{\theta} - 1) x -\kk)^2}{1-\kk} $ for $x \in [0,1]$,
\item Differential equation: for $x>0$,
\begin{equation}
\label{eq:t-equation}
 t'(x+2) = \frac{t(x+1) - \kk}{t(x) - \kk} \left( 2 t'(x+1) - \frac{t(x+1)}{t(x) - \kk} t'(x) \right). 
\end{equation}
\end{itemize}


Conversely, given $t(x)$, we can reconstruct $y(t)$ as follows: Assuming that $t(x)$ is strictly decreasing on $[0,\xi-1$], if $t_1 \in [t(\xi-1),1]$ then there is a unique point $x_1 \in [0,\xi-1]$ such that $t(x_1) = t_1$. Then, we have $y(t_1) = y(t(x_1)) = t(x_1 + 1)$. We can also compute the derivative of $y$ from $t'(x+1) = y'(t(x)) t'(x)$:
$$ y'(t(x)) = \frac{t'(x+1)}{t'(x)}.$$
This is well-defined for non-integer points in $(0,\xi-1)$; the integer points correspond to the breakpoints where $y$ and $t$ are not differentiable.

\begin{defn}
We call a solution $t(x)$ for $x \in [0,\xi]$ {\em strongly monotonic}, if there is $\alpha>0$ such that for all $x \in (0,\xi) \setminus \ZZ$, $t'(x) \leq -\alpha$.
\end{defn}

Strong monotonicity of $t(x)$ is implied by a similar property for $y(t)$, as the following statement shows.

\begin{proposition}
\label{prop:t'-bound}
If $y'(t) \geq \beta > 0$ for $t \in [t_0,1]$ except for finitely many breakpoints $r_0=1$, $r_{i+1} = r_i$, $r_n < t_0$, 
then the corresponding function $t(x)$ defined as above is strongly monotonic on $[0,\xi]$ such that $t(\xi) = y(t_0)$.
\end{proposition}

\begin{proofof}{Proof}
Suppose that $y'(t) \geq \beta > 0$ everywhere except for the breakpoints; we can assume w.l.o.g.~that $\beta \in (0,1)$. 

Let $\xi > 0$ be such that $t(x) = y(t(x-1))$ defined as above satisfies $t(\xi) = y(t_0)$
(this construction is valid because we reach $t(\xi) = y(t_0)$ after finitely many breakpoints).
For $x \in (0,1)$, we have $t'(x) = -(\frac{1}{\theta} - 1) < 0$ by definition. 
For $x > 0$, we have 
$$ t'(x+1) = y'(t(x)) t'(x).$$
Hence, by induction, assuming that $y'(t(x)) \geq \beta$, we obtain $t'(x+1) \leq \beta t'(x)$ for every non-integer $x \in (0,\xi-1)$. We assumed that $\beta \in (0,1)$; so if $n = \lceil \xi \rceil$, we have $t'(x) \leq -\beta^n (\frac{1}{\theta} - 1) < 0$. Hence we can set $\alpha = \beta^n (\frac{1}{\theta} - 1)$.
\end{proofof}

Now we can prove our stability result for $t(x)$.

\begin{lemma}
\label{lem:t-stability}
Fix $f>0$. Let $\theta \in [\frac12,1]$ be such that for some $\alpha, \xi > 0$ the system above has a strongly monotonic solution $t(x)$ defined on $[0,\xi]$. Then for fixed $f$, the solution depends continuously on $\theta$: For any $\epsilon>0$ there is $\delta>0$ such that if $|\tilde{\theta} - \theta| < \delta$, there is a solution $\tilde{t}(x)$ for parameter $\tilde{\theta}$ such that for all $x \in [0,\xi]$, $|\tilde{t}(x) - t(x)| < \epsilon$ and for all $x \in (0,\xi) \setminus \ZZ$, $|\tilde{t}'(x) - t'(x)| < \epsilon$. In particular, for any $\tilde{\theta}$ sufficiently close to $\theta$, the solution $\tilde{t}(x)$ is still strongly monotonic.
\end{lemma}

\begin{proofof}{Proof}
We prove inductively that for $x \in [i, i+1]$, the solution $t(x)$ depends continuously on $\theta$:
For any $\epsilon>0$, there is $\delta_i>0$ such that if $|\tilde{\theta}-\theta| < \delta_i$, $t(x)$ is the solution corresponding to $y^{(\theta)}(t)$ for parameter $\theta$, and $\tilde{t}(x)$ is the solution corresponding to $y^{(\tilde{\theta})}(t)$, then $|\tilde{t}(x) - t(x)| < \epsilon$ and $|\tilde{t}'(x) - t'(x)| < \epsilon$.

For $x \in [0,1]$ and $x \in [1,2]$, we have explicit expressions for which one can verify directly that $t(x)$ and $t'(x)$ behave continuously in $\theta$, hence the statement is valid in the base case.

For the inductive step, we appeal to the differential equation (\ref{eq:t-equation}).
By the inductive hypothesis, $t(x), t(x+1), t'(x)$ and $t'(x+1)$ depend continuously on $\theta$.
We note that as long as $t(x+1) \geq \kk$, we have $t(x) \geq \kk + \beta$ for some $\beta>0$.
This implies that the behavior is continuous in the denominator of (\ref{eq:t-equation}).
Hence $t'(x+2)$ also depends continuously on $\theta$: for every $\epsilon>0$ there is $\delta_i>0$ such that
if $|\tilde{\theta} - \theta| < \delta_i$ then $|\tilde{t}'(x+2) - t'(x+2)| < \epsilon$ for non-integer $x$, and by integration
also $|\tilde{t}(x+2) - t(x+2)| < \epsilon$. By induction, this holds for all $x \leq \xi-2$.

If we take $\delta = \min_{1 \leq i \leq \lceil \xi \rceil} \delta_i$, we obtain the statement of the lemma:
For given $|\tilde{\theta} - \theta| < \delta$, we get $|\tilde{t}(x) - t(x)| < \epsilon$ for all $x \in [0,\xi]$ and $|\tilde{t}'(x) - t'(x)| < \epsilon$ for all $x \in (0, \xi) \setminus \ZZ$.
In particular, since $t(x)$ is strongly monotonic, i.e.~$t'(x) \leq -\alpha$ for $x \in (0,\xi) \setminus \ZZ$,
for $\epsilon = \frac12 \alpha$, we obtain $\tilde{t}(x)$ which is still strongly monotonic.
\end{proofof}

\subsection{Existence of an optimal solution}

Our goal is to prove that there is $\theta^*$ for which there is a complete successful solution with $y^{(\theta^*)}(\kk) = 0$. First, let us prove that is $y^{(\theta)}(\kk) < 0$, then $\theta$ is suboptimal.

\begin{lemma}
\label{lem:theta-continuous}
If $\theta \in [\frac12,\frac{1+f}{1+2f}]$ is such that there is a complete successful solution for $\theta$ such that $y^{(\theta)}(\kk) < 0$, then there is $\tilde{\theta} > \theta$ for which there is also a successful solution $y^{(\tilde{\theta})}(t)$.
\end{lemma}

\begin{proofof}{Proof}
Consider a complete successful solution $y(t)$ for parameter $\theta$, and the corresponding solution $t(x)$ as above. 
By Lemma~\ref{lem:success-properties}, $y(t)$ is strictly increasing and convex on $[\kk,1]$, with a positive derivative everywhere except for finitely many breakpoints, and $y(\kk) < 0$. Pick some point $t_0 > \kk$ such that $y(t_0) < 0$, and $t_0$ is not a breakpoint (which exists by the continuity of $y(t)$). We claim that the derivative $y'(t_0)$ is strictly positive:  Again by continuity, there is a point $t_1 \in (t_0, 1]$ such that $y(t_1) = t_0 > \kk$. By Lemma~\ref{lem:success-properties}, we have $y'(t_1) \geq \frac{y(t_1)}{t_1 - \kk}$. The differential equation (\ref{eq:diff-eq}) implies that $y'(y(t_1)) = y'(t_0) = \beta > 0$. Moreover, $y(t)$ is convex for $t \in [t_0, 1]$, so wherever the derivative is defined for $t \geq t_0$, we have $y'(t) \geq \beta$.

By Proposition~\ref{prop:t'-bound}, this means that $t(x)$ is strongly monotonic on $[0,\xi]$, where $t(\xi-1) = t_0$ and $t(\xi) = y(t_0) < 0$. I.e., $t'(x) \leq -\alpha < 0$ for all $x \in (0,\xi)$ where $t'(x)$ is defined.
By Lemma~\ref{lem:t-stability}, for any $\epsilon>0$, for $\tilde{\theta}$ sufficiently close to $\theta$, we obtain a solution $\tilde{t}(x)$ such that $|\tilde{t}(x) - t(x)| < \epsilon$ and $|\tilde{t}'(x) - t'(x)| < \epsilon$ wherever defined. In particular, for $\epsilon = \frac12 \min \{ \alpha, |t(\xi)|, t_0 - \kk \}$, $\tilde{t}(x)$ is still strongly monotonic.

Translating back to the formalism of $y(t)$, we obtain a solution $\tilde{y}(t)$ defined on $[\tilde{t}_0,1]$ where $\tilde{t}_0 = \tilde{t}(\xi-1) \geq t_0 - \epsilon > \kk$, and $y(\tilde{t}_0) = \tilde{t}(\xi) \leq t(\xi) + \epsilon < 0$. Hence, this solution is successful.
\end{proofof}

We prove next that if $\theta^*$ is a limit point of values $\theta$ for which a successful solution exists, then it also exists for $\theta^*$. 

\begin{lemma}
\label{lem:theta-limit}
If $\theta^* \in [\frac12,\frac{1+f}{1+2f}]$ is such that for every $\delta>0$, there is $\theta \in (\theta^*-\delta,\theta^*)$ such that $y^{(\theta)}(t)$ is a successful solution, then $y^{(\theta^*)}(t)$ is a successful solution.
\end{lemma}

\begin{proofof}{Proof}
Suppose for a contradiction that $y^{(\theta^*)}(t)$ is not a complete successful solution. However, it is well-defined on the initial segment $[2-\frac{1}{\theta}, 1]$, and it satisfies $y'(t) \geq \frac{y(t)}{t-\kk}$. Therefore, the solution can be extended to $[r_2,r_1]$, where $r_1 = 2-\frac{1}{\theta}, r_2 = y(r_1)$, using the differential equation. As long as $y'(t) \geq \frac{y(t)}{t-\kk}$ on an interval $(r_{i+1}, r_i)$, we can use the differential equation to extend the solution to $[r_{i+2}, r_{i+1}]$.
However, at some point we obtain $y'(t) < \frac{y(t)}{t-\kk}$, otherwise the solution is successful by Lemma~\ref{lem:success-criterion}.
Let $(r_{k+1},r_k)$ be the first interval on which this happens; this implies that $y'(t) \geq \frac{y(t)}{t-\kk}$ for all $t > r_k$, and this in turn implies that $y'(t) > 0$ for all $t > r_{k+1}$, using the differential equation.

Let us consider now the corresponding function $t(x)$ defined on $[0,k]$, such that $t(k) = r_k$, and $y(t(x)) = t(x+1)$ for all $x \in [0,k]$. By Proposition~\ref{prop:t'-bound}, $t(x)$ is strongly monotonic. Therefore, by Lemma~\ref{lem:t-stability}, the solution $\tilde{t}(x)$ for $\tilde{\theta}$ sufficiently close to $\theta$ is still strongly monotonic and arbitrarily close to $t(x)$. Hence, the corresponding solution $\tilde{y}(t)$ still satisfies $\tilde{y}'(t) < \frac{\tilde{y}(t)}{t-\kk}$ for some $t > \kk$ and hence it cannot be successful by Lemma~\ref{lem:success-properties}. So there exists $\delta > 0$ such that no solution corresponding to $\theta \in (\theta^*-\delta, \theta^*)$ is successful, which is a contradiction with the assumptions of the lemma.
\end{proofof}

Finally, we can prove that there exists $\theta^*$ such that there is a complete successful solution with $y^{(\theta^*)}(\kk) = 0$, which implies our optimal algorithm. 

\begin{theorem}
For every $f>0$, there exists $\theta^* \in [\frac12, \frac{1+f}{1+2f}]$ such that there is a complete successful solution $y^{(\theta^*)}(t)$ on $[\kk,1]$ such that $y^{(\theta^*)}(\kk) = 0$.
\end{theorem}

\begin{proofof}{Proof}
Given $f>0$, define $S \subseteq [\frac12, 1]$ to be the set of values $\theta$ such that there is a successful solution $y^{(\theta)}(t)$. This set is nonempty, because for every $f>0$, a successful solution exists for $\theta=1/2$. Also, $\theta \leq \frac{1+f}{1+2f}$ for every $\theta \in S$, because we know that a successful solution cannot exist for $\theta > \frac{1+f}{1+2f}$. In particular, $S$ is bounded away from $1$.

Consider $\theta^* = \sup S$. By the above, we have $\theta^* \in [\frac12, \frac{1+f}{1+2f}]$. 
We claim that $\theta^* \in S$: If $\theta^* \notin S$, then for every $\delta>0$, there is $\theta \in (\theta^*-\delta, \theta^*)$, $\theta \in S$, i.e.~there is a successful solution for $\theta$. By Lemma~\ref{lem:theta-limit}, there is also a successful solution for $\theta^*$, hence $\theta^* \in S$. 

By Lemma~\ref{lem:success-properties}, we can assume that $y^{(\theta^*)}(t)$ is a complete successful solution on $[\kk,1]$, which means $y^{(\theta^*)}(\kk) \leq 0$. Let us assume for a contradiction that $y^{(\theta^*)}(\kk) < 0$. Then by Lemma~\ref{lem:theta-continuous}, there exists $\theta' > \theta^*$ such that there is also a successful solution for $\theta'$. But this means that $\theta' \in S$ which contradicts $\theta^* = \sup S$. 

Hence we have a complete successful solution for $\theta^*$ such that $y^{(\theta^*)}(\kk) = 0$.
\end{proofof}

\section{Omitted Proofs}
\label{apx:proofs}

\subsection{Proof of \Cref{lem:discretize}}
\label{apx:lem:discrete}
We prove the lemma by contradiction. Suppose that there exists an instance with distributions $\cD_1,\ldots,\cD_n$, such that the best achievable competitive ratio is strictly less than $\alpha(f)$; say less than $\alpha(f)-3\delta$ for some $\delta>0$. We define discrete distributions $\cD'_1,\ldots,\cD'_n$ as follows: Given $X_i \sim \cD_i$ and $X_{\textrm{max}} = \max_{1 \leq i \leq n} X_i$, let $M \triangleq \EX[X_{\textrm{max}}]$ and $M' \triangleq \inf \{ v: \EX[(X_{\textrm{max}} - v)_+] \leq \delta M \}$. Since $\lim_{v \to \infty} \EX[(X_{\textrm{max}} - v)_+] = 0$, the value $M'$ is well defined. We also define $\cV \triangleq \{ \delta (1+\delta)^k M: 0 \leq k \leq \lceil \log_{1+\delta} \frac{M'}{\delta M} \rceil \}$,
and $X'_i \triangleq \min \{ v \in \cV: v \geq \min(X_i,M')\}$; in other words, $X'_i$ is obtained by rounding $X_i$ up to the nearest value in $\cV$ (and rounding down to $\max \{v\in \mathcal{V}\}$ if it is greater than $M'$). It is obvious that $X'_i$ is a discrete random variable. Let us denote the distribution of the discrete random variable $X'_i$ by $\cD'_i$.

We claim that $X'_{\max} = \max_{1 \leq i \leq n} X'_i$ does not differ significantly from  $X_{\textrm{max}}$.
On the one hand, capping each variable $X_i$ at $\max \{v\in \mathcal{V}\}$ decreases $\EX[X_{\textrm{max}}]$ by at most $\EX[(X_{\textrm{max}} - M')_+] 
\leq \delta M$, or equivalently shrinks $\EX[X_{\textrm{max}}]$ by a multiplicative factor of ${1-\delta}$. On the other hand, rounding up each $X_i$ to a value in the form of $\delta(1+\delta)^k M$ implies that $\EX[X'_{\textrm{max}}]\leq (1+\delta)\EX[X_{\textrm{max}}]+\delta M\leq (1+2\delta)\EX[X_{\textrm{max}}]$. We therefore conclude that
$$(1+2\delta)\,\EX[X_{\textrm{max}}]\geq \EX[X'_{\max}] \geq (1-\delta)\, \EX[X_{\textrm{max}}]$$


Assume now that there is a policy that achieves a competitive ratio of $\alpha(f)$ for any discretized instance. We further assume that this policy never swaps a held value $x$ with $y$ if $y<(1+f)x$. This is a property of any meaningful policy for the problem; If not satisfied, just consider a modified (and weakly better) policy in this argument that exactly follows the original policy but it does not swap the held value $x$ with the current revealed value $y$ if $y<(1+f)x$. Then we can apply this policy to any instance, with values discretized to $\cV$ as above. Assume that we eventually accept the value $X'_\tau$ and pay the buyback cost $c'$ in the discretized instance, so that $\EX[X'_\tau - c'] \geq \alpha(f) \EX[X'_{\max}]$. Therefore, we eventually accept the value $X_\tau$ and pay the buyback cost $c$ in the original instance. Since values were rounded up by at most a factor of $1+\delta$ and an additive error of $\delta M$, we have $X'_\tau \leq  (1+\delta)X_\tau+\delta M$.  The buyback cost in the discretized instance satisfies $c' \geq c$, because the original values are smaller, except for values above $M'$---but a variable of value above $M'$ will never be bought back in the discretized instance. Hence, the reward in the original instance is:
\begin{align*}
\EX[X_\tau - c] &\geq \EX[\frac{1}{1+\delta} (X'_\tau - \delta M) - c'] \geq \EX[X'_\tau - c'] - \frac{\delta}{1+\delta} \EX[X'_\tau] - \frac{\delta}{1+\delta} M \\
&\geq \alpha(f) \EX[X'_{\max}] -  \frac{\delta}{1+\delta}(1+2\delta) \EX[X_{\textrm{max}}] - \frac{\delta}{1+\delta} \EX[X_{\textrm{max}}] \\
&\geq \left(\alpha(f)(1-\delta) -\frac{\delta}{1+\delta}(2+2\delta)\right)\EX[X_{\textrm{max}}] \geq (\alpha(f) - 3 \delta) \EX[X_{\textrm{max}}]
\end{align*}
This contradicts the assumption that this factor cannot be achieved for the original instance.\qed

\subsection{Proof of \Cref{lem:monotonize}}
\label{sec:lem:monotonize}

Consider an instance $I_1$ with variables $X_i = v_i Be(q_i)$ and $X_{i+1} = v_{i+1} Be(q_{i+1})$, $v_i < v_{i+1}$, and the same instance with the ordering of variables $X_i, X_{i+1}$ swapped, which we call $I_2$.

Similar to the proof of \Cref{lem:splitting}, let us couple the two instances so that the instantiations of all the random variables are the same, \textit{except for the instantiations of} $X_i$. Let us attempt to define the behavior of a policy $\cP_2$ on the instance $I_2$ given a policy $\cP_1$ on the instance $I_1$.

First of all, we let $\cP_2$ copy $\cP_1$ until the random variables $X_{i+1}$ and $X_{i}$ arrive. Suppose both polices hold value $x$ before observing the variables $X_i$ and  $X_{i+1}$.   

\paragraph{Case 1:} In instance $I_2$, we observe $X_{i+1} = 0$. We proceed to $X_{i}$ and let $\cP_2$ copy $\cP_1$. Note the performance of the policy $\cP_2$ exactly matches the performance of the policy $\cP_1$.

\paragraph{Case 2:} In instance $I_2$, we observe $X_{i+1} = v_{i+1}$. In this case, we let $\cP_2$ simulate what would have happened if $X_i$ appeared first. $\cP_2$ generates a duplicate sample $X’_i$, and observes the last acceptance of $\cP_1$ on $X'_i$ and $X_{i+1}$. If this last accepted value is $v$ (equaling either $v_i$ or $v_{i+1}$, or perhaps it equals $x$ signifying that $\cP_1$ does not accept the values $v_i$ or $v_{i+1}$), $\cP_2$ accepts $X_{i+1}$ with probability $\frac{v - x}{v_{i+1}-x}$. After this decision, $\cP_2$ ignores $X_i$ and proceeds to copy $\cP_1$ once again, assuming it is holding $v$.

We claim that this yields a policy which performs at least as well as $\cP_1$. To prove this, we couple the instantiation of $X’_i$ in $I_2$ with $X_i$ in $I_1$. Then $\cP_2$ clearly outperforms $\cP_1$ if the last accepted value $v$ of $\cP_1$ is either $x$ or $v_{i+1}$, as they obtain exactly the same net reward (including the buyback cost) in the future and $\cP_1$ may only pay an additional buyback cost compared to $\cP_2$ during processing $X'_i$ and $X_{i+1}$. 

The only non-trivial case is when $X'_i = v_i$, $X_{i+1} = v_{i+1}$, and $\cP_1$ accepts $X'_i$ and then rejects $X_{i+1}$. But in this case, after $X_i$ and $X_{i+1}$, $\cP_1$ and $\cP_2$ hold the same value in expectation, because:
\[x\cdot \left(1 - \frac{v - x}{v_{i+1}-x}\right) + v_{i+1}\cdot \frac{v - x}{v_{i+1}-x} = \frac{x(v_{i+1}-v) +v_{i+1}(v-x) }{v_{i+1}-x} = v.\]
Since the expected net reward (including the buyback cost) is linear in the accepted values, the linearity of expectation implies that the expected net reward of $\cP_2$ equals the expected net reward of $\cP_1$ in this case, both during processing these two random variables and in the future.\qed

\subsection{Proof of \Cref{lem:LP-3}}
\label{apx:lem:duality}
Starting from the LP in \ref{eq:disc-primal-LP}, we introduce a dual variable $y_{i,t}$ for the constraint $\Delta_{i,t} \geq q_t \sum_{j=t+1}^{n} (\Delta_{t,j} - \Delta_{i,j}) + v_t - (1+f) v_i$, for $0 \leq i < t \leq n$, and $\Theta$ for the constraint $\sum_{i=1}^{n} v_i q_i \prod_{j=i+1}^{n} (1-q_j) = 1$. We write down constraints for each $0 \leq s < t \leq n$ corresponding to $\Delta_{s,t}$ and for each $1 \leq t \leq n$, corresponding to $v_t$. Note that the constraint corresponding to $\Delta_{0,t}$ looks somewhat different from the others, because $\Delta_{0,t}$ appears in the objective and hence the RHS is nonzero. Summarizing these steps, we obtain the following dual LP:
\begin{align*}
\underset{y_{s,t}\geq 0 \,,\, \Theta\geq 0}{\max} \;\;\;\;\;\;  & \Theta
     && \text{s.t.} 
    \\
    &y_{0,t} + \sum_{i=1}^{t-1} q_i y_{0,i}  \leq 1
     && 
     1 \leq t \leq n
     \\
     & y_{s,t} - \sum_{i=0}^{s-1} q_s y_{i,s} + \sum_{j=s+1}^{t-1} q_j y_{s,j}  \leq  0
     && 1 \leq s < t \leq n \\
     & -\sum_{i=0}^{t-1} q_t y_{i,t} + (1+f) \sum_{j=t+1}^{n} q_j y_{t,j} + \Theta q_t \prod_{j=t+1}^{n} (1-q_j)  \leq 0
     && 1 < t \leq n
\end{align*}
We remark that this dual LP has the same optimal objective value compared with \ref{eq:disc-primal-LP} by applying strong LP duality, and hence it optimum values characterizes the worst-case competitive ratio of the optimal online policy due to \Cref{lem:LP-1}. Lastly, we rewrite this LP in a slightly cleaner form by using the substitution $x_{s,t} = q_t\,y_{s,t}$. This completes the proof. \qed

\subsection{Proof of \Cref{lem:continuous-reduction}}
\label{apx:proof-cont-red}
We will show that the solution to \ref{integrals} can be used to construct a feasible solution for \ref{eq:discrete-LP} with given $q_1,...,q_n \in [0,1]$. Define $r_t = \sum_{s = 1}^{t} \hat{q}_s = \prod_{s=t+1}^n (1-q_s)$ and let $x_{0,t}  = \int_{r_{t-1}}^{r_t}h(r)dr$ and $x_{s,t}  = \int_{r_{s-1}}^{r_s}\int_{r_{t-1}}^{r_t}g(r,u)dudr$. First, observe that:
\[\frac{d}{dr}\left(r\left(1 - \int_{0}^rh(x)dx\right)\right) = 1 - rh(r) - \int_{0}^rh(x)dx \geq 0,\]
where the last inequality holds because of \eqref{OIfromzero}. This implies that $r\left(1 - \int_{0}^rh(x)dx\right)$ is an increasing function. Consequently, we conclude that:
\begin{eqnarray*}
& r_{t-1}\left(1 - \int_{0}^{r_{t-1}}h(r)dr\right) &\leq r_t\left(1 - \int_{0}^{r_{t}}h(r)dr\right).
\end{eqnarray*}
Dividing both sides by $r_t$ and rearranging terms yields:
\begin{eqnarray*}
\int_{r_{t-1}}^{r_{t}}h(r)dr &\leq (1 - \frac{r_{t-1}}{r_{t}})\cdot\left(1 - \int_{0}^{r_{t-1}}h(r)dr\right).
\end{eqnarray*}
Using the fact that $q_t = 1 - \frac{r_{t-1}}{r_{t}}$ , it follows that: $x_{0,t} \leq q_t \left( 1 - \sum_{i=1}^{t-1} x_{0,i} \right),$ as desired.

In an analogous manner, by using \eqref{OI}, we conclude that:
\[\int_{r_{t-1}}^{r_t}g(r, u)du \leq (1 - \frac{r_{t-1}}{r_t})\cdot\left(h(r) + \int_0^r g(x, r)dx - \int_r^{r_{t-1}} g(r, x)dx\right),\] and hence, it follows that:
\begin{eqnarray*}
\frac{x_{s,t}}{q_t} & = & \frac{r_t}{r_t-{r_{t-1}}} \cdot \int_{r_{s-1}}^{r_s}\int_{r_{t-1}}^{r_t}g(r, u)du dr \\
& \leq & \int_{r_{s-1}}^{r_s}\left\{h(r) + \int_0^r g(x, r)dx - \int_r^{r_{t-1}} g(r, x)dx\right\}dr \\
& = &\int_{r_{s-1}}^{r_s}\left\{h(r) + \int_0^{r_{s-1}} g(x, r)dx - \int_{r_s}^{r_{t-1}}g(r, x)dx + \int_{r_{s-1}}^r g(x, r)dx - \int_{r}^{r_{s}} g(r, x)dx\right\}dr \\
& = & \int_{r_{s-1}}^{r_s}h(r)dr + \int_{r_{s-1}}^{r_s}\int_0^{r_{s-1}} g(x, r)dx dr - \int_{r_{s-1}}^{r_s} \int_{r_s}^{r_{t-1}}g(r, x)dx dr \\ 
&&+ \int_{r_{s-1}}^{r_s}\int_{r_{s-1}}^r g(x, r)dx dr - \int_{r_{s-1}}^{r_s}\int_{r}^{r_{s}} g(r, x)dx dr \\
& = &\sum_{i=0}^{s-1} x_{i,s} - \sum_{j=s+1}^{t-1} x_{s,j} + \int_{r_{s-1}}^{r_s}\int_{r_{s-1}}^r g(x, r)dx dr - \int_{r_{s-1}}^{r_s}\int_{r}^{r_{s}} g(r, x)dx dr  \\
& = &\sum_{i=0}^{s-1} x_{i,s} - \sum_{j=s+1}^{t-1} x_{s,j},
\end{eqnarray*}
where the last line is a consequence of swapping the order of integrals (Fubini's theorem). Finally, note that by integrating \eqref{coverage} from ${r_{t-1}}$ to ${r_t}$, we get that:
\begin{eqnarray*}
\Theta \hat{q}_t  & = & \Theta (r_t -r_{t-1}) \\
& \leq & \int_{r_{t-1}}^{r_t}\left\{h(u) + \int_0^u g(r, u)dr  - (1+f) \int_{u}^{1} g(u, r)dr \right\}du \\
& = & \int_{r_{t-1}}^{r_t}\left\{h(u) + \int_0^{r_{t-1}} g(r, u)dr  - (1+f) \int_{r_t}^{1} g(u, r)dr +  \int_{r_{t-1}}^{u} g(r, u)dr  - (1+f) \int_{u}^{r_t} g(u, r)dr \right\}du \\
& = & \int_{r_{t-1}}^{r_t}h(u) du + \int_{r_{t-1}}^{r_t} \int_0^{r_{t-1}} g(r, u)dr du  - (1+f) \int_{r_{t-1}}^{r_t} \int_{r_t}^{1} g(u, r)dr du \\ &&+ \int_{r_{t-1}}^{r_t} \int_{r_{t-1}}^{u} g(r, u)dr du  - (1+f) \int_{r_{t-1}}^{r_t} \int_{u}^{r_t} g(u, r)dr du \\
& = & \sum_{i=0}^{t-1} x_{i,t} - (1+f) \sum_{j=t+1}^{n} x_{t,j} + \int_{r_{t-1}}^{r_t} \int_{r_{t-1}}^{u} g(r, u)dr du  - (1+f) \int_{r_{t-1}}^{r_t} \int_{u}^{r_t} g(u, r)dr du \\
& = & \sum_{i=0}^{t-1} x_{i,t} - (1+f) \sum_{j=t+1}^{n} x_{t,j}  - f \int_{r_{t-1}}^{r_t} \int_{u}^{r_t} g(u, r)dr du \\
& \leq & \sum_{i=0}^{t-1} x_{i,t} - (1+f) \sum_{j=t+1}^{n} x_{t,j},
\end{eqnarray*}
and our proof is complete.
\qed

\subsection{Proof of \Cref{prop:from-diffeq-to-phiy}}
\label{apx:proof:phiy-existence}
Let $y$ be the existing solution $y_f$ to \ref{difEq} as in \Cref{thm:diff-eq-existence}. Moreover, let $\tau(t) = y^{-1}(t)$ for $t\in(0,y(1)]$, and as a convention, let $\tau(t)\triangleq 1$ for $t\in [y(1),1]$. Now define:
\begin{align*}
    \forall\,t\in(0,1]:~~\varphi(t) = 
        \frac{\Theta}{(1+f)\tau(t)-f},
\end{align*}
with $\Theta = \theta_f = \frac{1}{2-y(1)}$. Notice that for $t \in [y(1),1]$ we have $\varphi(t) = \Theta$. 

We now show that $\{\varphi,y\}$ satisfy all the Properties~I-VI. First, \ref{eq:y-property-1}, \ref{eq:y-property-2}, and \ref{eq:coverage-property5} are direct consequences of our definition. Now substituting $t$ with $\tau(t)$ in \ref{difEq} we get that for $t \in [k_1,y(1)]$.
\begin{align}\label{corollary}
    y'(t) = \frac{t-\cf}{\tau(t)-\cf}\left(2- \frac{t\tau'(t)}{\tau(t)-\cf}\right), 
\end{align}
This equation is also correct for $t \geq y(1)$ using the boundary condition in \ref{difEq}. We use \cref{corollary} to establish \ref{eq:phi-after-k1-property4} and \ref{eq:property6}. Notice that a necessary and sufficient condition to establish these properties is to verify the equation at a boundary value and then demonstrate that the derivative of the left-hand side is equal to the derivative of the right-hand side. For \ref{eq:phi-after-k1-property4}, notice that for  $t = 1$: 
\begin{align*}
 \Theta = 1- (1-y(1))\Theta = 1- \int_{y(1)}^{1} \varphi(s) ds
\end{align*}
Differentiating both sides of \ref{eq:phi-after-k1-property4} for $t\in[k_1,1]$, we must prove:
\begin{align*}
t\varphi'(t) + \varphi(t) = -\varphi(t) + \varphi(y(t))y'(t),
\end{align*}
which is equivalent to:
\begin{align*}
    -\frac{t\tau'(t)}{(1+f)(\tau(t)-\cf)^2}+\frac{1}{(1+f)(\tau(t)-\cf)}= - \frac{1}{(1+f)(\tau(t)-\cf)} + \frac{y'(t)}{(1+f)(t-\cf)},
\end{align*}
but this is simply a rearrangement of \cref{corollary}. To show \ref{eq:property6}, for $s \geq k_1$, we now know that
\begin{align*}
    s\varphi'(s) + \varphi(s) = -\varphi(s) + \varphi(y(s))y'(s) \Rightarrow s^2\varphi'(s)+2s\varphi(s)=s\varphi(y(s))y'(s)~.
\end{align*}
Using this equation, we have that: 
\begin{align*}
    \frac{d}{ds}\left(s^2\varphi(s)\right) = s^2\varphi'(s)+ 2s\varphi(s)
    = s\varphi(y(s))y'(s)= \frac{d}{ds}\left(\int_0^{y(t)} \varphi(s) \tau (s)\,ds\right)~.
\end{align*}
Integrating both sides from $k_1$ to $t$ implies \ref{eq:property6}. 

Finally, we would like to establish \ref{eq:OIfromzero-phi-property3}, that is, we want to prove for $t \in (0,k_1]$:
\begin{align*}
   \frac{t\; \Theta}{(1+f)(\tau(t) - \cf)} \leq 1-\int_0^t \frac{\; \Theta}{(1+f)(\tau(s) - \cf)} \,ds
\end{align*}
But notice that right-hand-side is decreasing in $t$ and the left-hand-side is increasing in $t$ by \cref{thm:diff-eq-existence}. Thus the gap is decreasing and proving the inequality for all $t$ reduces to proving it for $k_1$. However, we have already proved that both sides are equal for $t \in[ k_1,1]$, which completes the proof of \ref{eq:OIfromzero-phi-property3}. $\hfill\Box$
\begin{corollary} \label{cor:shrinkingGap}
    We have established a stronger result than \eqref{eq:OIfromzero-phi-property3}: not only is \( 1 - \int_{0}^{t} \varphi(s) \, ds - t\varphi(t) \) always positive in \( t \in [0, k_1] \), but it is also decreasing in that interval and equals zero at \( t = k_1 \). 
\end{corollary}

\subsection{Proof of \Cref{prop:from-phiy-to-integrals}}
\label{apx:proof:properties}
Recall the definition of $\tau(t) = y^{-1} (t)$ (which is well-defined because of \ref{eq:y-property-1}).
First, we would like to propose assignment for the functions $g$ and $h$ in terms of $\varphi$ and $y$. To discover these assignments, we argue informally using our interpretation of the problem from \cref{sec:reduction-to-diff-eq}:
\begin{itemize}
\item Clearly, $g(s,t)=0$ whenever $t\leq \tau(s)$. For $t>\tau(s)$, we have:
\begin{align}
    g(s,t)\,ds\,dt 
    &= \Pr\Bigl[\text{\Cref{alg:quantile-selection} selects } t \text{ while holding }s\,\big|\,t \text{ is non-zero}\Bigr]\frac{dt}{t}\nonumber\\
    &= \Pr\Bigl[\bigl(\text{\Cref{alg:quantile-selection} selects }s\bigr)\;\mathbin{\scalebox{1.5}{\ensuremath{\cap}}}\;\bigl(\text{all }r\in[\tau(s),t)\text{ are zero}\bigr)\Bigr]\frac{dt}{t}\nonumber\\
    &\label{eq:g-from-phi-y}
     = \varphi(s)ds\,\frac{\tau(s)}{t}\,\frac{dt}{t} = \frac{\varphi(s)\tau(s)}{t^2}\,ds\,dt
\end{align}
Here, the third equality holds because the probability that all $r\in[\tau(s),1)$ are zero ($\equiv \tau(s)$) is equal to the probability that all $r\in[t,1)$ are zero ($\equiv t$), times the probability that all $r\in[\tau(s),t)$ are zero. Therefore:
\begin{equation}
\label{eq:g-phi-y}\tag{\texttt{g-function}}
 g(s,t) = \begin{cases}
      0 & t \leq \tau(s),\\
      \displaystyle \frac{\varphi(s)\,\tau(s)}{t^2} & t>\tau(s).
   \end{cases}
\end{equation}
\item Furthermore, for $t\in(0,k_1]$, we clearly have $h(t)=\varphi(t)$. For $t\in (k_1,1]$ similar to above we have:
\begin{align}
    h(t)\,dt
    &= \Pr\bigl[\text{\Cref{alg:quantile-selection} selects }t\text{ as its first pick}\,\big|\,t \text{ is non-zero}\bigr]\frac{dt}{t}\nonumber\\
    &= \Pr\Bigl[\text{\Cref{alg:quantile-selection} picks nothing in }(0,k_1] \mathbin{\scalebox{1.5}{\ensuremath{\cap}}} \text{(all }s\in[k_1,t)\text{ are zero)}\Bigr]\frac{dt}{t}\nonumber\\
    &= \Bigl(1-\int_0^{k_1}\varphi(s)\,ds\Bigr)\frac{k_1}{t}\frac{dt}{t}
    = \frac{k_1^2}{t^2}\,dt.
\end{align}
The final equality uses \ref{eq:phi-after-k1-property4} with $t=k_1$. Combining the two cases yields:
\begin{equation}
\label{eq:h-from-phi-y}\tag{\texttt{h-function}}
     h(t) = \begin{cases}
     \varphi(t) & t \leq k_1,\\
     \displaystyle \frac{k_1^2\,\varphi(k_1)}{t^2}  & t> k_1.
   \end{cases}
\end{equation}
\end{itemize}
Before we verify that these functions $h$ and $g$ form a feasible solution for \ref{integrals}, we start by noticing that for all $t\in(0,k_1]$ we have $h(s)=\varphi(s)$ by construction. Moreover, for any $t\in[k_1,1]$,
\begin{align*}
h(t) + \int_0^t g(s,t)ds=\frac{k_1^2\varphi(k_1)}{t^2}+\int_0^{y(t)}\frac{\varphi(s)\tau(s)}{t^2}ds=\varphi(t)~,
\end{align*}
where the last equality holds because of \ref{eq:property6}. Therefore:
\begin{equation}
\label{usefulequation}
\varphi(t)=h(t) + \int_0^t g(s,t)ds,~~~\forall t\in(0,1]
\end{equation}
Furthermore, by integrating $g(s,\cdot)$ from $s$ to $t$, we have $\int_{s}^t g(s,r)dr=0$ when $t \leq \tau(s)$. For $t>\tau(s)$,
\begin{align*}
    \int_s^t g(s, r)dr = \int_{\tau(s)}^t \frac{\varphi(s) \tau (s)}{r^2}dr = \varphi(s) \tau (s)\left(\frac{1}{\tau(s)}-\frac{1}{t}\right) = \varphi(s) \left(1-\frac{\tau(s)}{t}\right).
\end{align*}
and therefore we conclude that:
\begin{align}\label{usefulequation2}
& \int_s^t g(s, r)dr = 
\begin{cases} 
      0 & t \leq \tau(s)\\
     \displaystyle\varphi(s)\left(1 - \frac{\tau(s)}{t}\right) & t> \tau(s) 
   \end{cases}~~~, 
   && \forall 0 < s \leq t \leq 1~.
\end{align}

Now, we will verify that the above solution fulfills each inequality in \ref{integrals} individually. Specifically, we will show \eqref{OIfromzero}, \eqref{OI}, and \eqref{coverage}.

To show \eqref{OIfromzero}, we consider two case. First, notice that the inequality is true for $t \leq k_1$ by \ref{eq:OIfromzero-phi-property3}. Second, the inequality for $t \geq k_1$ follows since the derivative of both sides is equal to $-\frac{\varphi(k_1)k_1^2}{t^2} = -h(t)$ for $t>k_1$, and that at $t=k_1$ both sides are equal because:
$$
k_1h(k_1)=k_1\varphi(k_1)=1-\int_{0}^{k_1}\varphi(s)\,ds=1-\int_{0}^{k_1} h(s)\,ds~,
$$
which holds due to \ref{eq:phi-after-k1-property4} and the fact that $y(k_1)=0$.

To show \eqref{OI}, we also consider two cases. If $t \leq \tau(s)$ the left hand side is $0$ and the right hand side has only positive terms therefore is positive. For $t > \tau(s)$, using \eqref{usefulequation} and \eqref{usefulequation2}, we have:
\begin{align*}
    h(s) + \int_0^s g(r, s)\,dr  - \int_{s}^{t} g(s, r)\,dr &= \varphi(s) - \varphi(s)\left(1-\frac{\tau(s)}{t}\right) = t \frac{\varphi(s)\tau(s)}{t^2} = tg(s,t) ,
\end{align*} and therefore the two sides of \eqref{OI} are indeed equal.

Finally, to show \eqref{coverage}, by using \eqref{usefulequation} and \eqref{usefulequation2} we have:
\begin{align*}
    h(s) + \int_0^s g(r, s)dr  - (1+f) \int_{s}^{1} g(s, r)dr &= \varphi(s) - (1+f)\varphi(s)\left(1-\tau(s)\right) \\
    &= (1+f)\varphi(s)\left(\tau(s)-c_f\right) = \Theta=\frac{1}{2-y(1)}~,
\end{align*}
where in the last two equalities we use \ref{eq:coverage-property5}. This completes the proof.\qed

\subsection{Proof of \Cref{lem:indifference}}
\label{apx:proof:lem-indifference}
Let us use the notation $s_i = \prod_{j=i+1}^n(1-p_i)$. We will prove by induction the stronger statement that $\Phi_i(x) = \Phi_i(0) + (s_i(1+f)-f)x$ for $x \in [0, x_i]$. Clearly the statement is true for $i = n$. Now suppose it is true for $i = t \geq 2$, and let us prove the statement for $i = t-1$.
We know from \eqref{eq:recurr} that:
\[\Phi_{t-1}(x) = \max \{ \Phi_t(x), (1-p_t) \Phi_t(x) + p_t (\Phi_t(x_t) - f x) \}.\]
Now we already know by the induction hypothesis that $\Phi_t(x)$ is linear when $x \in [0, x_t]$, and the two terms in the maximum are equal when $x = x_{t-1}$. We conclude that:
\[\Phi_{t-1}(x) = (1-p_t) \Phi_t(x) + p_t (\Phi_t(x_t) - f x) \text{ for } x \in [0, x_{t-1}].\]
Hence, $\Phi_{t-1}(x)$ is linear in the required range, and its slope is:
\[(1-p_t)(s_t(1+f)-f)-p_tf = s_{t-1}(1+f)-f,\]
as desired. Finally, plugging the fact that $\Phi_i(x_{i-1}) = \Phi_i(x_i) - f x_{i-1}$ for $i \in [1, n]$ into the Bellman recurrence \eqref{eq:recurr}, we conclude 
\[\Phi_{i-1}(x_{i-1}) = \max \{ \Phi_i(x_{i-1}), (1-p_i) \Phi_i(x_{i-1}) + p_i (\Phi_i(x_i) - f x_{i-1}) \}
= \Phi_i(x_i) - f x_{i-1},\]
and an easy induction gives that $\Phi_0(0) = \Phi_{n}(x_{n}) -f\sum_{i=1}^{n-1}x_i = x_{n} -f\sum_{i=1}^{n-1}x_i,$ as needed.\qed

\section{Omitted Technical Details}

\subsection{Scale-invariant Poisson Point Process}
\label{sec:apx-PPP}

A Poisson point process (PPP) $\eta$ on an interval $S \subseteq \mathbb R$ with intensity function $\lambda$ is a random set of countably many points in $S$ such that:
\begin{enumerate}
    \item For any Borel-measurable set $I \subset S$, $|\eta \cap I|$ is a Poisson random variable with mean $\Lambda(I) = \int_I \lambda(t) dt$. $\Lambda$ is referred to as the intensity measure, or the mean measure.
    \item For any finite collection of disjoint Borel-measurable sets $I_1, I_2, \ldots, I_k$ each contained in $S$, the random variables $|\eta \cap I_j|$ are independent of each other.
\end{enumerate}

The theory of Poisson processes tells us that such an $\eta$ exists, and is unique, as long as $\Lambda$ is a $s$-finite measure. For a full discussion, see \citet{penrosepoisson}. 

Since we aim to formalize the continuous limit in which a random variable at location $t$ is active with probability $\frac{dt}{t}$, we will only be interested in the case where $\lambda(t) = \frac{1}{t}$ on the interval $(0, 1]$ and $\Lambda((t, 1]) = -\log t$, the so-called scale-invariant Poisson point process. In this case, we can provide an explicit construction for $\eta$. Define independent random variables $Y_n$ where $Y_n$ has the same distribution as $U_1U_2 \cdots U_{n}$ where the $U_i$ are independent, uniform random variables on $[0,1]$. Then, if we define
$Q_i = 1 - \sum_{j=1}^i Y_j$, 
\[\eta = \{Q_i\}_{i=1}^{\infty}\]
is a PPP with intensity function $\lambda(t) = \frac{1}{t}$ \citep{arratiascaleinvariant}. Note that $Q_i$ is the $i$th largest point in $\eta$. Furthermore, $Y_1 = 1 - Q_1$, and $Y_i = Q_{i-1}-Q_i$, so the $Y_i$ are independent spacings between the $Q_i$. 

\subsection{Running time of \cref{alg:order-oblivious}}
\label{sec:running-time}

 \cref{alg:order-oblivious} can be implemented efficiently in polynomial time. First of all, we would like to understand the number of sampled values generated by the algorithm because generating these values is one of the primary processing steps the algorithm performs, other than simple comparisons between the various sampled values and the $X_i$ values it observes. 
 
 Note that the algorithm never needs to actually generate any values below its initial threshold $T$ because the algorithm ignores them anyway. Moreover, the algorithm only needs to generate the smallest value between $T$ and $F^{-1}(k_1)$, since this is the only one that could potentially influence the algorithm's behavior.

The number of quantiles above $k_1$ generated during any run of the algorithm is in fact upper bounded by a Poisson random variable, which is concentrated about the mean number of samples generated. The mean number of samples generated by the sampler above $k_1$ is at most $-\log k_1,$ since the intensity measure of the Poisson process the algorithm generates is $-\log F(t)$. Thus, the number of samples generated by the sampler, overall, is finite.


Moreover, we can sample a Poisson process on an interval $(x', x]$ with intensity measure $-\log F$ (and $x' >0$) efficiently using the following mapping based approach \citep{kingman-poisson-processes}: sample a scale-invariant Poisson process with intensity function $\frac{dt}{t}$ and for each point $z$ in this process, return $F^{-1}(z)$. Note that simulating a scale-invariant Poisson process is straightforward: the probability density function of the next active quantile after a number $\tau_0$ is $\frac{\tau_0}{t^2}$ (since the probability that a scale-invariant PPP generates no points in $(\tau_0, t)$ is exactly $\frac{\tau_0}{t}$).

One other concern is the necessity of knowing $\tau$ in order for the algorithm to make decisions. But $\tau$ itself can be computed by solving \ref{difEq} as described in \cref{sec:diff-eq-soln}. Note that many potential value of $\theta$ might need to be tried to find $\tau$, but a simple binary search quickly finds the approximately optimal $\theta$. It follows that with appropriate discretization if necessary, the algorithm can be implemented efficiently. Note that for any particular fixed $f$, this last step is merely preprocessing, and doesn't not have to be repeated for every fresh instance of the problem.

\section{Small buyback regime}
\label{sec:smallbuybackregime}
We would like to investigate the setting with $f\to 0$, and understand the asymptotic behavior of $\crf$.

\subsection{Construction of a multistage counterexample for small buyback factor}
\label{sec:multistage}

In this section, we are interested in the behavior of the competitive ratio for $f \to 0$. 
We present a counterexample which proves the following.

\begin{theorem}
\label{thm:multistage-example}
Given a buyback factor $0<f<\frac{1}{256}$, there is no algorithm for the buyback problem which achieves a competitive ratio better than $1 - \frac{1}{2} f \log_2 \frac{1}{256f}$.  
\end{theorem}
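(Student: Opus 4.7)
I would construct an explicit instance with $n = \Theta(\log \tfrac{1}{f})$ two-point random variables, generalizing the 3-variable example from Section~\ref{sec:small-buyback}, and then use the Bellman recurrence of Lemma~\ref{lem:recurr} to bound the optimal online reward $\Phi_0(0)$ against the prophet benchmark $\EX[X_{\max}]$. Guided by the footnote accompanying the 3-variable counterexample, the design principle is that at every stage the optimal online policy should be forced into \emph{indifference}: having accepted $X_i$, it should be indifferent between buying back for $X_{i+1}$ and keeping $X_i$. This is the structural feature that makes the 2- and 3-variable constructions tight, and iterating it across many stages is what should produce the $\Theta(f\log \tfrac{1}{f})$ cumulative loss.

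The construction would proceed as follows. Fix $n = \lfloor \log_2 \tfrac{1}{16f}\rfloor + 1$ and take $X_i = v_i \cdot Be(q_i)$ with $0 < v_1 < v_2 < \cdots < v_n$, the values growing geometrically (plausibly by a factor close to $2$, as in the 3-variable case). The probabilities $q_i$ would be chosen recursively by imposing, at each stage, the indifference identity $\Phi_t(v_t) - f v_{t-1} = \Phi_t(v_{t-1})$, which is exactly the buyback threshold singled out by Lemma~\ref{lem:online-monotonicity}. Once the $(v_i, q_i)$ are pinned down, I would compute $\EX[X_{\max}] = \sum_i v_i \hat{q}_i$ in closed form, where $\hat{q}_i = q_i\prod_{j>i}(1-q_j)$, and then run backward induction on the recurrence $\Phi_{t-1}(x) = \EX[\max\{\Phi_t(x),\Phi_t(X_t) - fx\}]$ to obtain a tractable expression (ideally an affine recursion) for $\Phi_0(0)$. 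The comparison step would then amount to showing that each of the $n$ stages contributes an additive loss of at least $\tfrac{f}{2}$ (appropriately normalized by $\EX[X_{\max}]$), so that the total loss is at least $\tfrac{f}{2}\log_2\tfrac{1}{16 f}$ of the benchmark. A natural organizational tool is a potential argument tracking the running shortfall of the online policy relative to the prophet and showing that this shortfall increases monotonically stage by stage.

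\textbf{Main obstacles.} The difficulty is twofold. First, one has to verify that the recursive construction of $(v_i,q_i)$ terminates with valid parameters $q_i \in [0,1]$; the threshold $f < \tfrac{1}{16}$ and the appearance of $\tfrac{1}{16f}$ (rather than $\tfrac{1}{f}$) inside the logarithm should both emerge from the feasibility of this recursion, and tuning the geometric growth factor to keep the $q_i$ bounded while maximizing per-stage loss is delicate. Second, the backward induction must be executed with enough precision that the lower-order error terms from each stage do not cumulatively wash out the $\tfrac{f}{2}$-per-stage contribution, since losses from early stages propagate through all subsequent dynamics via the convex functions $\Phi_t$. I expect that both issues can be handled simultaneously by engineering $v_i$ and $q_i$ so that the indifference condition collapses the recurrence into an essentially affine one in $\Phi$, which permits an \emph{exact} (rather than asymptotic) comparison of $\Phi_0(0)$ with $\EX[X_{\max}]$; establishing this algebraic simplification is the key step I anticipate being the main technical burden.
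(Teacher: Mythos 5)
Your proposal identifies the right high-level structure: a sequence of $\Theta(\log\frac{1}{f})$ two-point variables, with the optimal policy engineered to be indifferent at each stage, analyzed via backward induction on the Bellman recurrence. This is indeed the paper's approach (Section~\ref{sec:multistage}). However, the concrete parameterization you propose is both backwards relative to what works and internally inconsistent, and this matters because the actual numbers in the theorem depend on getting it right.

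You propose fixing the values to grow geometrically (ratio $\approx 2$) and then solving the indifference condition for the probabilities $q_i$. Two problems. First, the indifference condition at the final stage, $\Phi_n(v_{n-1}) = \Phi_n(v_n) - f v_{n-1}$, reads $v_{n-1} = v_n - f v_{n-1}$ because $\Phi_n(x)=x$, forcing $v_n = (1+f)v_{n-1}$ --- a ratio of $1+f$, not $\approx 2$. So geometric growth with a fixed ratio close to $2$ is incompatible with indifference at even the last step. (The ratio $\approx 2$ you extracted from the 3-variable example holds only near $f=1$; for small $f$ that $x \to 1$.) Second, the paper does the opposite of your proposal: it fixes all the probabilities to $p_i = \frac{1}{2}$ and lets the indifference condition $\Phi_t(a_{t-1}) = \Phi_t(a_t) - f a_{t-1}$ determine the values recursively backwards from $a_n = 1$. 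The resulting values satisfy $a_{n-k-1} = \bigl(1 - \tfrac{f}{1+f}2^k\bigr)a_{n-k}$, so the consecutive ratios range from roughly $1-f$ at the end to roughly $\tfrac{1}{2}$ at the start --- decidedly not geometric.

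The reason the paper's parameterization works, and the piece your plan is missing, is that with $p_i = \frac{1}{2}$ and indifference, one can prove by backward induction (Lemma~\ref{lem:inductive-facts}) that $\Phi_{n-k}$ is exactly affine on $[0,a_{n-k}]$ with slope $\tfrac{1+f}{2^k}-f$, and that $\Phi_{n-k-1}(0) - \Phi_{n-k}(0) = \tfrac{1+f}{2^{k+1}}a_{n-k-1}$. These closed forms let $\textsc{OPT} - \textsc{ALG}$ be telescoped exactly and bounded below by $\frac{1}{2}f\log_2\frac{1}{16f}$ after setting $n = \lfloor\log_2\frac{1}{f}\rfloor$. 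Your proposal gestures at ``an essentially affine recursion'' as the hoped-for simplification but does not identify the choice $p_i = \frac{1}{2}$ that makes it happen, and your proposed value schedule would prevent it. So the plan has the right philosophy but a genuine gap in the construction; executing it as written would not yield the stated bound.
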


This demonstrates an interesting phenomenon: Recall that for $f=0$, we can achieve a competitive ratio of $1$, simply by performing buyback whenever possible. For small $f>0$, we might aim to achieve a competitive ratio of $1 - c f$ for some fixed constant $c>0$. However, the counterexample shows that this is impossible, and the optimal competitive ratio $\alpha(f)$ as a function of the buyback factor $f$ satisfies  $\alpha'(0) = -\infty$.
As we prove in Theorem~\ref{thm:small-factor}, the behavior of the optimal competitive ratio for small $f>0$ is indeed $1 - \Theta(f \log \frac{1}{f})$.

\begin{proofof}{Proof}
Let us construct an instance with $X_i = a_i$ with probability $\frac{1}{2}$, and $0$ otherwise, $n = \lceil \log_2 \frac{1}{f} \rceil$, and $\Phi_t(a_{t-1}) = \Phi_t(a_t) - f a_{t-1},$ by choosing $a_n = 1$ and let $a_{n-k-1} = (1 - \frac{f}{1+f} 2^k) a_{n-k}$ for $k \in [0, n-2]$. \cref{lem:indifference} immediately tells us that the performance of the optimal algorithm is $\textsc{ALG} = a_n - f \sum_{k=1}^{n-1}a_{n-k} =  1-  f \sum_{k=1}^{n-1}a_{n-k}$. We observe that for $k \in [1, n-1]$: $$ a_{n-k} = \left(1 - \frac{f}{1+f}\right) \left(1 - \frac{2f}{1+f}\right) \cdots\left(1 - \frac{2^{k-1} f}{1+f}\right) \geq 1 - \frac{2^k f}{1+f}.$$
Hence, recalling that $n = \lceil \log_2 \frac{1}{f} \rceil$, we have:
$$\textsc{ALG} \leq 1 - f \sum_{k=1}^{n-1} \left(1 - \frac{2^k f}{1+f} \right)
\leq 1- f \left( n - 1 - \frac{2^n f}{1+f} \right) \leq 1 - f \left(\log_2 \frac{1}{f} - 3 \right) =  1 + 3f - f\log_2 \frac{1}{f}.$$
On the other hand, we know that:
\begin{align*}
\textsc{OPT} = \sum_{k=0}^{n-1} a_{n-k} \Pr[X_{\max} = a_{n-k}] 
= \sum_{k=0}^{n-1} \frac{1}{2^{k+1}} a_{n-k} &\geq \frac{1}{2} + \sum_{k=1}^{n-1} \frac{1}{2^{k+1}} \left(1 - \frac{2^k f}{1+f}\right) = 1 - \frac{1}{2^n}- \frac{(n-1)f}{2(1+f)}.
\end{align*}
Hence, $\text{OPT} \geq 1 - f - \frac{1}{2}f\log_2 \frac{1}{f}.$
Therefore, $\textsc{ALG} \leq \textsc{OPT} + 4f - \frac12 f \log_2 \frac{1}{f}$, and hence $\textsc{ALG} < \left(1 - \frac12 f \log_2 \frac{1}{256f}\right) \textsc{OPT}$, considering that $\textsc{OPT} < 1$.
\end{proofof}

In the next section, we show how a simple algorithm achieves the asymptotic optimal competitive ratio of $1-\Theta\left(f\log\tfrac{1}{f}\right)$, establishing that the bound in \Cref{thm:multistage-example} is (almost) tight.

\subsection{Threshold based algorithm}
\label{sec:thresholdalgorithm}
In this section, we present a simple algorithm which achieves a competitive ratio of $1 - \Theta(f \log \frac{1}{f})$ for small $f>0$, matching the example presented in Section~\ref{sec:multistage} up to constant factors. This algorithm also gives 
a constant improvement over $1/2$, for any fixed buyback factor $f > 0$. The algorithm is a simple extension of the standard threshold algorithm: After setting an initial threshold, and accepting some variable above the threshold, we buy back and replace the current variable if and only if this brings positive profit.

\paragraph{The algorithm.}
\begin{itemize}
\item Set an initial threshold $T$ (to be determined).
\item If a variable $X_i$ arrives and $X_i \geq T$, then select $X_i$.
\item Given a currently selected value $X_i$, if a variable $X_j$ arrives and $X_j > (1+f) X_i$, then discard $X_i$ and select $X_j$.
\end{itemize}

An advantage of this algorithm is that is order oblivious, and as will see $T$ will depend only on the distribution of $X_{\textrm{max}}$.

\subsubsection{General analysis approach}

Let us first develop some general formulae to analyze this algorithm.

\begin{lemma}
Suppose that the algorithm selects at least $k$ variables and the $k$-th selected variable is $X^{(k)}$. Then the profit of the algorithm is at least $X^{(k)} / (1+f)^{k-1}$.
\end{lemma}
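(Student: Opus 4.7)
Let $X^{(1)}, X^{(2)}, \ldots, X^{(m)}$ denote the sequence of variables selected by the algorithm (in chronological order), where $m \geq k$. Write $P_j := X^{(j)} - f \sum_{i=1}^{j-1} X^{(i)}$ for the running net profit after the first $j$ selections. The plan is to prove two things: (i) $P_j$ is monotonically nondecreasing in $j$, so $P_m \geq P_k$; and (ii) $P_k \geq X^{(k)}/(1+f)^{k-1}$. Since the final profit of the algorithm equals $P_m$, combining these two facts yields the claim.

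For (i), the buyback rule guarantees $X^{(j+1)} > (1+f) X^{(j)}$ for every $j \geq 1$ (each swap is made only when strictly profitable). Hence
\[
P_{j+1} - P_j = X^{(j+1)} - X^{(j)} - f X^{(j)} = X^{(j+1)} - (1+f) X^{(j)} > 0,
\]
so $P_m \geq P_k$.

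For (ii), iterating $X^{(i+1)} > (1+f) X^{(i)}$ gives $X^{(i)} < X^{(k)}/(1+f)^{k-i}$ for every $1 \leq i \leq k-1$. Summing the resulting geometric series,
\[
f \sum_{i=1}^{k-1} X^{(i)} < f X^{(k)} \sum_{i=1}^{k-1} (1+f)^{-(k-i)} = f X^{(k)} \cdot \frac{1 - (1+f)^{-(k-1)}}{f} = X^{(k)}\left(1 - (1+f)^{-(k-1)}\right).
\]
Substituting into the definition of $P_k$,
\[
P_k = X^{(k)} - f \sum_{i=1}^{k-1} X^{(i)} > X^{(k)} - X^{(k)}\left(1 - (1+f)^{-(k-1)}\right) = \frac{X^{(k)}}{(1+f)^{k-1}}.
\]

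No step here looks genuinely hard; the only thing to be careful about is making sure the bookkeeping correctly separates the ``up to the $k$-th selection'' accounting from any later selections, which is precisely what the monotonicity $P_{j+1} \geq P_j$ takes care of. This monotonicity is the conceptual content of the lemma: each subsequent buyback the algorithm performs is, by design, profit-improving, so it suffices to lower-bound the net profit at the moment of the $k$-th selection.
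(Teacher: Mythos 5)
Your proof is correct and rests on the same two observations as the paper's: the running profit is nondecreasing across buybacks (each swap is made only when strictly profitable), and the profit at the moment of the $k$-th selection is at least $X^{(k)}/(1+f)^{k-1}$. The only difference is presentational: where the paper proves the second fact by induction on $k$, keeping track of the running profit $p_k$ and using $p_k = p_{k-1} + (X^{(k)} - (1+f)X^{(k-1)})$ together with the inductive bound on $p_{k-1}$, you unroll that induction into the explicit expression $P_k = X^{(k)} - f\sum_{i=1}^{k-1}X^{(i)}$ and sum the geometric series directly. The two arguments are mathematically the same computation, so this is essentially the paper's proof written in closed form.
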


\begin{proofof}{Proof}
By induction: If the first selected variable is $X^{(0)}$, then the profit at that point is clearly $X^{(0)}$. Observe also that profit never goes down after future buybacks, so the profit at the end is at least $X^{(0)}$.

Assuming that the $k$-th selected variable is $X^{(k)}$ and the variable selected just before that was $X^{(k-1)}$, by induction our profit after selecting $X^{(k-1)}$ was $p_{k-1} \geq X^{(k-1)} / (1+f)^{k-2}$. After the last buyback, since we select $X^{(k)}$ only if $X^{(k)} > (1+f) X^{(k-1)}$, and we pay $f \cdot X^{(k-1)}$ for discarding $X^{(k-1)}$, our profit is 
$$ p_k = p_{k-1} + (X^{(k)} - (1+f) X^{(k-1)}) \geq \frac{X^{(k-1)}}{(1+f)^{k-2}} + \frac{1}{(1+f)^{k-1}} (X^{(k)} - (1+f) X^{(k-1)}) = \frac{X^{(k)}}{(1+f)^{k-1}}, $$ as desired.
\end{proofof}
Next, consider the classical analysis of the prophet inequality, which uses the fact that our algorithm achieves expected value at least $T \cdot \Pr[X_{\max} \geq T] + \EX[(X_{\max}-T)_+] \cdot \Pr[X_{\max} < T]$. We claim the following extension of this bound.

\begin{lemma}
\label{lemma:buyback-bound}
The expected profit of our algorithm is
$$ \EX[ALG] \geq T \cdot \Pr[X_{\max} \geq T] + 
 \EX\left[ \left( \frac{X_{\max}}{(1+f)^{S'}} - T \right)_+ \right],$$
 where $S' = \sum_{j} \indic(X_j' \geq T)$ and $X_j'$ is an independent copy of $X_j$.
\end{lemma}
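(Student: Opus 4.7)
Plan of proof. I will combine a deterministic (pathwise) lower bound on $\textsc{ALG}$ with a coupling/stochastic-dominance argument that lets us pass from a count which is coupled to $X_{\max}$ to the independent-copy count $S'$. Let $i^*=\argmax_j X_j$ and set $N=|\{j<i^*:X_j\ge T\}|$, the number of variables strictly before the argmax that exceed $T$. The pathwise claim I would prove is
$$ \textsc{ALG}\;\ge\;T\cdot\indic(X_{\max}\ge T)+\bigl(X_{\max}/(1+f)^N-T\bigr)_+. $$
When $X_{\max}<T$ one has $N=0$ and $(X_{\max}/(1+f)^0-T)_+=0$, so the bound is the trivial $\textsc{ALG}\ge 0$. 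When $X_{\max}\ge T$, the first variable $\ge T$ (which occurs at a time $\le i^*$) is accepted and later swaps only grow the profit, so $\textsc{ALG}\ge T$. For the $(\cdot)_+$ term I would invoke the previous lemma ($\textsc{ALG}\ge X^{(k)}/(1+f)^{k-1}$ after the $k$-th selection) together with two observations: (a) no swap occurs strictly after time $i^*$, because any such swap would require a value exceeding $X_{\max}$; and (b) either $X_{\max}$ is itself the $K$-th selection with $K-1\le N$ (its $K-1$ predecessors live at times $<i^*$ and each is $\ge T$), yielding $\textsc{ALG}\ge X_{\max}/(1+f)^{K-1}\ge X_{\max}/(1+f)^N$, or the swap condition failed at $i^*$, in which case the current holding satisfies $X^{(K)}\ge X_{\max}/(1+f)$ with $K\le N$, giving $\textsc{ALG}\ge X^{(K)}/(1+f)^{K-1}\ge X_{\max}/(1+f)^N$.

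After taking expectations, it remains to prove $\EX[(X_{\max}/(1+f)^N-T)_+]\ge\EX[(X_{\max}/(1+f)^{S'}-T)_+]$. I would decompose the LHS over the argmax, $\sum_i\EX[(X_i/(1+f)^{N_i}-T)_+\,\indic(i^*=i)]$, where $N_i=|\{j<i:X_j\ge T\}|$. Conditioning on $\{X_i=v\}$ for $v\ge T$, the event $\{X_j<v\ \forall j\ne i\}$ factors out as $\prod_{j\ne i}\Pr[X_j<v]$, and the residual conditional law of $N_i$ becomes $\widetilde{N}_i(v)=\sum_{j<i}\mathrm{Ber}(\tilde p_j(v))$ (independent Bernoullis) with $\tilde p_j(v)=\Pr[T\le X_j<v]/\Pr[X_j<v]$. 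A one-line check gives $\tilde p_j(v)\le p_j:=\Pr[X_j\ge T]$ for every $j$, so using shared uniforms (quantile coupling) I obtain a joint realization in which $\widetilde{N}_i(v)\le S'$ almost surely. Since $s\mapsto(v/(1+f)^s-T)_+$ is nonincreasing in $s$, this yields $\EX[(v/(1+f)^{\widetilde{N}_i(v)}-T)_+]\ge\EX[(v/(1+f)^{S'}-T)_+]$. Reassembling the sum over $i$, the weights $f_i(v)\prod_{j\ne i}\Pr[X_j<v]$ are exactly the density of $X_{\max}$, and since the integrand vanishes for $v<T$ we recover the claimed $\EX[(X_{\max}/(1+f)^{S'}-T)_+]$.

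The main obstacle is choosing the right count in the pathwise step. The naive bound with $S=\sum_j\indic(X_j\ge T)$ is not tight enough: by FKG, $X_{\max}$ and $(1+f)^{-S}$ are negatively correlated (monotone functions of $\vec X$ in opposite directions), which makes $\EX[(X_{\max}/(1+f)^{S'}-T)_+]$ \emph{larger} than its coupled counterpart $\EX[(X_{\max}/(1+f)^S-T)_+]$, so the FKG inequality goes the wrong way and an argument with $S$ in place of $N$ cannot close. The whole proof hinges on replacing $S$ by the strictly smaller $N$ (indeed $N\le S-1$ whenever $X_{\max}\ge T$), which in turn requires carefully justifying that no swaps occur after $i^*$ and, in the case when $X_{\max}$ is not selected, that the current holding at $i^*$ is at least $X_{\max}/(1+f)$.
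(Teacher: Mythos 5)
Your proof is correct and reaches the same bound, but it takes a genuinely different route in the decoupling step. The paper also decomposes over the location $i$ of the maximum and uses a pathwise bound involving the number $\theta_i$ of picks made before time $i$ (which is bounded by $\sum_{j\ne i}\indic(X_j\ge T)$), then conditions on $X_i$ and applies the FKG inequality on the product space $(X_j:j\ne i)$: conditioned on $X_i$, both $\bigl(\tfrac{X_i}{(1+f)^{\sum_{j\ne i}\indic(X_j\ge T)}}-T\bigr)_+$ and $\indic(X_i=X_{\max})$ are decreasing in $(X_j:j\ne i)$, hence positively correlated, which lets the paper pass to independent copies and then bound $\sum_{j\ne i}\indic(X'_j\ge T)\le S'$. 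You instead keep the tighter count $N_i=|\{j<i^*:X_j\ge T\}|$ in the pathwise step and then replace FKG by an explicit stochastic-dominance/quantile-coupling argument, computing the conditional Bernoulli parameters $\tilde p_j(v)=\Pr[T\le X_j<v]/\Pr[X_j<v]\le p_j$. Both decoupling arguments are sound; yours is more hands-on and gives a slightly sharper intermediate bound, while the paper's FKG step is shorter and avoids the explicit conditional-law computation. One correction to your discussion: your closing remark that \emph{``the FKG inequality goes the wrong way and an argument with $S$ in place of $N$ cannot close''} overstates the difficulty. The problematic FKG you describe is the unconditional one relating $X_{\max}$ and $(1+f)^{-S}$; the paper does not use that. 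It conditions on $X_i$ first, uses the exponent $\sum_{j\ne i}\indic(X_j\ge T)$ (not $S$), and applies FKG only to $(X_j)_{j\ne i}$, where both factors are decreasing — that application does go the right way and does close the argument. So the coupling machinery you built is an alternative to FKG, not a necessary replacement for it.
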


\begin{proofof}{Proof}
    First, let us note that if the maximum is $X_i$, and the algorithm makes $\theta_i$ picks before $X_i$, \cref{lemma:buyback-bound} implies that in the case that the maximum is above $T$, the algorithm obtains at least
    $$\max \left(T, \frac{X_i}{(1+f)^{\theta_i}}\right).$$
It follows that 
\begin{eqnarray*}
 \EX[ALG] & \geq & T \cdot \Pr[X_{\max} \geq T] + 
\sum_{i=1}^{n} \EX\left[ \left( \frac{X_{i}}{(1+f)^{\theta_i}} - T \right)_+ \indic(X_i = X_{\max}) \right] \\
& \geq & T \cdot \Pr[X_{\max} \geq T] + \sum_{i=1}^{n} \EX\left[ \left( \frac{X_{i}}{(1+f)^{\sum_{j \neq i} \indic(X_j \geq T)}} - T \right)_+ \indic(X_i = X_{\max}) \right],
\end{eqnarray*}
since clearly, $\theta_i \leq \sum_{j \neq i} \indic(X_j \geq T)$.
Observe that conditioned on $X_i$, $\left( \frac{X_{i}}{(1+f)^{\sum_{j \neq i} \indic(X_j \geq T)}} - T \right)_+$ is a decreasing function of $(X_j: j \neq i)$, and $\indic(X_i = X_{\max})$ is also a decreasing function of $(X_j: j \neq i)$. So we can apply the FKG inequality on the product space of $(X_j: j \neq i)$, for every fixed value of $X_i$, and replace the variables $(X_j: j \neq i)$ in the first expression by independent copies $(X'_j: j \neq i)$. Using the notation $S' = \sum_{j} \indic(X_j' \geq T)$, we obtain
\begin{eqnarray*}
\EX[ALG] 
& \geq & T \cdot \Pr[X_{\max} \geq T] + \sum_{i=1}^{n} \EX\left[ \left( \frac{X_{i}}{(1+f)^{\sum_{j \neq i} \indic(X_j' \geq T)}} - T \right)_+ \indic(X_i = X_{\max}) \right] \\
& \geq & T \cdot \Pr[X_{\max} \geq T] + \sum_{i=1}^{n} \EX\left[ \left( \frac{X_{i}}{(1+f)^{S'}} - T \right)_+ \indic(X_i = X_{\max}) \right] \\
 & = & T \cdot \Pr[X_{\max} \geq T] + 
\EX\left[ \left( \frac{X_{\max}}{(1+f)^{S'}} - T \right)_+ \right].
\end{eqnarray*}
\end{proofof}

We will also use the following useful comparison between summations of independent Bernoulli variables and a Poisson variable. 

\begin{lemma} 
\label{lemma:Poisson-compare}
If $S' = \sum_{j=1}^{n} \indic(X_j' \geq T)$, for independent random variables $X'_1,\ldots,X'_n$, and $\Pr[\forall j; X'_j < T] = e^{-\lambda}$, then $S'$ is stochastically dominated by a Poisson random variable with mean $\lambda$. In particular, for any $\ell \geq 1$,
$$ \Pr[S' \leq \ell] \geq e^{-\lambda} \sum_{k=0}^{\ell} \frac{\lambda^k}{k!}.$$
\end{lemma}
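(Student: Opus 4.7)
The plan is to prove stochastic domination by an explicit coupling that matches each Bernoulli indicator $\indic(X_j' \geq T)$ to the indicator of positivity of a Poisson random variable. First, set $p_j = \Pr[X_j' \geq T]$, so $S'$ is a sum of independent Bernoulli variables with parameters $p_1,\ldots,p_n$. The hypothesis $\Pr[\forall j;~X_j' < T] = e^{-\lambda}$ rewrites as $\prod_{j=1}^n (1-p_j) = e^{-\lambda}$, or equivalently $\sum_{j=1}^n \lambda_j = \lambda$ where we define $\lambda_j := -\ln(1-p_j) \geq 0$.

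Next, I would introduce independent Poisson random variables $N_j \sim \mathrm{Poi}(\lambda_j)$ on a common probability space, and set $B_j := \indic(N_j \geq 1)$. Then $\Pr[B_j = 1] = 1 - e^{-\lambda_j} = p_j$, so the vector $(B_1,\ldots,B_n)$ has the same joint distribution as $(\indic(X_1' \geq T),\ldots,\indic(X_n' \geq T))$. In particular, $\sum_j B_j$ has the same distribution as $S'$. By additivity of independent Poisson variables, $N := \sum_{j=1}^n N_j \sim \mathrm{Poi}(\lambda)$.

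The key pointwise inequality is $B_j = \indic(N_j \geq 1) \leq N_j$, which holds because $N_j$ is a non-negative integer. Summing over $j$ gives $\sum_j B_j \leq N$ almost surely on the coupled space, which establishes that $S'$ is stochastically dominated by $\mathrm{Poi}(\lambda)$. Consequently, for every $\ell \geq 1$,
\[
\Pr[S' \leq \ell] \;=\; \Pr\Bigl[\sum_j B_j \leq \ell\Bigr] \;\geq\; \Pr[N \leq \ell] \;=\; e^{-\lambda} \sum_{k=0}^{\ell} \frac{\lambda^k}{k!},
\]
which is the desired bound. There is no real obstacle here: the only subtle point is choosing the Poisson parameters $\lambda_j = -\ln(1-p_j)$ so that the Bernoulli marginals match exactly and the Poisson parameters sum to $\lambda$; once this bookkeeping is done, the coupling argument is immediate.
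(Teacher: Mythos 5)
Your proof is correct and essentially the same as the paper's: both constructions set $\lambda_j = -\ln(1-p_j)$, couple each Bernoulli indicator to a $\mathrm{Poi}(\lambda_j)$ variable so that the indicator is pointwise dominated, and then sum. The only cosmetic difference is that you build the Poisson variables first and derive the Bernoullis from them, whereas the paper couples the Poissons directly to the events $\{X_j' < T\}$; both yield the same coupling and the same conclusion.
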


\begin{proofof}{Proof}
    Let $p_i = \Pr[X_i' \geq T]$, and let $Y_i$ be a Poisson random variable with mean $\lambda_i = -\log(1-p_i)$, where $Y_i = 0$ whenever $X_i'<T$. (Note that $\Pr[Y_i=0] = e^{-\lambda_i} = 1-p_i = \Pr[X'_i<T]$.) Hence, $Y_i \geq \indic(X'_i \geq T)$, and $S' = \sum_{j=1}^{n} \indic(X_j' \geq T) \leq \sum_{j=1}^{n} Y_j$  with probability $1$. Finally, note that $Y = \sum_{j=1}^{n} Y_j$ is also a Poisson random variable, with mean $\lambda = -\sum_{j=1}^{n} \log (1-p_j) = -\log \prod_{j=1}^{n} (1-p_j) = -\log \Pr[\forall j; X'_j < T]$.
\end{proofof}

\subsubsection{Analysis for small buyback factors}

Consider now the case of small $f$ (e.g. $f \in (0,1/2)$).
We can use Lemmas~\ref{lemma:buyback-bound} and \ref{lemma:Poisson-compare} to get a good bound on the algorithm's performance in this regime. We note that this bound beats the optimal randomized algorithm for the buyback problem with no information, for all $f$ \citep{ashwinkumar2009randomized}.\footnote{The bound presented here is worse than the classical prophet inequality when $f \to \infty$ (in which case our bound tends to $1/3$) as we are aiming primarily to match the asymptotic upper bound of $1-\Theta\left(f\log(\tfrac{1}{f})\right)$  we obtained earlier.} 

\begin{theorem}
\label{thm:small-factor}
There is an algorithm which achieves a competitive ratio of at least 
$$\max_{0 \leq x \leq 1}\frac{(1-x)\cdot x^{\frac{f}{1+f}}}{1-x+x^{\frac{2f+1}{f+1}}} \geq \frac{1}{\frac{f}{1+f} + \left(2 + \frac{1}{f}\right)^{\frac{f}{1+f}}}.$$
\end{theorem}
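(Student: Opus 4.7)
The plan is to choose a threshold $T$ (randomized over the CDF of $X_{\max}$ if it has atoms), set $x := \Pr[X_{\max} < T]$, and combine the two preceding Lemmas with a Lagrangian-style pointwise inequality to produce a clean lower bound on the competitive ratio as a function of $x$; the second inequality in the theorem then follows by plugging in $x = f/(1+2f)$.

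First, Lemma~\ref{lemma:buyback-bound} yields $\EX[\mathrm{ALG}] \geq T(1-x) + \EX[(X_{\max}/(1+f)^{S'} - T)_+]$. By Lemma~\ref{lemma:Poisson-compare}, $S'$ is stochastically dominated by $Y \sim \mathrm{Poisson}(\log(1/x))$, and since $(1+f)^{-s}$ is decreasing in $s$, $\EX[(1+f)^{-S'}] \geq \EX[(1+f)^{-Y}] = e^{-\log(1/x)\cdot f/(1+f)} = x^{f/(1+f)} =: \beta$. Because $X_{\max}$ and $S'$ are independent and $u \mapsto (Mu - T)_+$ is convex for each fixed $M$, Jensen's inequality (applied to the inner expectation in $S'$) gives $\EX_{S'}[(M/(1+f)^{S'} - T)_+ \mid X_{\max} = M] \geq (M\beta - T)_+$. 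Taking the outer expectation over $X_{\max}$ delivers the intermediate bound
$$\EX[\mathrm{ALG}] \;\geq\; T(1-x) + \EX\bigl[(\beta X_{\max} - T)_+\bigr].$$

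The crux is then to show that for any nonnegative random variable $M$ with $\Pr[M < T] = x$,
$$T(1-x) + \EX[(\beta M - T)_+] \;\geq\; \frac{(1-x)\beta}{1 - x + x\beta}\cdot \EX[M].$$
I would prove this by exhibiting the Lagrangian multiplier $\lambda = T(1-x)(1-\beta)$ and verifying the pointwise inequality
$$(1{-}x{+}x\beta)(\beta M - T)_+ \;-\; (1{-}x)\beta\, M \;+\; T(1{-}x)(1{-}x{+}x\beta) \;+\; \lambda\bigl(x - \indic(M < T)\bigr) \;\geq\; 0$$
for every $M \geq 0$. Splitting into the three regions $M < T$, $T \leq M < T/\beta$, and $M \geq T/\beta$, and using the identity $(1-x+x\beta) - (1-x)(1-\beta) = \beta$ along with its variants, each region reduces to a manifestly nonnegative expression---respectively $(1-x)\beta(T-M)$, $(1-x)(T - \beta M)$, and $x\beta(\beta M - T)$. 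Taking expectations against any $M$ with $\Pr[M<T] = x$ annihilates the $\lambda$-term and rearranges to the displayed inequality, which upon combining with the intermediate bound and optimizing over $x$ proves the first inequality in the theorem.

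For the second inequality, I would plug in $x = f/(1+2f)$, so that $1/x = 2 + 1/f$ and $1 - x = (1+f)/(1+2f)$; a short algebraic calculation then gives
$$\frac{(1-x)\beta}{1 - x + x\beta} \;=\; \frac{(1+f)\beta}{(1+f) + f\beta} \;=\; \frac{1}{1/\beta + f/(1+f)} \;=\; \frac{1}{(2+1/f)^{f/(1+f)} + f/(1+f)},$$
which is exactly the RHS. The main obstacle I expect is identifying the correct Lagrangian inequality---especially guessing that the worst-case $M$ is essentially a two-point distribution supported on $\{T^-, T/\beta\}$ with masses $x$ and $1-x$, which pinpoints both the multiplier $\lambda = T(1-x)(1-\beta)$ and the specific linear combination that makes the pointwise inequality factor cleanly in each region. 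Once this Ansatz is in hand, the three-case verification is routine.
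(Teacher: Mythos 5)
Your proof is correct, and it takes a somewhat different route from the paper's. Both start from Lemma~\ref{lemma:buyback-bound} and Lemma~\ref{lemma:Poisson-compare}, but they diverge in how the expression $\EX\bigl[(X_{\max}/(1+f)^{S'} - T)_+\bigr]$ is handled. The paper drops the positive part by writing $y_+ \geq A\,y$ for a chosen random variable $0 \leq A \leq 1$, then sets $A = c\,\indic(X_{\max} \geq T)$, factorizes over the independence of $Z$ and $X_{\max}$, and tunes $c$ so the residual $T$-terms cancel. You instead keep the positive part, apply Jensen's inequality in the inner ($S'$) expectation to collapse $(1+f)^{-S'}$ to its mean (bounded below via stochastic dominance by $\beta = x^{f/(1+f)}$), arriving at the cleaner intermediate bound $\EX[\mathrm{ALG}] \geq T(1-x) + \EX[(\beta X_{\max} - T)_+]$, and then finish with a pointwise Lagrangian inequality whose multiplier you guessed from the two-point worst-case $M \in \{T^-, T/\beta\}$. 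The three-region verification you sketch checks out (the reductions to $(1-x)\beta(T-M)$, $(1-x)(T-\beta M)$, $x\beta(\beta M - T)$ are exactly right, and the $\lambda$-term vanishes in expectation since $\Pr[M<T]=x$), and the substitution $x = f/(1+2f)$ recovers the RHS. Neither route is stronger — both saturate on the same two-point distribution and yield the identical bound — but your Jensen step makes the mechanism behind the paper's parameter $c$ more transparent, isolating a ``virtual'' prophet inequality for the scaled variable $\beta X_{\max}$ at threshold $T$ before any parameter tuning occurs.
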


We remark that the last expression in the theorem follows from plugging in $x = \frac{f}{1+2f}$. For $f \in (0,\frac12)$, it is easy to see that this is $1-\Theta\left(f\log(\tfrac{1}{f})\right)$, indeed,  $(2 + \frac{1}{f})^{\frac{f}{1+f}} < (\frac{2}{f})^f = e^{f \ln \frac{2}{f}} < 1 + 2f \ln \frac{2}{f}$, so the competitive ratio is at least $\frac{1}{\frac{f}{1+f} + 1 + 2f \ln \frac{2}{f}} > \frac{1}{1 + 2f \ln \frac{4}{f}}> 1 - 2f \ln \frac{4}{f}$.

\begin{proofof}{Proof}
From Lemmas~\ref{lemma:buyback-bound} and \ref{lemma:Poisson-compare}, we get:
If $Z$ is a Poisson random variable with mean $\lambda$ (with $e^{-\lambda} = \Pr[X_{\max}< T]$), and $0\leq A \leq 1$ is any random variable (possibly correlated with $X_1,\ldots,X_n$), then 
\begin{eqnarray*}
\EX[ALG]  & \geq & T \cdot \Pr[X_{\max} \geq T] + 
\EX\left[ \left( \frac{X_{\max}}{(1+f)^{S'}} - T \right)_+ \right] \\
& \geq & T \cdot \Pr[X_{\max} \geq T] + 
\EX\left[  A \cdot\left( \frac{X_{\max}}{(1+f)^{Z}} - T \right) \right] \\
& = & T \cdot (\Pr[X_{\max} \geq T] -\EX[A]) + 
\EX\left[  A \cdot\frac{X_{\max}}{(1+f)^{Z}} \right]. 
\end{eqnarray*}
Recall that $\Pr[X_{\max} \geq T] = 1 - e^{-\lambda}$. Now, take $A = c \cdot \indic(X_{\max} \geq T)$ to get that 
\begin{eqnarray*}
\EX[ALG]  & \geq & T \cdot (\Pr[X_{\max} \geq T] -\EX[A]) + 
\EX\left[  A \cdot\frac{X_{\max}}{(1+f)^{Z}} \right] \\
& = & T (1 - e^{-\lambda})(1-c) + 
c \cdot \EX\left[\frac{1}{(1+f)^{Z}} \right]\cdot \EX[X_{\max}\indic(X_{\max} \geq T)] \\
& = & T(1 - e^{-\lambda})(1-c) + 
c e^{\frac{-\lambda f}{1+f}}\cdot \EX[X_{\max}\indic(X_{\max} \geq T)],
\end{eqnarray*}
where $\EX[\frac{1}{(1+f)^Z}] = e^{\frac{-\lambda f}{1+f}}$ follows from an elementary computation for the Poisson random variable $Z$.
Since 
$$\EX[X_{\max}] \leq T \cdot \Pr[X_{\max} < T] + \EX[X_{\max}\indic(X_{\max} \geq T)]
= T e^{-\lambda} + \EX[X_{\max}\indic(X_{\max} \geq T)],$$
setting 
$$ (1 - e^{-\lambda})(1-c) = ce^{\frac{-\lambda f}{1+f}} e^{-\lambda}$$
guarantees a competitive ratio of 
$ce^{\frac{-\lambda f}{1+f}}.$
Letting $x = e^{-\lambda}$, and solving for $c$, we get 
$$c = \frac{1-x}{1-x+x^{\frac{2f+1}{f+1}}},$$
which means we are guaranteed a competitive ratio of 
$$\max_{0 \leq x \leq 1}\frac{(1-x)\cdot x^{\frac{f}{1+f}}}{1-x+x^{\frac{2f+1}{f+1}}},$$
as desired. 
\end{proofof}
\

We remark that with $A = \indic(X_{\max} \geq T) \ \indic(Z \leq 1)$, we could have obtained another, different lower bound on the competitive ratio, better for medium values of $f$. Another reasonable option is $A =\indic(X_{\max} \geq T) c_2^Z$, which can recover approximately the correct result near both $f= 0 $ and $f = \infty$.


\section{Numerical Experiment}
\label{sec:numerical}
In this section we experiment the performance of \Cref{alg:order-oblivious} and compare it with different benchmarks.

\medskip
\noindent\textbf{Settings:}
Each instance of the experiment consists of $n=7$ distributions, where each distribution follows a Generalized Pareto Distribution (GPD). The hyperparameters of these distributions, location $\mu$, scale $\sigma$, and shape $\psi$, are sampled independently at the beginning of each simulation from uniform distributions (to generate enough heterogeneity in our randomized instances). In particular
\begin{align*}
    \mu\sim \textrm{Unif}(0,16),\qquad \sigma\sim \textrm{Unif}(0,4),\qquad \psi\sim \textrm{Unif}(-1,0.5)
\end{align*}
Once the hyperparameters of the GPD distributions are sampled and fixed for a given instance, we generate $500$ independent realizations of the seven random variables to run a Monte Carlo simulation. \Cref{alg:order-oblivious} and benchmarks are then applied to each realization. Since both the sampled values and the decision-making process of the algorithms involve randomness, multiple runs per instance are necessary to obtain robust performance estimates. To evaluate performance, we compute the ratio of the average net reward of each algorithm (the final value minus the buyback cost) to the average of $X_{\max}$ over the $500$ realizations. We then aggregate these results across 100 sampled instances and visualize them using box-and-whisker plots for different values of $f$ in \Cref{fig:numerical}.

\medskip
\noindent\textbf{Benchmarks:} In the graphs, five algorithms are compared, as explained below from left to right:
\begin{itemize}
\item \underline{\textsc{ALG}}: This is basically \Cref{alg:order-oblivious}, the main algorithm of this paper, with the theoretical guarantee establishing that it is optimal competitive ( \cref{thm:alg-CR} and \Cref{thm:WorstCase}).\footnote{We make a slight modification to our algorithm to make it more practical. It particular, if the algorithm considers accepting a random variable $X_i$ with probability $p\in(0,1]$, and $X_i$ corresponds to a quantile of $X_\textrm{max}$ larger than $y_f(1)$, we boost $p$ to $1$ (in some sense, this is the analog of forcing the algorithm to accept really high values).}
\item \underline{\textsc{BHK}}: The optimal deterministic algorithm tailored for adversarial arrivals, introduced by \citet{babaioff2008selling}. This algorithm is essentially a greedy algorithm with some margins: it will swap the held value $x$ with the current value $y$ if $y>\gamma\cdot x$, where parameter $\gamma >(1+f)$ is tuned properly as a function of $f$. 

\item \underline{\textsc{BK}}: The optimal randomized algorithm for adversarial arrivals, proposed by \citet{ashwinkumar2009randomized}. This algorithm is essentially a greedy algorithm similar to \textsc{BHK} with $\gamma=1$, but with the difference that it first rounds the values down to a specific randomized grid before running the greedy algorithm. 
    \item \underline{\textsc{Median}}: The classic algorithm for the prophet inequality. This algorithm accepts the first value greater than the median of $X_{\max}$ and never buys back.
    \item \underline{\textsc{Threshold Greedy}}: This algorithm, introduced and analyzed in \Cref{sec:smallbuybackregime}, accepts a value greater than the $\frac{f}{1+2f}$-quantile of $X_{\max}$ for the first time, and switches to new variables if the margin is positive, i.e., if the value is greater than $1+f$ times the previously accepted value.
\end{itemize}

\medskip
\noindent\textbf{Results:}
The results are shown in \Cref{fig:numerical}. As observed, \Cref{alg:order-oblivious} consistently ranks among the best algorithms under random instances, while also benefiting from the theoretical guarantee it provides. Notably, the advantage of \Cref{alg:order-oblivious} over algorithms tailored for adversarial arrivals becomes more apparent for larger values of $f$. In this regime, the algorithm performs similarly to the median algorithm, which is optimal for $f=\infty$. On the other hand, for small values of $f$, while the median algorithm performs poorly, \Cref{alg:order-oblivious} significantly outperforms it by efficiently utilizing the buyback option.

\begin{figure}
    \centering    \includegraphics[width=\textwidth,height = 1.35\textwidth]{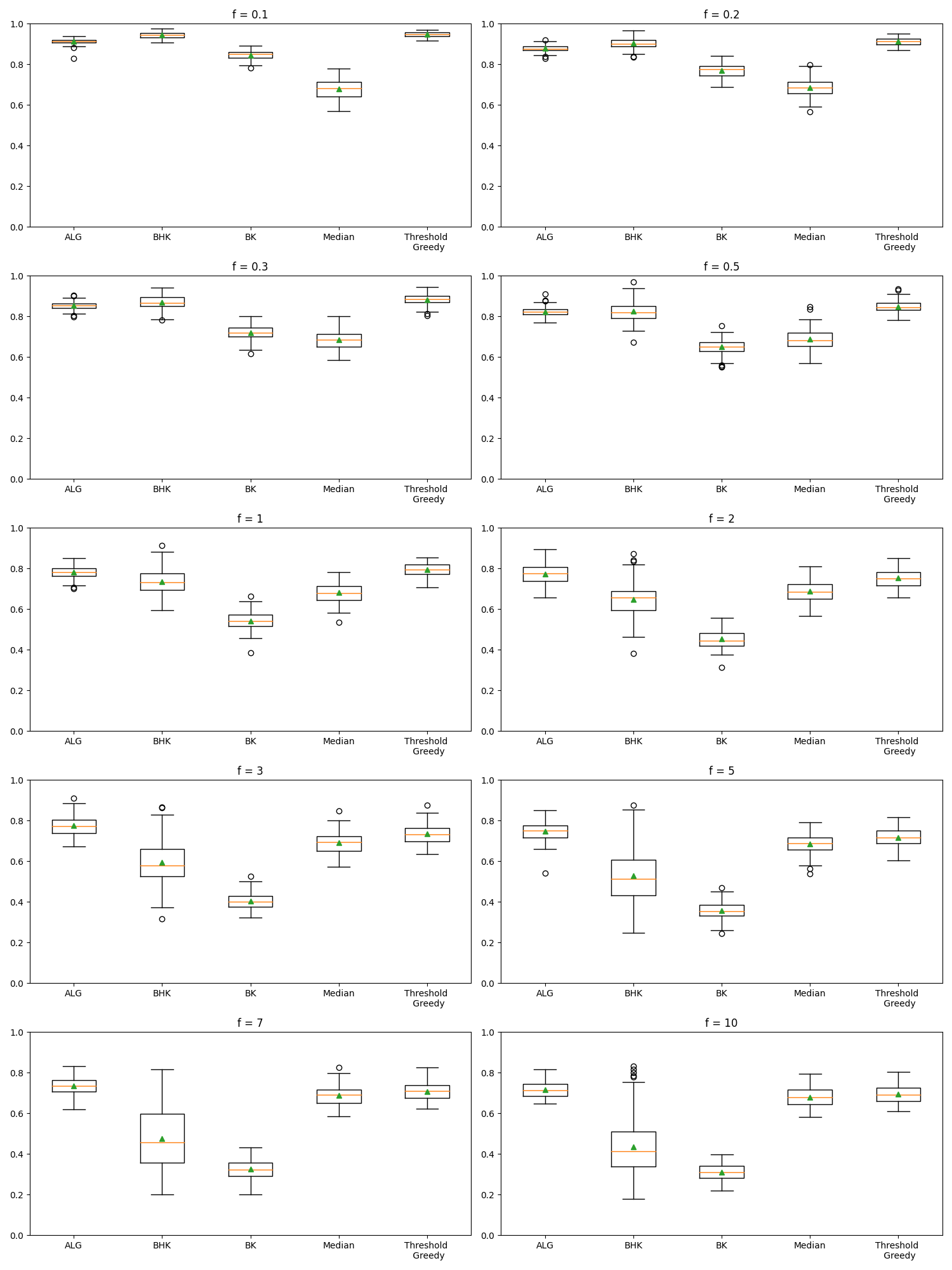}
    \caption{The results of the numerical experiments for different values of $f\in\{0.1,0.2,0.3,0.5,1,2,3,5,7,10\}$. The orange lines specify the median while the green triangles identify the mean performance ratio of the policy across 100 sampled instances.}
    \label{fig:numerical}
\end{figure}

\end{document}